\newtheorem{theorem}{Theorem}
\newtheorem{proposition}[theorem]{Proposition}
\newtheorem{corollary}[theorem]{Corollary}
\newtheorem{conjecture}[theorem]{Conjecture}
\newtheorem{lemma}[theorem]{Lemma}
\newcommand{\be}{\begin{equation}}
\newcommand{\ee}{\end{equation}}
\newcommand{\bea}{\begin{eqnarray}}
\newcommand{\eea}{\end{eqnarray}}
\newcommand{\ba}{\begin{array}}
	\newcommand{\ea}{\end{array}}
\newcommand{\bean}{\begin{eqnarray*}}
	\newcommand{\eean}{\end{eqnarray*}}
\newcommand{\pa}{\partial}
\begin{document}
\title{CKP hierarchy and free Bosons}
\author{Yi Yang, Lumin Geng, Jipeng Cheng$^*$}
\dedicatory { School of Mathematics, China University of
Mining and Technology, \\Xuzhou, Jiangsu 221116, P.\ R.\ China}
\thanks{*Corresponding author. Email: chengjp@cumt.edu.cn.}
\begin{abstract}
In this paper, free Bosons are used to study some integrable properties of the CKP hierarchy, from the aspects of tau functions.
Firstly, the modified CKP hierarchy is constructed by using free Bosons and the corresponding Lax structure is given. Then the constrained CKP hierarchy is found to be related with the modified CKP hierarchy, and the corresponding solutions are derived by using free Bosons. Next by using the relations between the Darboux transformations and the squared eigenfunction symmetries, we express the Darboux transformations of the CKP hierarchy in terms of free Bosons, by which one can better understand the essential of the CKP Darboux transformations. In particular, the additional symmetries of the CKP hierarchy can be viewed as the infinitesimal generator of the CKP Darboux transformations. Based upon these results, we finally obtain the actions of the CKP additional symmetries on the CKP tau functions constructed by free Bosons.

\textbf{Keywords}: CKP hierarchy; free Bosons; squared eigenfunction symmetry; additional symmetry; Darboux transformations; tau functions.\\
\textbf{PACS}: 02.30.Ik\\
\textbf{2010 MSC}: 35Q53, 37K10, 37K40
\end{abstract}
\maketitle
\section{Introduction}
The CKP hierarchy \cite{Jimbo,Date19814} is an important sub-hierarchy of the KP hierarchy,
corresponding to the infinite Lie algebra $c_\infty$ (see Section 2 for more details). In recent years, the CKP hierarchy has attracted many researches, such as \cite{Anguelova2017,Anguelova2018, Wu2013,Liu2017, Li2020,feng,fuwei,chang,Cheng2014,licx,tian,Krichever,Geng2019,van2012}. In the Lax formulation, the CKP hierarchy is viewed as the reduction of the KP hierarchy, by adding condition $L^*=-L$ to the Lax operator $L$ expressed by pseudo-differential operators\cite{Date19814}. While in terms of wave function $w(t,z)=(1+O(z^{-1}))e^{\xi(t,z)}$ with $t=(t_1=x,t_3,t_5,\cdots)$ and  $\xi(t,z)=\sum_{k\in{1+2\mathbb{Z}_{\geq 0}}}t_kz^k$, the CKP hierarchy can be determined by the following bilinear equation \cite{Date19814}
\begin{align*}
{\rm Res}_{z}w(t,z)w(t',-z)=0,
\end{align*}
where ${\rm Res}_z \sum_i a_i z^i=a_{-1}$.
In the aspect of tau functions, CKP hierarchy owns one tau function $\tau_{\rm KP}$ inherited from KP hierarchy, which is related with the wave function $w(t,z)$ in the way below,
$$w(t,z)=\frac{\tau_{\rm KP}(x-z^{-1},-z^{-2}/2,t_3-z^{-3}/3,-z^{-4}/4,\cdots)}
{\tau_{\rm KP}(\mathfrak{t})|_{t_2=t_4=\cdots=0}}e^{\xi(t,z)}.$$
Here $\mathfrak{t}=(t_1,t_2,t_3,\cdots)$.
In \cite{Krichever}, Krichever and Zabrodin proved that $\tau_{\rm KP}|_{t_2=t_4=\cdots=0}$ is a solution of the CKP hierarchy if and only if
$$\pa_{t_2}\log \tau_{\rm KP}|_{t_2=t_4=\cdots=0}=0.$$
But the time flow $(t_2,t_4,\cdots)$ parts in $\tau_{\rm KP}$ will bring much difficulty when discussing  the CKP hierarchy, it is hoped that CKP hierarchy should have one single tau function  of its own just like the BKP hierarchy \cite{DJKM}.

In the orignal paper by Date {\it et al}\cite{Date19814}, they suggested that the CKP tau function can be constructed by using free Bosons $\phi_j$ ($j\in 1/2+\mathbb{Z}$) satisfying $$\phi_i\phi_j-\phi_j\phi_i=(-1)^{j-\frac{1}{2}}\delta_{i,-j},$$
and twisted Heisenberg algebra generated by $H_n\ (n\in2\mathbb{Z}+1)$, where $$\sum_{n\in1+2\mathbb{Z}}H_nz^{-n-1}=-\frac{1}{2}:\phi(-z)\phi(z):$$
with $\phi(z)=\sum_{j\in \mathbb{Z}}\phi_{j+\frac{1}{2}}z^{j}$ is free Bosonic field and $:\phi(-z)\phi(z):=\phi(-z)\phi(z)-\langle 0|\phi(-z)\phi(z)|0\rangle$ with the vacuum $|0\rangle$ and $\langle 0|$ defined by
\begin{eqnarray*}
	\phi_{-j}|0\rangle=0,
	\ \langle 0|\phi_j=0,\ \text {if $j>0$}.
\end{eqnarray*}
In \cite{Date19814}, Date {\it et al} obtained the result
$$\tau_{\rm KP}(t_1,0,t_3,0,\cdots)=\langle 0| e^{H(t)}g|0 \rangle^{-2},$$
where $H(t)=
\sum_{0<i\in 1+2\mathbb{Z}}  t_i H_{i}$ and $g$ is the group element corresponding to $c_{\infty}$.
Based upon these, van de Leur {\it et al}\cite{van2012} developed Date {\it et al}'s thought and introduced $H_j$ ($j\in 1/2+\mathbb{Z}$) such that
\begin{align*}
\theta(z)=2\sum_{j\in 1/2+\mathbb{Z}}H_j z^{-j-1/2}=V_-(z)^{-1}\phi(z)V_+(z)^{-1},
\end{align*}
where
$V_{\pm}(z)=\exp\sum_{\pm k>0,\rm odd}\frac{2}{k}H_kz^{-k} $. These (anti)commutation relations of all $H$'s can be written into
 \begin{gather*}
[H_i, H_j]_s = \frac{j}{2}(-1)^{[j-\frac 12]}\delta_{i,-j} ,
 \end{gather*}
where the notation $[\ {,}\ ]_s$ means the supercommutator, while
$[i]$ denotes the largest integer less than a real number $i$. They constructed one tau function $$\tau(\mathbf{t})=\langle
0|e^{H(t)}e^{\chi(t_\mathrm{odd})}g|0\rangle,$$
where $\chi(t_\mathrm{odd})=
\sum_{0< i\in\frac{1}{2}+\mathbb{Z}}  t_i H_{i}$,  $\mathbf{t}=(t,t_{\rm odd})$ and Grassmannian variables $t_{\rm odd}=(t_{\frac{1}{2}},t_{\frac{3}{2}},\cdots)$, satisfying
$t_it_j-(-1)^{4ij}t_{j}t_i=0$ with $i,j\in 1+2\mathbb{Z}$ or $1/2+\mathbb{Z}$. The corresponding relation with wave function $w(t,z)$ can be found in \cite{van2012}(see Section 2 for more details).

Later by using its Hamiltonian desities, Chang and Wu in \cite{Wu2013} obtained one single tau function $\tau_{\rm CW}$ such that,
\begin{align}
	w(t,z)=(2z)^{-1/2}\sqrt{\pa_x\varphi(t,z)^2},\quad \varphi(t,z)=e^{\xi(t,z)}\frac{\tau_{\rm CW}(t-2[z^{-1}])}{\tau_{\rm CW}(t)},
	\label{wavetaucw}
\end{align}
where $[z^{-1}]=(z^{-1},z^{-3}/3,\cdots)$.
It is shown in \cite{Cheng2014} that $\tau_{\rm CW}$ is the square root of $\tau_{\rm KP}$, i.e.,
$$\tau_{\rm CW}(t)=\sqrt{\tau_{\rm KP}(t_1,0,t_3,0,t_5,0,\cdots)}.$$
However because of the square root relation in (\ref{wavetaucw}), the CKP hierarchy can not be expressed as the usual Hirota bilinear equation by using $\tau_{CW}$. After the works above, Anguelova \cite{Anguelova2017,Anguelova2018} introduced a second bosonization of the CKP hierarchy by using an untwisted Heisenberg algebra generated by \[\sum_{n\in\mathbb{Z}}\mathcal{H}_nz^{-n-2}=\frac{1}{4z}(:\phi(z)\phi(z):-:\phi(-z)\phi(-z):).\]
But there is no corresponding relations between wave functions and the tau functions in \cite{Anguelova2017,Anguelova2018}, since it seems difficult to determine that relation by noticing the corresponding Bosonization formulas. By comparing the results of the CKP tau functions, we believe the result in \cite{van2012} is convenient to discuss the integrable properties of CKP hierarchy, for example, Darboux transformation, the squared eigenfunction symmetry and the additional symmetry, since these properties are closely related with the group action corresponding to $c_{\infty}$, and the formulation in \cite{van2012} is relatively not complicated.

The first result of this paper is the construction of the modified CKP hierarchy.
In \cite{Yangc2021,Jimbo,Willox}, the KP tau function $\tau_{\rm KP}$ and $\alpha\cdot\tau_{\rm KP}$ with $\alpha$ be a Fermionic field can be used to define the modified KP hierarchy. When considering the Bosonic field, we found that $\tau=g|0\rangle$ and $\beta\tau$ with $g\in Sp_\infty$ (see Section 2) and free Bosonic fields $\beta\in V$, can also constitute one new integrable system, which can be viewed as a subhierarchy of the modified KP hierarchy. We call this new integrable system the modified CKP hierarchy. Further it is shown that the modified CKP hierarchy is equivalent to the spectral representation of the eigenfunction \cite{Cheng2014} (The definition can be found in Section 3). Based upon this point, we obtain the Lax structure of the modified CKP hierarchy, where the representation of square eigenfunction potential (SEP) by the free Boson is essential point. Here we would like to comment that the derivation of the Lax structure from the bilinear equations is not trivial. There are many important results on this topic recently, e.g., \cite{Cheng2021,Takasaki, Liusq}. In addition, we have also considered the relation between the modified CKP hierarchy and the CKP hierarchy from the aspect of the Lax operator, which is just the Miura links between the KP and modified KP hierarchies\cite{Shaw1997}. Further, the constrained CKP hierarchy \cite{Loris1999,Hewz2007} is showed to be connected with the modified CKP hierarchy, since we found that there exist the structures of modified CKP hierarchy in bilinear equations of the constrained CKP hierarchy\cite{Geng2019}. By using free Bosons, we give the solutions for the constrained CKP hierarchy.

Another important object discussed in this paper is the CKP Darboux transformation. As an effective way to construct solutions, Darboux transformation \cite{Matveev1991,Chau1992,Oevelpa1993,Oevelrmp1993} plays a key role in integrable systems. In the KP hierarchy\cite{Chau1992,Oevelpa1993,Oevelrmp1993}, there are two basic Darboux transformations: the differential type $T_D(q)=q\pa q^{-1}$ and the integral type $T_I(r)=r^{-1} \pa^{-1}r$. The binary Darboux transformation is defined by $T(q,r)=T_I(r^{(1)})T_D(q)$, where $A^{(1)}$ means the transformed object $A$ under $T_D(q)$ or $T_I(r)$. The KP Darboux transformations can be expressed by the actions of different Fermionic fields on the tau functions\cite{Yangc2021,Willox,Willox1998,Chau1992}. But in the CKP case, the constraint $L^*=-L$ can not be guaranteed by only using $T_D$ or $T_I$. It is found in \cite{Nimmo1995,He2006,Hewz2007,Loris1999} that the appropriate Darboux transformation for the CKP hierarchy is just $T(q,q)$. In order to see the changes of the CKP tau function, it will be more convenient to express the CKP Darboux transformation by free Bosons. Different from the Fermions, the square of the Bosons are not zero and thus the exponent of Bosonic fields can not be written into finite sums. It will be shown in this paper that the Bosonic fields on the tau functions are corresponding to squared eigenfunction symmetry\cite{Oevelpa1993,Oevel1998,Aratyn,Cheng2014,Loris2001}, which is a kind of symmetry generated by the eigenfunctions and adjoint eigenfunctions. In our research experience, it seems that binary Darboux transformation $T(q,r)$ can be viewed as the exponent of squared eigenfunction symmetry. To our best knowledge, we have not found any explicit references on this point, though it should be some existed result. Therefore we will give the corresponding proofs in this paper. Based upon these facts, we will show that CKP Darboux transformation is corresponding to the action of the exponents of Bosonic fields on the tau functions.

The last research object is the additional symmetries of the CKP hierarchy\cite{He2007,Wu2013,Liu2017,Zuo2014,tian}. In this paper, we will be interested in their actions on tau functions.
The additional symmetry \cite{ASvM,Dickey1995,OS} is a kind of symmetry explicitly depending on the time variables, whose generator is the squared eigenfunction symmetry constructed by using wave functions\cite{Dickey1995}. There are usually two formulations for the additional symmetry, that is, one is the form of Sato-B\"acklund transformation of tau functions \cite{DJKM}, while another is expressed by additional time flows on the Lax operator, dressing operator and wave functions in terms of Orlov-Schulman operator\cite{CLL,FF,OS}. These two different forms can be linked by Adler-Shiota-van Moerbeke-Dickey formula\cite{ASvM,Dickey1995}. As for the CKP hierarchy, the additional symmetries are given in terms of Lax forms by He {\it et al} in \cite{He2007} without considering the corresponding actions on the CKP tau functions. As for the results in \cite{Wu2013}, the actions of additional symmetries on the tau function $\tau_{\rm CW}$ need to compute the integral of a complicated expression, which is usually very hard to obtain explicit forms except the additional symmetries of lower orders. Here in this paper, by using free Bosons, we can obtain the corresponding actions on the tau function.

The paper is organized in the way below. Firstly, the construction of the CKP hierarchy is reviewed by using free Bosons in Section 2. Then in Section 3, the bilinear equation and Lax equation of the modified CKP hierarchy are obtained. Also in Section 3, we get the solutions of the constrained CKP hierarchy. Next in Section 4, we show the relations between Darboux transformations and squared eigenfunction symmetries for the KP hierarchy, then based upon this fact, the CKP Darboux transformation is expressed in terms of free Bosons. Section 5 is devoted to the discussion of the CKP additional symmetries on the CKP tau functions constructed by free Bosons. At last, we give some conclusions and discussions in Section 6.

\section{Construction of CKP hierarchy by free Bosons}
In this section, we will review the construction of CKP hierarchy by using free Bosons. Firstly, note that CKP hierarchy is corresponding to the infinite dimensional Lie algebra $c_\infty=\bar c_\infty\oplus\mathbb{C}c$ \cite{Jimbo,Date19814,van2012,Anguelova2018},
where
$$\bar c_\infty=\left\{(a_{i,j})_{i,j\in \mathbb{Z}}|a_{i,j}=(-1)^{i+j+1}a_{-j+1,-i+1},\ a_{i,j}=0,\ \text{for $|i-j|\gg 0$}\right\}.$$
The bracket operations are given by
$$[A+\lambda c,B+\mu c]=[A,B]+\psi(A,B)c,\ \lambda,\mu\in\mathbb{C},$$
where $\psi(A,B)$ is the 2-cocycle on $\bar c_\infty$ such that
$$\psi(E_{i,j},E_{j,i})=1=-\psi(E_{j,i},E_{i,j}),\quad i\leq 0,\ j>0$$
and
$$\psi(E_{i,j},E_{k,l})=0,\quad\text{otherwise}.$$
Note that the generator of $c_\infty$ is
$$Z_{i,j}=E_{i,j}-(-1)^{i+j}E_{-j+1,-i+1}.$$
Therefore the representation $\hat\pi$ of $c_\infty$ on the Fock space $$\mathcal{F}=\text{span}\left\{(\phi_{j_1})^{m_1}(\phi_{j_2})^{m_2}\cdots
(\phi_{j_{{n-1}}})^{m_{n-1}}(\phi_{j_{n}})^{m_n}\ |\ 0\rangle|j_1>j_2>\cdots>j_n>0,\quad m_i \in \mathbb{Z}_{\geq 0}\right\}$$
 can be defined by
$$\hat\pi(Z_{i,j})=(-1)^{j-1}:\phi_{i-\frac{1}{2}}\phi_{\frac{1}{2}-j}:,\quad \hat\pi(c)=-\frac{1}{2},$$
so that
$$\hat\pi(Z_{i,j})(|u\rangle)=(-1)^j:\phi_{i-\frac{1}{2}}
\phi_{\frac{1}{2}-j}:|u\rangle,$$
where $|u\rangle\in \mathcal{F}$.

If denote
$$Sp_\infty=\{e^{\hat \pi(a_1)}e^{\hat \pi(a_2)}\cdots e^{\hat \pi(a_k)}|a_i\in c_{\infty}\},$$
then we have the lemma below
\begin{lemma}\label{gphiginverse}
Assume $g\in Sp_{\infty}$, then there exists matrix $a=(a_{mn})_{m,n\in\mathbb{Z}}$ such that
\begin{align*}
g\phi_{n-\frac{1}{2}}g^{-1}=\sum_{m\in\mathbb{Z}}a_{mn}\phi_{m-\frac{1}{2}},\quad n\in\mathbb{Z}
\end{align*}
where matrix $a$ satisfies
\begin{align*}
\sum_{l}(-1)^{l}a_{m,l}a_{-n+1,-l+1}=(-1)^n\delta_{m,n}.
\end{align*}
\end{lemma}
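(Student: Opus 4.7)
The plan is to first show that conjugation by $g$ preserves the linear span $V=\mathrm{span}_{\mathbb{C}}\{\phi_{k-1/2}:k\in\mathbb{Z}\}$, so that the matrix $a$ is well defined. Since every $g\in Sp_\infty$ is a finite product of exponentials $e^{\hat\pi(X)}$ with $X\in c_\infty$, it suffices to check that each such exponential stabilises $V$ under conjugation; by linearity in $X$ this reduces further to the generators $X=Z_{i,j}$. Using $\hat\pi(Z_{i,j})=(-1)^{j-1}:\phi_{i-1/2}\phi_{1/2-j}:$ together with the Leibniz rule and $[\phi_a,\phi_b]=(-1)^{b-1/2}\delta_{a,-b}$, one obtains
\begin{align*}
[\hat\pi(Z_{i,j}),\phi_{n-1/2}]=(-1)^{j+n}\bigl(\delta_{j,n}\phi_{i-1/2}+\delta_{i,-n+1}\phi_{1/2-j}\bigr)\in V.
\end{align*}
Iterating this formula and then exponentiating formally yields $g\phi_{n-1/2}g^{-1}\in V$, hence the existence of the scalars $a_{mn}$.

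For the matrix identity, I would conjugate the scalar equality $[\phi_{m-1/2},\phi_{n-1/2}]=(-1)^{n-1}\delta_{m,-n+1}$ by $g$. The left-hand side is a central element and therefore invariant, whereas the right-hand side expands via $a$ as
\begin{align*}
\sum_{p,q}a_{pm}a_{qn}[\phi_{p-1/2},\phi_{q-1/2}]=\sum_{l}(-1)^{l-1}a_{-l+1,m}a_{l,n}.
\end{align*}
Equating the two gives precisely $a^{\top}Ba=B$ for the bilinear form $B_{pq}=(-1)^{q-1}\delta_{p,-q+1}$, which is exactly the statement that $a$ preserves the symplectic structure underlying $c_\infty$.

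Finally, a direct check shows that $B$ satisfies $B^2=-I$, so $B^{-1}=-B$. Inverting $a^{\top}Ba=B$ then yields $a^{-1}=-Ba^{\top}B$, and a brief computation gives $(a^{-1})_{ln}=(-1)^{l+n}a_{-n+1,-l+1}$. Substituting this into the tautology $\sum_{l}a_{m,l}(a^{-1})_{l,n}=\delta_{m,n}$ and pulling the sign $(-1)^n$ across produces the claimed identity. The main bookkeeping obstacle is keeping track of the shift between the half-integer indices of $\phi$ and the integer matrix labels, together with the various signs $(-1)^{b-1/2}$ and $(-1)^{n-1}$; once the form $B$ has been correctly identified, the rest is routine linear algebra.
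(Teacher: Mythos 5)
Your proof is correct, and it diverges from the paper's in a way worth noting. For the existence of the matrix $a$, both arguments rest on the adjoint action $e^{A}Be^{-A}=e^{\mathrm{ad}A}(B)$: the paper writes $g$ (without loss of generality) as a single exponential $\exp\bigl(\sum_{i,j}b_{ij}:\phi_{i-1/2}\phi_{j-1/2}:\bigr)$ and computes the resulting matrix explicitly as $a=e^{\hat b}$ with $\hat b_{mn}=(-1)^{n-1}(b_{m,-n+1}+b_{-n+1,m})$, whereas you reduce to the generators $Z_{i,j}$ and compute the single bracket --- the same mechanism, organized differently. The real divergence is in the quadratic relation on $a$. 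The paper observes that $\hat b\in c_\infty$ and then invokes the general correspondence between the Lie algebra and its group (citing Ueno--Takasaki) to conclude that $a$ satisfies the group-defining relation. You instead conjugate the canonical commutation relation $[\phi_{m-1/2},\phi_{n-1/2}]=(-1)^{n-1}\delta_{m,-n+1}$ by $g$, use centrality of the right-hand side to obtain $a^{\top}Ba=B$ with $B_{pq}=(-1)^{q-1}\delta_{p,-q+1}$, and then invert using $B^2=-I$; I checked the sign bookkeeping, including $(a^{-1})_{ln}=(-1)^{l+n}a_{-n+1,-l+1}$, and it comes out exactly to the stated identity. Your route is more self-contained and makes transparent that the relation is precisely invariance of the antisymmetric form underlying $c_\infty$, while the paper's is shorter but leans on an external structural fact. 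The one step you should make explicit is that passing from $a^{\top}Ba=B$ to the final identity uses two-sided invertibility of $a$ (supplied by conjugating with $g^{-1}$, which produces the inverse matrix); for these infinite matrices this is harmless because the entries vanish far from the diagonal.
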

\begin{proof}
Without loss of generality, we can assume $g=\exp\Big(\sum_{i,j\in\mathbb{Z}}b_{ij}:\phi_{i-\frac{1}{2}}\phi_{j-\frac{1}{2}}:\Big)$.
By applying $e^ABe^{-A}=e^{adA}(B)$, one can obtain that
\begin{align*}
g\phi_{n-\frac{1}{2}}g^{-1}=\sum_{m}a_{mn}\phi_{m-\frac{1}{2}},
\end{align*}
where $a=e^{\hat b}$ and $\hat b_{mn}=(-1)^{n-1}(b_{m,-n+1}+b_{-n+1,m})$.
Note that $\hat b\in c_{\infty}$, therefore matrix $a$ belongs to the corresponding Lie group \cite{uenotakasaki}, that is the relation $\sum_{l}(-1)^{l}a_{m,l}a_{-n+1,-l+1}=(-1)^n\delta_{m,n}$.
\end{proof}
If introduce the operator $S$ on $\mathcal{F}\otimes\mathcal{F}$,
\begin{align*}
S=\sum_{k\in\frac{1}{2}+\mathbb{Z}}(-1)^{k+\frac{1}{2}}\phi_{k}\otimes\phi_{-k}
={\rm Res}_{z}\phi(z)\otimes\phi(-z).
\end{align*}
Notice that
$S(|0\rangle\otimes|0\rangle)=0.$
So if set $\tau=g|0\rangle$ with $g\in Sp_{\infty}$, then by the fact that
$[S,g\otimes g]=0$ (see \cite{Anguelova2018}),
\begin{align}
S(\tau\otimes\tau)=0,\label{ckpbilinear}
\end{align}
that is
$${\rm Res}_{z}\phi(z)\tau\otimes\phi(-z)\tau=0,$$
which is just the bilinear equation of the CKP hierarchy of free Bosonic version.

Next we will try to rewrite it into the usual forms by using the Boson-(Boson+Fermion) correspondence which is given in the proposition \cite{van2012} below.
\begin{proposition}\label{fockiso}
The correspondence
\begin{align*}
\sigma:\ \mathcal{F}\rightarrow\mathcal{B}=\mathbb{C}[t_1,t_{\frac{1}{2}}, t_3,t_{\frac{3}{2}},\cdots],\quad |u\rangle\mapsto\sigma(|u\rangle)=\langle 0|e^{H({\bf t})}|u\rangle,
\end{align*}
is an isomorphism. Further,
\begin{align*}
\sigma H_{n}\sigma^{-1}=\frac{\partial}{\partial t_{n}},\quad
\sigma H_{-n}\sigma^{-1}=(-1)^{[\frac{1}{2}-n]}\frac{n}{2}t_n,\quad n>0
\end{align*}
and
\begin{align*}
\sigma\phi(z)\sigma^{-1} =
\exp\left( \sum_{k\in 1+2\mathbb{Z}_{\geq 0}}t_kz^k \right)
\exp\left( \sum_{k\in 1+2\mathbb{Z}_{\geq 0}}
\frac{2}{k}\frac{\partial}{\partial t_k}z^{-k} \right)\nonumber\\
\hphantom{\sigma\phi(z)\sigma^{-1} =}{}\times
\sum_{0<j\in\mathbb{Z}}\left( (j-\frac{1}{2})t_{\frac{2j-1}2}(-z)^{j-1}+2\frac{\partial}{\partial
t_{\frac{2j-1}2}}z^{-j} \right).
\end{align*}
\end{proposition}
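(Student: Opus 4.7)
The plan is to first establish the conjugation formulas for $H_n$ using the (super)commutation relations $[H_i,H_j]_s = (j/2)(-1)^{[j-1/2]}\delta_{i,-j}$ together with the vacuum annihilation conditions $H_n|0\rangle = 0$ and $\langle 0|H_{-n} = 0$ for $n > 0$, then deduce the $\phi(z)$ identity, and finally obtain the isomorphism from Heisenberg-algebra cyclicity. Since $H(\mathbf{t}) = \sum_{k>0} t_k H_k$ involves only positive indices, and positive $H$'s supercommute pairwise, $H_n$ commutes with $e^{H(\mathbf{t})}$ for $n>0$, giving
\begin{align*}
\sigma(H_n|u\rangle) = \langle 0|e^{H(\mathbf{t})}H_n|u\rangle = \langle 0|H_n e^{H(\mathbf{t})}|u\rangle = \langle 0|\tfrac{\partial}{\partial t_n}e^{H(\mathbf{t})}|u\rangle = \tfrac{\partial}{\partial t_n}\sigma(|u\rangle).
\end{align*}
For $n<0$ one uses the identity $e^A B = B e^A + [A,B] e^A$, valid when $[A,B]$ is a (possibly Grassmann) $c$-number, with $A = H(\mathbf{t})$ and $B = H_{-n}$; then $[H(\mathbf{t}), H_{-n}] = -t_n[H_{-n},H_n]_s = (-1)^{[1/2-n]}(n/2)t_n$, while $\langle 0|H_{-n}e^{H(\mathbf{t})}|u\rangle = 0$ by the dual-vacuum condition, which is precisely the claimed second identity.

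For the $\phi(z)$ formula, I would use the factorization $\phi(z) = V_-(z)\,\theta(z)\,V_+(z)$, immediate from the definition $\theta(z) = V_-(z)^{-1}\phi(z)V_+(z)^{-1}$ recorded in the introduction. The two vertex-operator factors $V_\pm(z)$ are exponentials in positive/negative $H_k$ of odd integer index, so conjugating by $\sigma$ turns them into the prefactor $\exp(\sum_{k\in 1+2\mathbb{Z}_{\geq 0}} t_k z^k)$ and the derivative exponential $\exp(\sum_{k\in 1+2\mathbb{Z}_{\geq 0}}(2/k)\partial_{t_k}z^{-k})$ using the $H$-conjugation formulas above. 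For $\theta(z) = 2\sum_{j\in 1/2+\mathbb{Z}} H_j z^{-j-1/2}$, splitting by the sign of $j$ and reindexing $j = k-1/2$ (respectively $j = 1/2-k$) for $k\in\mathbb{Z}_{\geq 1}$ reproduces the bracketed sum in the statement; the sign $(-1)^{[1/2-m]}$ coming out of the conjugation of $H_{-(k-1/2)}$ recombines with the $z^{k-1}$ prefactor into the $(-z)^{j-1}$ appearing in the proposition.

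Finally, for the isomorphism itself: $\mathcal F$ is the Fock module of the twisted Heisenberg (super)algebra generated by the $H_n$, cyclically generated from $|0\rangle$ by the creation operators $H_{-n}$ ($n>0$), and admits a Poincar\'e--Birkhoff--Witt basis of ordered monomials in these applied to $|0\rangle$. Under $\sigma$ each $H_{-n}$ becomes multiplication by a nonzero scalar times $t_n$, and $\sigma(|0\rangle) = 1$, so $\sigma$ carries this basis to the standard monomial basis of $\mathcal B$, yielding both surjectivity and injectivity at once. The main bookkeeping obstacle throughout is the sign and ordering tracking in the half-integer/Grassmann case, both in the $H_{-n}$ commutator identity and in matching the $\theta(z)$ expansion with the displayed formula; the rest is bookkeeping on top of the two commutator computations above.
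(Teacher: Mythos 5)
First, a point of reference: the paper offers no proof of this statement at all --- Proposition \ref{fockiso} is quoted from van de Leur--Orlov--Shiota \cite{van2012} --- so there is no in-paper argument to compare yours against. Your strategy (conjugation formulas for $H_{\pm n}$ from the supercommutation relations and the vacuum conditions, the factorization $\phi(z)=V_-(z)\theta(z)V_+(z)$ for the vertex-operator formula, cyclicity for bijectivity) is the standard route, and the two conjugation computations are essentially sound. Two small repairs are needed, though. (i) For half-integer $n$ the step $[H(\mathbf t),H_{-n}]=-t_n[H_{-n},H_n]_s$ has the wrong sign: since $H_{-n}$ anticommutes with the Grassmann parameter $t_n$, one gets $+t_n[H_n,H_{-n}]_s$, and it is with this correction that the stated value $(-1)^{[\frac12-n]}\frac n2 t_n$ actually comes out; as written, your chain of equalities is internally inconsistent in the odd sector. (ii) The dual-vacuum condition $\langle 0|H_{-n}=0$ for half-integer $n>0$ is not one of the defining axioms (only $\langle 0|\phi_j=0$ for $j>0$ is); it needs the short argument $\langle 0|\theta(z)=\langle 0|\phi(z)V_+(z)^{-1}=\mathcal{O}(z^{-1})$.

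The genuine gap is in the final step. You assert that $\mathcal F$ ``admits a Poincar\'e--Birkhoff--Witt basis of ordered monomials in the $H_{-n}$ applied to $|0\rangle$'' and read bijectivity off from that. But $\mathcal F$ is defined as the span of monomials in the bosons $\phi_j$, $j>0$, applied to $|0\rangle$; the claim that the $H$-monomials span this same space is precisely the nontrivial content of the proposition (equivalently, that $\mathcal F$ is irreducible under the algebra generated by the $H_j$), so invoking it as a known PBW fact is circular. Linear independence of the $H$-monomial vectors is automatic from the commutation relations, and their $\sigma$-images are exactly the monomials spanning $\mathcal B$; but both injectivity and surjectivity of $\sigma$ on all of $\mathcal F$ hinge on the spanning claim, which requires a separate argument. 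The standard one is a graded-dimension count: with $\deg\phi_j=j$ and $\deg H_{-j}=j$ for $j>0$, the character of $\mathcal F$ is $\prod_{j\in\frac12+\mathbb Z_{>0}}(1-q^{j})^{-1}$, the $H$-cyclic submodule has character at most $\prod_{k\in1+2\mathbb Z_{\geq0}}(1-q^{k})^{-1}\prod_{j\in\frac12+\mathbb Z_{>0}}(1+q^{j})$ (the half-integer $H$'s square to zero), and the Euler-type identity $1+q^{j}=(1-q^{2j})/(1-q^{j})$ shows the two characters coincide, forcing equality of the spaces. This is exactly the ``character identity'' aspect emphasized in \cite{Anguelova2018}; without it, or an equivalent singular-vector argument, the isomorphism claim is unproved.
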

If denote $\tau({\bf t})=\langle
0|e^{H(t)}e^{\chi(t_\mathrm{odd})}g|0\rangle=\sigma(g|0\rangle )$, then there is the following expansion
$$\tau({\bf t})=\sum_{\alpha \in \rm ODP_{\rm ev}}
\tau_\alpha(t)\xi_\alpha,$$
where $\rm ODP_{\rm ev}$ is the set of all partitions in an even
number of odd parts, i.e., $\alpha=(\alpha_1,\alpha_2,\cdots,\alpha_{2k})\in \rm ODP_{\rm ev}$ satisfies $\alpha_i\in 1+2\mathbb{Z}_{\geq 0}$ and $\alpha_1>\alpha_2>\cdots>\alpha_{2k}$,
$\xi_\alpha=t_{\frac{\alpha_1}{2}}t_{\frac{\alpha_2}{2}}\cdots t_{\frac{\alpha_{2k}}{2}}$ and
\begin{eqnarray}
\tau_\alpha(t)=\frac{\partial}{\partial
t_{\frac{\alpha_n}2}} \frac{\partial}{\partial t_{\frac{\alpha_{n-1}}2}}
\cdots \frac{\partial}{\partial t_{\frac{\alpha_1}2}}\tau({\bf t})\biggr|_{t_\mathrm{odd}=0}=\langle \alpha|e^{H(t)} g|0\rangle.\label{taualpha}
\end{eqnarray}
Here we have set $\langle \alpha|=\langle
0|H_{\frac{\alpha_n}{2}}H_{\frac{\alpha_{n-1}}{2}}
\cdots
H_{\frac{\alpha_{1}}{2}}$ (e.g. $\langle 1|=\langle 0|H_{\frac{1}{2}}$). In order to rewrite (\ref{ckpbilinear}) into explicit forms, we also need
\begin{gather}
\label{sphi} \sigma(\phi(z)g|0\rangle)=\langle
0|e^{H(t)}e^{\chi(t_\mathrm{odd})}\phi(z)g|0\rangle=\sum_{\alpha \in {\rm
ODP}_\mathrm{odd}}g_\alpha(t,z)\xi_\alpha,
\end{gather}
where $\rm ODP_{\rm odd}$ is the odd partitions of odd length with distinct parts and
\begin{gather}
g_\alpha(t,z) = \frac{\partial}{\partial t_{\frac{\alpha_n}2}}
\frac{\partial}{\partial t_{\frac{\alpha_{n-1}}2}} \cdots
\frac{\partial}{\partial t_{\frac{\alpha_1}2}}
\langle 0|e^{H(t)}e^{\chi(t_\mathrm{odd})} \phi(z)g|0\rangle\biggr|_{t_\mathrm{odd}=0}\nonumber\\
\phantom{g_\alpha(t,z)}{}  = \langle
0|H_{\frac{\alpha_n}{2}}H_{\frac{\alpha_{n-1}}{2}}
\cdots
H_{\frac{\alpha_{1}}{2}}e^{H(t)} \phi(z)g|0\rangle  =\langle \alpha|e^{H(t)} \phi(z)g|0\rangle.\label{ga}
\end{gather}

After the preparation above, now we can
apply
$$\langle \alpha|e^{H(t)}\otimes\langle\beta|e^{H(t')}$$
on the bilinear equation
${\rm Res}_{z}\phi(z)\tau\otimes\phi(-z)\tau=0$,
then
$${\rm Res}_{z}g_\alpha(t,z)g_\beta(t',-z)=0.$$
Note that
\begin{align*}
&g_\alpha(t,z)=e^{\xi(t,z)}\exp\left(\sum_{j\in 1+2\mathbb{Z}_{\geq 0}}\frac{2}{j}
\frac{\pa}{\pa t_j}z^{-j}\right)\\
&\times\left(\sum_{i=1}^k(-1)^{i-1}\frac{\alpha_i}{2}
\tau_{\alpha\setminus\alpha_i}(t)(-z)^{\frac{\alpha_1-1}{2}}+2\sum_{\nu\in1+2\mathbb{Z}_{\geq 0},\nu\notin\alpha}\tau_{\alpha\cup\nu}(t)z^{-\frac{\nu+1}{2}}\right),
\end{align*}
where for $\alpha=(\alpha_1,\alpha_2,\cdots,\alpha_k)$ with $\alpha_1>\alpha_2>\cdots>\alpha_k$, $\nu\notin\alpha$ and $\alpha_i>\nu>\alpha_{i+1}$,
\begin{align*}
\alpha\cup\nu=(\alpha_1,\alpha_2,\cdot,\alpha_i,\nu,\alpha_{i+1},\cdots,\alpha_k),
\quad\alpha\setminus\alpha_i=(\alpha_1,\alpha_2,\cdot,\alpha_{i-1},\alpha_{i+1},\cdots,\alpha_k).
\end{align*}
In particular for $\alpha=(1)=1$,
\begin{eqnarray}
g_{1}(t,z)= e^{\xi(t,z)}
\exp\left(\sum_{j\in 1+2\mathbb{Z}_{\geq 0}}
\frac{2}{j}\frac{\partial}{\partial
t_j}z^{-j}\right)\left(\frac{1}{2}\tau_0(t) +2 \sum_{\nu\in 1+
2\mathbb{Z},\, \nu>1} \tau_{(\nu,1)}(t)z^{-\frac{\nu +1}{2}}\right)\label{glam},
\end{eqnarray}
where $\tau_{(\nu,1)}(t)$ means that we take $\alpha=(\nu,1)$ in $\tau_{\alpha}(t)$.

If further set the wave function corresponding to $\alpha\in {\rm ODP}_{\rm odd}$ as follows,
\begin{align}
w_\alpha(t,z)=\hat w_\alpha(t,z)z^{\frac{\alpha_1-1}{2}}e^{\xi(t,z)},\label{walpha}
\end{align}
with
\begin{align}
\hat w_\alpha(t,z)=(-z)^{-\frac{\alpha_1-1}{2}}\frac{2g_\alpha(t,z)e^{-\xi(t,z)}}
{\alpha_1\tau_{\alpha\setminus\alpha_1}(t)}=1+\mathcal{O}(z^{-1}),\label{hatwalpha}
\end{align}
then
\begin{align*}
{\rm Res}_{z}w_\alpha(t,z)w_\beta(t',-z)={\rm Res}_{z}\hat w_\alpha(t,z)\hat w_\beta(t',-z)e^{\xi(t-t',z)}z^{\frac{\alpha_1+\beta_1-2}{2}}=0.
\end{align*}
Let us take $\alpha=\beta$ and $\alpha_1=\beta_1=r$, $r$ odd. Then
\begin{align}
{\rm Res}_{z}\hat w_\alpha(t,z)\hat w_\alpha(t',-z)e^{\xi(t-t',z)}z^{r-1}=0,
\end{align}
which is just the bilinear equation of BC$_r$ hierarchy \cite{Date19814,Zuo2014,Geng2019}.

Next we will concentrate on the case of $\alpha=(1)=1$. In this case
\begin{small}
\begin{eqnarray*}
w(t,z) ={\hat
w}(t,z)e^{\xi(t,z)}=\frac{2\langle 1|e^{H(t)} \phi(z)g|0\rangle}{\langle
0|e^{H(t)}g|0\rangle},
\end{eqnarray*}
\end{small}
where ${\hat w}(t,z)=\frac{2g_{1}(t,z)e^{-\xi(t,z)} }
{\tau_{0}(t)}=1+w_1z^{-1}+w_2z^{-2}+w_3z^{-3}+\cdots$, it can be found that
\begin{eqnarray*}
w_1&=&\frac{p_1(\tilde{\partial})\tau_0(t)}{\tau_0(t)},\\
w_2&=&\frac{p_2(\tilde{\partial})\tau_0(t)}{\tau_0(t)}+4\frac{\tau_{(3,1)}(t)}{\tau_0(t)},\\
w_3&=&\frac{p_3(\tilde{\partial})\tau_0(t)}{\tau_0(t)}+4\frac{p_1(\tilde{\partial})\tau_{(3,1)}(t)}{\tau_0(t)}+4\frac{\tau_{(5,1)}(t)}{\tau_0(t)},\\
&\cdots&\nonumber\\
w_n&=&\frac{p_n(\tilde{\partial})\tau_0(t)}{\tau_0(t)}+4\sum_{l=2}^{n-1}\frac{p_{n-l}(\tilde{\partial})\tau_{(2l-1,1)}(t)}{\tau_0(t)}+4\frac{\tau_{(2n-1,1)}(t)}{\tau_0(t)},
\end{eqnarray*}
where $p_n(\tilde{\partial})$ is the Schur polynomials and $\tilde{\partial}=(2\pa_{t_1},\frac{2\pa_{t_3}}{3},\cdots)$.

\noindent{\bf Remark:}
Note that $w_1=-\pa_x\log\tau_{\rm KP}$, thus one can obtain
$\tau_{\rm KP}=\tau_0^{-2}$ up to the multiplication of some constant, which is just the result in original paper by Date {\it et al} \cite{Date19814}. While for the tau functions $\tau_{\rm CW}(t)$ introduced in \cite{Wu2013}, we can get the following relationship by comparing different expressions of $w_i$,
\begin{eqnarray*}
\tau_0(t)=\tau_{\rm CW}(t)^{-1},\quad \tau_{(3,1)}(t)=\frac{3}{4}\left(\frac{(\tau_{\rm CW}(t))_{xx}}{\tau^2_{\rm CW}(t)}-\frac{(\tau_{\rm CW}(t))^2_{x}}{\tau^3_{\rm CW}(t)}\right).
\end{eqnarray*}
Now the bilinear equation corresponding to $\alpha=(1)$ will be
\begin{eqnarray}
&&{\rm Res}_z w(t,z)w(t',-z)=0.\label{waveckpbilinear}
\end{eqnarray}

If denote
$$W=1+w_1\pa^{-1}+w_2\pa^{-2}+\cdots,\quad \text{with $\pa=\pa_x$},$$
then by the lemma below
\begin{lemma}\cite{Geng2019,DJKM} \label{newAB}
If let $A(x,\partial_{x})=\sum_{i}a_i(x)\partial_{x}^{i}$ and $B(x',\partial_{x'})=\sum_{j}b_j(x')\partial_{x'}^{j}$ be two pseduo-differential operators, then
\begin{equation*}
{\rm Res}_{z}A(x,\partial_{x})(e^{xz})\cdot B(x',\partial_{x'})(e^{-x'z})=A(x,\partial_{x})B^{*}(x,\partial_{x})
\partial_{x}(\Delta^{0}),
\end{equation*}
where $B^{*}(x,\partial_{x})=\sum_{j}(-\partial_{x})^{j}b_j(x)$, $\Delta^{0}=(x-x')^{0}$ and
\begin{eqnarray*}
\partial_{x}^{-a}(\Delta^{0})=\left\{
  \begin{array}{ll}
   0, \quad a<0, \\
    \\
    \frac{(x-x')^{a}}{a!},\quad a\geq0.
  \end{array}
\right.
\end{eqnarray*}
\end{lemma}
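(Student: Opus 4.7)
The plan is to exploit bilinearity of both sides in $A$ and $B$ to reduce to the case of monomial pseudo-differential operators $A = a(x)\partial_{x}^{m}$ and $B = b(x')\partial_{x'}^{n}$ for arbitrary integers $m,n$ and arbitrary coefficient functions $a,b$; the general statement then follows by summation over $i$ and $j$.

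For the left-hand side, using $\partial_{x}^{m}e^{xz} = z^{m}e^{xz}$ and $\partial_{x'}^{n}e^{-x'z} = (-z)^{n}e^{-x'z}$ in the sense of formal Laurent series, the two exponentials combine to give
\[
{\rm Res}_{z}A(e^{xz})\cdot B(e^{-x'z}) = (-1)^{n}a(x)b(x')\,{\rm Res}_{z}z^{m+n}e^{(x-x')z}.
\]
The key identity
\[
{\rm Res}_{z}z^{k}e^{(x-x')z} = \partial_{x}^{k+1}\Delta^{0},\qquad k\in\mathbb{Z},
\]
I would verify by expanding $e^{(x-x')z} = \sum_{l\geq 0}(x-x')^{l}z^{l}/l!$: the $z^{-1}$ coefficient equals $(x-x')^{-k-1}/(-k-1)!$ when $k\leq -1$ and vanishes otherwise, which exactly matches the prescribed meaning of $\partial_{x}^{-(-k-1)}\Delta^{0}$ on the right.

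For the right-hand side, I would interpret $A(x,\partial_{x})B^{*}(x,\partial_{x})\partial_{x}(\Delta^{0})$ by first composing the operators and then evaluating on $\Delta^{0}$ via the same convention. Since $(-\partial_{x})^{n} = (-1)^{n}\partial_{x}^{n}$, the composition becomes $(-1)^{n}a(x)\,\partial_{x}^{m+n}\circ b(x)\circ\partial_{x}$. Applying the generalized Leibniz rule $\partial_{x}^{p}\circ b(x) = \sum_{k\geq 0}\binom{p}{k}b^{(k)}(x)\partial_{x}^{p-k}$ (valid for every integer $p$, with $\binom{p}{k}$ read as a polynomial in $p$), the right-hand side becomes
\[
(-1)^{n}a(x)\sum_{k\geq 0}\binom{m+n}{k}b^{(k)}(x)\,\partial_{x}^{m+n+1-k}\Delta^{0}.
\]

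Matching to the left-hand side therefore reduces to the single identity
\[
b(x')\,\partial_{x}^{m+n+1}\Delta^{0} = \sum_{k\geq 0}\binom{m+n}{k}b^{(k)}(x)\,\partial_{x}^{m+n+1-k}\Delta^{0},
\]
which I would establish by substituting the explicit formula for $\partial_{x}^{-a}\Delta^{0}$ and Taylor-expanding $b(x')$ around $x$. The main obstacle will be the bookkeeping in this pseudo-differential/Taylor step: one must check that the vanishing regions (the case $m+n \geq 0$ where both sides are zero, and the cutoff in $k$ when the power of $\partial_{x}$ becomes positive) align correctly, and that the remaining combinatorial identity $\binom{-s}{j}(s-1)!/(s+j-1)! = (-1)^{j}/j!$ (with $s = -(m+n)$) converts the Leibniz expansion precisely into the Taylor series $b(x')=\sum_{j\geq 0}(-1)^{j}b^{(j)}(x)(x-x')^{j}/j!$.
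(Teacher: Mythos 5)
Your proof is correct. Note that the paper itself gives no proof of this lemma --- it is quoted from \cite{Geng2019,DJKM} --- so there is nothing internal to compare against; your argument is essentially the standard one from those references: reduce by bilinearity to monomials $a(x)\partial_x^m$, $b(x')\partial_{x'}^n$, use ${\rm Res}_z\,z^k e^{(x-x')z}=\partial_x^{k+1}\Delta^0$, and match the two sides via the generalized Leibniz rule together with the Taylor expansion of $b(x')$ about $x$. I checked the bookkeeping you flagged: for $m+n\ge 0$ both sides vanish (the binomial coefficients $\binom{m+n}{k}$ cut off exactly where the positive powers of $\partial_x$ kill $\Delta^0$), and for $m+n=-s<0$ the identity $\binom{-s}{k}\,(s-1)!/(s+k-1)!=(-1)^k/k!$ does convert the Leibniz sum into the Taylor series of $b(x')$, so the final identity holds as claimed.
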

\noindent One can obtain
\begin{align}
W^*=W^{-1},\quad
W_ {t_n} =-\left(W\partial^nW^{-1}\right)_- W,\quad\text{$n$ odd}.\label{dressingevolution}
\end{align}

Further define the Lax operator $L$ in the way below
$$L=W\pa W^{-1}=\pa+u_1\pa^{-1}+u_2\pa^{-2}+\cdots,$$
then it can be found that
\begin{align}
L^*=-L,\quad L_{t_n}=[(L^n)_+,L],\label{ckplaxeq}
\end{align}
which is just the CKP hierarchy\cite{Date19814}. \\
{\noindent \bf Remark}: In general case of $\hat w_\alpha(t,z)=W_\alpha(e^{\xi(t,z)})$ with $\alpha_1=r$ odd, one can only obtain the BC$_r$ constraints \cite{Date19814,Zuo2014,Geng2019}
$$(W_\alpha\pa^{r-1}W_\alpha^*)_-=0,$$
which means that $P=W_\alpha\pa^{r-1}W_\alpha^*$ is a differential operator. However the evolution equation of $W_\alpha$ is hard to determine. Actually by Lemma \ref{newAB}, one can find
$$\Big((W_{\alpha,t_n}W_{\alpha}^{-1}+W_{\alpha}\pa^nW_{\alpha}^{-1})P\Big)_{<0}=0.$$
Notice that if the order of $P$ is not zero, it is usually very difficult to determine $W_{\alpha,t_n}$, since the existence of $P$ will bring many possibilities of $W_{\alpha,t_n}$. In the original paper \cite{van2012}, the authors tried to express the whole CKP system (\ref {ckpbilinear}) in the frame of super integrable hierarchies, but they only obtain the Lax operator without the Lax equation. We believe that they encountered the same problem.
\section{modified CKP hierarchy and constrained CKP hierarchy}
In this section, we first construct the modified CKP hierarchy by using free Bosons and express it in an explicit form of bilinear equation. Then by using the spectral representation of eigenfunction, the Lax structure of the modified CKP hierarchy is obtained. Based upon these results, we find there are also the structures of the modified CKP hierarchy in the constrained CKP hierarchy. We end this section with the solutions of the constrained CKP hierarchy given by free Bosons.
\subsection{The modified CKP hierarchy}
Before discussion, let us see some properties of free Bosons, which will be useful in the construction of bilinear equation for the modified CKP hierarchy. If denote
$\vec{\beta}=(\beta_m,\cdots,\beta_1)$, $\vec{\beta}\setminus\beta_i\triangleq(\beta_m,\cdots,\beta_{i+1},\beta_{i-1},
\cdots,\beta_1)$, and let $\vec{\bf n}=(n,n-1,\cdots,2,1)$, $\beta_{\vec{\bf n}}=\beta_n\beta_{n-1}\cdots\beta_2\beta_1$, then by the definition of $S$ and the commuting relation of $\phi_j$, one can obtain the lemma below.
\begin{lemma}\label{lemmas}
For $\beta_i\in V=\sum_{l\in \mathbb{Z}+1/2}\mathbb{C}\phi_l$, one has the following relations
\begin{align*}
S(1\otimes\beta_{\vec{\bf n}})=-\sum_{l=1}^n\beta_l\otimes\beta_{\vec{\bf n}\setminus l}+(1\otimes\beta_{\vec{\bf n}}) S,\quad S(\beta_{\vec{\bf n}}\otimes1)=\sum_{l=1}^n\beta_{\vec{\bf n}\setminus l}\otimes\beta_l+(\beta_{\vec{\bf n}}\otimes1) S.
\end{align*}
\end{lemma}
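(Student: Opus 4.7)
The plan is to prove both identities by induction on $n$, with the base case $n=1$ following from a direct computation using the Boson commutation relation $\phi_i\phi_j-\phi_j\phi_i=(-1)^{j-1/2}\delta_{i,-j}$. Writing $\beta_1=\phi_j$ with $j\in 1/2+\mathbb{Z}$ and inserting the definition $S=\sum_{k\in 1/2+\mathbb{Z}}(-1)^{k+1/2}\phi_k\otimes\phi_{-k}$, I would push $\phi_j$ leftward through $\phi_{-k}$ in the second tensor factor. Only the $k=j$ term contributes to the commutator piece, and the product of signs $(-1)^{j+1/2}(-1)^{j-1/2}=(-1)^{2j}$ equals $-1$ because $2j$ is odd, which is the source of the $-\beta_1\otimes 1$ term in the first identity. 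The remaining piece reassembles immediately as $(1\otimes\beta_1)S$. The analogous manipulation on the left tensor factor gives $S(\beta_1\otimes 1)=1\otimes\beta_1+(\beta_1\otimes 1)S$; here the two $(-1)$ powers cancel rather than reinforce, which explains the sign asymmetry between the two formulas.

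For the inductive step of the first identity, I would split $1\otimes\beta_{\vec{\bf n}}=(1\otimes\beta_n)(1\otimes\beta_{\vec{\bf n}\setminus n})$ and compute
\begin{align*}
S(1\otimes\beta_{\vec{\bf n}})&=\bigl(S(1\otimes\beta_n)\bigr)(1\otimes\beta_{\vec{\bf n}\setminus n})\\
&=\bigl(-\beta_n\otimes 1+(1\otimes\beta_n)S\bigr)(1\otimes\beta_{\vec{\bf n}\setminus n}),
\end{align*}
then apply the inductive hypothesis to $S(1\otimes\beta_{\vec{\bf n}\setminus n})$ inside the last term. After using the bookkeeping identity $\beta_n\,\beta_{(\vec{\bf n}\setminus n)\setminus l}=\beta_{\vec{\bf n}\setminus l}$ for $l<n$, the $-\beta_n\otimes\beta_{\vec{\bf n}\setminus n}$ contribution and the $n-1$ terms inherited from the hypothesis merge into $-\sum_{l=1}^n\beta_l\otimes\beta_{\vec{\bf n}\setminus l}$, while the remaining piece closes up as $(1\otimes\beta_{\vec{\bf n}})S$. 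The second identity follows by exactly the same induction scheme, this time peeling $\beta_n$ off the left factor and invoking the left-factor version of the base case.

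The only real obstacle is the sign bookkeeping at the base step: one has to check that $(-1)^{j+1/2}(-1)^{j-1/2}=(-1)^{2j}=-1$ for $j\in 1/2+\mathbb{Z}$, yielding the minus sign in the first identity, while the corresponding product for the second identity cancels, giving the plus sign. Once this is verified, both inductions are mechanical, using only linearity in each $\beta_i$ and the pairwise commutation rule of the $\phi_l$'s; no further structural input about $V$ is needed.
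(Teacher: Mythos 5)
Your proof is correct, and it is essentially the argument the paper intends: the paper states the lemma follows "by the definition of $S$ and the commuting relation of $\phi_j$" without writing out the computation, and your base case (the $(-1)^{2j}=-1$ versus $(-1)^0=1$ sign check at $k=j$ and $k=-j$ respectively) together with the induction peeling off $\beta_n$ is exactly that computation made explicit.
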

By this lemma and the bilinear equation of CKP hierarchy in \eqref{ckpbilinear}, one can easily obtain the proposition below
\begin{proposition}
Given $\beta\in V=\sum_{l\in \mathbb{Z}+1/2}\mathbb{C}\phi_l$ and $g\in Sp_{\infty}$,
\begin{equation}\label{mckpbilinear}
S(g|0\rangle\otimes \beta g|0\rangle)=-\beta g|0\rangle\otimes g|0\rangle.
\end{equation}
\end{proposition}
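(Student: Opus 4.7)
The plan is to derive \eqref{mckpbilinear} by a single application of Lemma \ref{lemmas} to the CKP bilinear equation \eqref{ckpbilinear}. The key observation is that $\beta$ acts only on the second tensor factor, so it can be pulled out in front of $g|0\rangle\otimes g|0\rangle$ and then commuted with $S$ using the $n=1$ case of Lemma \ref{lemmas}.

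First I would rewrite the argument by factoring the second-slot action of $\beta$:
\begin{equation*}
g|0\rangle\otimes\beta g|0\rangle \;=\; (1\otimes\beta)\bigl(g|0\rangle\otimes g|0\rangle\bigr),
\end{equation*}
so that $S(g|0\rangle\otimes\beta g|0\rangle) = S(1\otimes\beta)(g|0\rangle\otimes g|0\rangle)$. Next I would apply Lemma \ref{lemmas} in the trivial case $\vec{\bf n}=(1)$, $\beta_{\vec{\bf n}}=\beta$, which collapses the sum to a single term and gives
\begin{equation*}
S(1\otimes\beta) \;=\; -\,\beta\otimes 1 + (1\otimes\beta)\,S.
\end{equation*}
Distributing this across $g|0\rangle\otimes g|0\rangle$ produces two terms: an inhomogeneous piece $-\beta g|0\rangle\otimes g|0\rangle$ and a homogeneous piece $(1\otimes\beta)\,S(g|0\rangle\otimes g|0\rangle)$.

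Finally I would invoke the CKP bilinear equation \eqref{ckpbilinear}, namely $S(g|0\rangle\otimes g|0\rangle)=0$, to eliminate the homogeneous piece, leaving exactly the claimed identity \eqref{mckpbilinear}. I do not foresee any real obstacle: the whole computation reduces to recognizing the factorization $g|0\rangle\otimes\beta g|0\rangle=(1\otimes\beta)(g|0\rangle\otimes g|0\rangle)$ so that Lemma \ref{lemmas} applies directly, together with the already-established vacuum annihilation of $g|0\rangle\otimes g|0\rangle$ by $S$ (which in turn rests on $[S,g\otimes g]=0$ from \cite{Anguelova2018} and $S(|0\rangle\otimes|0\rangle)=0$).
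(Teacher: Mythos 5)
Your proposal is correct and is exactly the argument the paper intends: the paper states the proposition follows from Lemma \ref{lemmas} together with the bilinear equation \eqref{ckpbilinear}, and your computation — the $n=1$ case $S(1\otimes\beta)=-\beta\otimes 1+(1\otimes\beta)S$ applied to $g|0\rangle\otimes g|0\rangle$, followed by $S(g|0\rangle\otimes g|0\rangle)=0$ — is precisely the omitted "easy" verification.
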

Next we will try to write \eqref{mckpbilinear} into the usual bilinear equations. For this, we can apply
$\langle1|e^{H(t)} \otimes\langle0|e^{H(t')}$ to both sides of \eqref{mckpbilinear}, then one can obtain,
\begin{equation*}
{\rm Res}_{z}\langle1|e^{H(t)}\phi(z)g|0\rangle\otimes\langle0|e^{H(t')}\phi(-z)\beta g|0\rangle=-\langle1|e^{H(t)}\beta g|0\rangle\otimes\langle0|e^{H(t')}g|0\rangle,
\end{equation*}
that is,
\begin{equation}
{\rm Res}_{z}\frac{\langle1|e^{H(t)}\phi(z)g|0\rangle}{\langle1|e^{H(t)}\beta g|0\rangle}\frac{\langle0|e^{H(t')}\phi(-z)\beta g|0\rangle}{\langle0|e^{H(t')}g|0\rangle}=-1.\label{bilinear1}
\end{equation}
Before further discussion, the lemmas below are needed.
\begin{lemma}\label{qboson}
Assume $\beta\in V=\sum_{l\in \mathbb{Z}+1/2}\mathbb{C}\phi_l$ and $g\in Sp_\infty$. If denote
\begin{align}
q(t)=\frac{2\langle 1|e^{H(t)}\beta g|0\rangle}{\langle 0|e^{H(t)}g|0\rangle},\label{qbosonexp}
\end{align}
then
$q(t)$ is the eigenfunction of the CKP hierarchy corresponding to the wave function $w(t,\lambda)=\frac{2\langle 1|e^{H(t)}\phi(\lambda) g|0\rangle}{\langle 0|e^{H(t)}g|0\rangle}$, that is, $q_{t_n}=(L^n)_{\geq 0}(q)$,
where $L=W\pa W^{-1}$ is the Lax operator and  pseudo-differential operator $W$ satisfies $w(t,\lambda)=W(e^{\xi(t,\lambda)})$.
\end{lemma}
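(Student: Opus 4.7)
The plan is to realize $q(t)$ as a formal spectral residue of the CKP wave function $w(t,\lambda)$ against a Laurent weight in $\lambda$, and then transfer the standard flow equation of $w$ to $q$ by commuting $(L^n)_{\geq 0}$ past the $\lambda$-residue. The whole argument rests only on linearity in the Fock space and the dressing identities \eqref{dressingevolution}, so the modified-CKP bilinear equation \eqref{bilinear1} is not needed here.

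First I would expand $\beta\in V=\sum_{l\in\frac12+\mathbb{Z}}\mathbb{C}\phi_l$ as a finite linear combination $\beta=\sum_k c_k\phi_{k+1/2}$. From the field expansion $\phi(\lambda)=\sum_{k\in\mathbb{Z}}\phi_{k+1/2}\lambda^k$ one reads off $\phi_{k+1/2}={\rm Res}_\lambda\,\phi(\lambda)\lambda^{-k-1}$, so by linearity of matrix elements
\begin{align*}
q(t)\;=\;\frac{2\sum_k c_k\langle 1|e^{H(t)}\phi_{k+1/2}g|0\rangle}{\langle 0|e^{H(t)}g|0\rangle}\;=\;{\rm Res}_\lambda\,w(t,\lambda)\,f(\lambda),\qquad f(\lambda):=\sum_k c_k\lambda^{-k-1}.
\end{align*}
The crucial feature is that the weight $f(\lambda)$ carries no dependence on the time variables $t$ at all.

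Next I would invoke the wave-function flow that follows immediately from \eqref{dressingevolution}: writing $w(t,\lambda)=W(e^{\xi(t,\lambda)})$ and combining $W_{t_n}=-(W\pa^n W^{-1})_-W$ with $L^n w=\lambda^n w$ gives $w_{t_n}(t,\lambda)=(L^n)_{\geq 0}w(t,\lambda)$ for every odd $n$. Since $(L^n)_{\geq 0}$ is a differential operator in $\pa_x$ alone and $f(\lambda)$ is purely a function of $\lambda$, the operator commutes both with ${\rm Res}_\lambda$ and with multiplication by $f(\lambda)$, so
\begin{align*}
q_{t_n}(t)={\rm Res}_\lambda\,w_{t_n}(t,\lambda)f(\lambda)=(L^n)_{\geq 0}\,{\rm Res}_\lambda\,w(t,\lambda)f(\lambda)=(L^n)_{\geq 0}(q(t)),
\end{align*}
which is exactly the claimed eigenfunction identity.

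There is no substantive obstacle in this plan; it is a linearity-plus-commutation argument. The only point requiring minor care is verifying that the $\lambda$-residue, the time derivations $\pa_{t_n}$, and the pseudo-differential operators in $\pa_x$ act on disjoint sets of variables so that the formal interchange of $(L^n)_{\geq 0}$ and ${\rm Res}_\lambda$ is legitimate, which is automatic in the formal series framework used throughout the paper.
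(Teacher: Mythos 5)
Your argument is correct and is essentially the paper's own proof: the weight $f(\lambda)$ you construct is exactly the series $\rho(\lambda)\in\mathbb{C}((\lambda^{-1}))$ with $\beta={\rm Res}_\lambda\,\rho(\lambda)\phi(\lambda)$ used in the paper, giving the spectral representation $q(t)={\rm Res}_\lambda\,\rho(\lambda)w(t,\lambda)$, after which both proofs transfer $w_{t_n}=(L^n)_{\geq 0}(w)$ (from the dressing equation) through the $t$-independent residue. You merely make the construction of $\rho$ more explicit than the paper does.
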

\begin{proof}
Firstly, it is obviously that there exists $\rho(\lambda)\in\mathbb{C}((\lambda^{-1}))$ such that
$$\beta={\rm Res}_z \rho(\lambda)\phi(\lambda).$$
Therefore from the expressions of $q(t)$ and $w(t,\lambda)$ in terms of free Bosons, we can obtain the spectral representation of the eigenfunction of the CKP hierarchy \cite{Cheng2014}, that is,
\begin{align}
q(t)={\rm Res}_z \rho(\lambda)w(t,\lambda).\label{ckpspectralrepr}
\end{align}
Further note that $w(t,\lambda)_{t_n}=(L^n)_{\geq 0}(w(t,\lambda))$ which can be derived by $w(t,\lambda)=W(e^{\xi(t,\lambda)})$ and (\ref{dressingevolution}), therefore one can at last find that $q(t)$ is the CKP eigenfunction.
\end{proof}

\begin{lemma}\label{2wave}
Assuming $w(t,\lambda) =(1+\mathcal{O}(\lambda^{-1}))e^{\xi(t,\lambda)}$ and $\widetilde{w}(t,\lambda)=\sum^{\infty}_{l=1}\frac{\widetilde{w}_l}{\lambda^l}e^{-\xi(t,\lambda)}$, if $w(t,\lambda)$ and $\widetilde{w}(t,\lambda)$ satisfy
\begin{eqnarray*}
{\rm Res}_\lambda w(t,\lambda)\widetilde{w}(t',\lambda)=0
\end{eqnarray*}
then $\widetilde{w}(t,\lambda)=0$, i.e., $\widetilde{w}_1=\widetilde{w}_2=\cdots=0$.
\end{lemma}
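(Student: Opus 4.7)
The plan is to convert the residue hypothesis into an identity between pseudo-differential operators and then exploit the fact that $w$ has leading term $1$.

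First I introduce dressing operators
\[
W = 1 + \sum_{i\geq 1} w_i(t)\,\partial^{-i}, \qquad \widetilde W = \sum_{l\geq 1}(-1)^l \widetilde w_l(t)\,\partial^{-l},
\]
chosen so that $w(t,\lambda) = W(t,\partial_x) e^{\xi(t,\lambda)}$ and $\widetilde w(t,\lambda) = \widetilde W(t,\partial_x) e^{-\xi(t,\lambda)}$. The sign $(-1)^l$ in $\widetilde W$ is forced because $\partial_x^{-1} e^{-\xi(t,\lambda)} = -\lambda^{-1} e^{-\xi(t,\lambda)}$. Note that $W$ has order $0$ with leading coefficient $1$, so it is invertible in the algebra of pseudo-differential operators, while $\widetilde W$ has order at most $-1$.

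Next I restrict the hypothesis, which holds for all $t,t'$, to the slice where $t_k = t'_k$ for every odd $k \geq 3$, leaving $x = t_1$ and $x' = t'_1$ free. On this slice the exponential factors that depend only on the higher times cancel out of $w(t,\lambda)\widetilde w(t',\lambda)$, so the residue condition collapses to
\[
{\rm Res}_\lambda \bigl(W(t,\partial_x) e^{x\lambda}\bigr)\bigl(\widetilde W(t',\partial_{x'}) e^{-x'\lambda}\bigr) = 0.
\]
Applying Lemma \ref{newAB} with $A=W$ and $B=\widetilde W$ turns this into the operator identity
\[
W(t,\partial_x)\,\widetilde W^*(t,\partial_x)\,\partial_x(\Delta^0) = 0,
\]
where in $\widetilde W^*$ the variable $x'$ has been identified with $x$.

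To decode this, write $W\widetilde W^*\partial_x = \sum_n c_n \partial^n$. The explicit formulas $\partial^{-a}(\Delta^0) = (x-x')^a/a!$ for $a\geq 0$ and $0$ for $a<0$ show that $W\widetilde W^*\partial_x(\Delta^0) = \sum_{k\geq 0} c_{-k}(x-x')^k/k!$, so the identity forces $c_{-k}=0$ for every $k\geq 0$. But by the order bookkeeping above, $W\widetilde W^*\partial_x$ already has order at most $0$, so vanishing of its non-positive part is the same as $W\widetilde W^*\partial_x=0$, i.e.\ $W\widetilde W^* = 0$. Invertibility of $W$ then yields $\widetilde W^* = 0$, hence $\widetilde W = 0$ and every $\widetilde w_l(t)$ vanishes; since the common values of the higher times were arbitrary, the conclusion extends to all $t$.

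The only real point of care is the sign bookkeeping: getting the $(-1)^l$ into $\widetilde W$ correctly and tracking that the adjoint in Lemma \ref{newAB} also replaces $x'$ by $x$ in the coefficients. Once those are in place, the remainder is purely formal order-counting for pseudo-differential operators.
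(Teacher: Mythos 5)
Your proof is correct, and it is worth noting that the paper does not actually prove this lemma at all: its ``proof'' is a one-line citation to \cite{Miwa2000}. Your argument supplies the missing details, and it is the standard Sato-theoretic one, entirely consistent with the paper's own toolkit — it is the same use of Lemma \ref{newAB} by which the paper derives $W^*=W^{-1}$ from the CKP bilinear equation. The sign bookkeeping checks out: $\partial_x^{-l}e^{-\xi(t,\lambda)}=(-1)^l\lambda^{-l}e^{-\xi(t,\lambda)}$ forces the $(-1)^l$ in $\widetilde W$, and then $\widetilde W^*=\sum_l\partial_x^{-l}\widetilde w_l(x)$ still has order $\le -1$, which is all the order count needs. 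Two small points you could make fully explicit: (i) restricting to the slice $t_k=t_k'$ for odd $k\ge 3$ is legitimate because the hypothesis holds for independent $t,t'$, the higher times then enter only as parameters, and since the slice values are arbitrary the vanishing of each $\widetilde w_l$ on every slice gives $\widetilde w_l\equiv 0$; (ii) the conclusion $c_{-k}(x)=0$ for all $k\ge 0$ follows because, for fixed $x$, the polynomials $(x-x')^k/k!$ are linearly independent as functions of $x'$. With those remarks the argument is complete and rigorous.
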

\begin{proof}
The proof can be found in \cite{Miwa2000}.
\end{proof}
Assume the eigenfunction $f(\mathfrak{t})$ and the adjoint eigenfunction $g(\mathfrak{t})$ corresponding the Lax operator $\mathfrak{L}=\pa+\mathcal{O}(\pa^{-1})$ of the KP hierarchy, that is,
$$f_{t_n}=(\mathfrak{L}^n)_{\geq 0}(f),\quad g_{t_n}=-(\mathfrak{L}^{*n})_{\geq 0}(g),\quad n=1,2,3,\cdots,$$
then we can introduce the squared eigenfunction potential (SEP)\cite{Aratyn,Oevelpa1993,Oevel1998} $\Omega(f(\mathfrak{t}),g(\mathfrak{t}))$, determined by
\begin{eqnarray*}
\Omega(f(\mathfrak{t}),g(\mathfrak{t}))_x=f(\mathfrak{t})g(\mathfrak{t}),\quad
\Omega(f(\mathfrak{t}),g(\mathfrak{t}))_{t_n}={\rm Res}_{\pa}(\pa^{-1}g(\mathfrak{t}) (\mathfrak{L}^n)_{\geq0}f(\mathfrak{t})\pa^{-1}),
\end{eqnarray*}
where ${\rm Res}_{\pa}\sum_i a_i\pa^i=a_{-1}$.
In the CKP case by using Lemma \ref{newAB},
\begin{align*}
&{\rm Res}_\lambda w(x',\hat t,\lambda)\cdot \Omega(q(x,\hat t),w(x,\hat t,-\lambda))={\rm Res}_\lambda w(x',\hat t,\lambda)\cdot \pa_{x}^{-1}\Big(q(x,\hat t)w(x,\hat t,-z)\Big)\\
=& -(WW^*q)|_{x=x'}\Big((x'-x)^0\Big)=-q(x',\hat t),
\end{align*}
where $\hat t=(t_3,t_5,\cdots)$.
Further apply $\pa_{x'}^n$ on both sides,
\begin{align*}
&{\rm Res}_\lambda \pa_{x'}^n (w(x',\hat t,\lambda))\cdot \Omega(q(x,\hat t),w(x,\hat t,-\lambda))=-\pa_{x'}^n(q(x',\hat t)).
\end{align*}
Therefore
\begin{align}
{\rm Res}_\lambda w(t,\lambda)\cdot \Omega(q(t'),w(t',-\lambda))=-q(t),\label{qspec}
\end{align}
by using $q_{t_n}=(L^n)_{\geq 0}(q)$ and Taylor expansion of $w(t',\lambda)$ at $(x',\hat t)$.
On the other hand, one can find the following expansion
\begin{align}
\Omega(q(t),w(t,-\lambda))=(-q(t)+\mathcal{O}(\lambda^{-1}))\lambda^{-1}e^{\xi(t,-\lambda)}, \label{omegaexpansion}
\end{align}
by using $\pa^{-1}q=\sum_{l=0}^\infty(-1)^l \pa_x^l(q)\cdot\pa^{-l-1}$.
Therefore if assume $\rho(\lambda)=\mathcal{O}(\lambda^{-1})\cdot e^{-\xi(t',\lambda)}$ in the proof of Lemma \ref{qspec}, then
it can be found that
$$\rho(\lambda)=-\Omega(q(t'),w(t',-\lambda)),$$
where we have used Lemma \ref{2wave}, \eqref{ckpspectralrepr} and \eqref{qspec}.

\begin{proposition}\label{omegaqqboson}
If assume $w(t,\lambda)$ and $q(t)$ are the wave function and the eigenfunction of the CKP hierarchy respectively, then
\begin{eqnarray*}
\Omega(q(t),w(t,\lambda))=\frac{\langle0|e^{H(t)}\phi(\lambda)\beta g|0\rangle}{\langle0|e^{H(t)}g|0\rangle},
\end{eqnarray*}
where Bosonic field $\beta\in V$ is related with $q(t)$ by \eqref{qbosonexp}. In particular,
\begin{align*}
\Omega(q(t),q(t))=\frac{\langle0|e^{H(t)}\beta^2 g|0\rangle}{\langle0|e^{H(t)}g|0\rangle}.
\end{align*}
\end{proposition}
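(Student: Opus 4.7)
The plan is to prove the first identity by matching two different integral representations of $-q(t)$ of the shape ${\rm Res}_z w(t,z)\cdot(\,\cdot\,)$, and then obtain the second identity by combining the first with the spectral representation of $q$ and the bilinearity implicit in the Bosonic expression.

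First I would start from the bilinear equation \eqref{bilinear1}, which has already been extracted from the modified CKP bilinear equation \eqref{mckpbilinear}. By Lemma \ref{qboson} and the Bosonic form of the wave function, the ratio $\langle 1|e^{H(t)}\phi(z)g|0\rangle/\langle 1|e^{H(t)}\beta g|0\rangle$ simplifies to $w(t,z)/q(t)$, and clearing denominators converts \eqref{bilinear1} into
\begin{equation*}
{\rm Res}_z w(t,z)\cdot\frac{\langle 0|e^{H(t')}\phi(-z)\beta g|0\rangle}{\langle 0|e^{H(t')}g|0\rangle}=-q(t).
\end{equation*}
On the other hand, equation \eqref{qspec} supplies a second such representation, ${\rm Res}_z w(t,z)\cdot\Omega(q(t'),w(t',-z))=-q(t)$. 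Subtracting and invoking Lemma \ref{2wave} will force
\begin{equation*}
\Omega(q(t'),w(t',-z))-\frac{\langle 0|e^{H(t')}\phi(-z)\beta g|0\rangle}{\langle 0|e^{H(t')}g|0\rangle}=0,
\end{equation*}
so the first claim follows upon renaming $-z\to\lambda$.

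The step that must be checked carefully is that the difference above actually belongs to the class of series $\sum_{l\geq 1}\widetilde w_l(t')z^{-l}e^{-\xi(t',z)}$ covered by Lemma \ref{2wave}. For $\Omega(q(t'),w(t',-z))$ this is \eqref{omegaexpansion} together with the identity $e^{\xi(t',-z)}=e^{-\xi(t',z)}$, which holds because all CKP times carry odd indices. For the Bosonic quotient I would unfold $\sigma\phi(-z)\sigma^{-1}$ via Proposition \ref{fockiso}: the exponential factor becomes $e^{\xi(t',-z)}=e^{-\xi(t',z)}$, the vertex exponential $\exp(\sum\frac{2}{k}\partial_{t_k}(-z)^{-k})$ is a formal power series in $z^{-1}$ starting from $1$, and once $t_{\mathrm{odd}}=0$ is imposed the polynomial terms $(j-\frac{1}{2})t_{(2j-1)/2}z^{j-1}$ drop out, leaving only the derivative terms $2\partial_{t_{(2j-1)/2}}(-z)^{-j}$ that begin at order $z^{-1}$. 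A sanity check: the leading $z^{-1}$ coefficient equals $-2\langle 1|e^{H(t')}\beta g|0\rangle/\langle 0|e^{H(t')}g|0\rangle=-q(t')$, exactly matching the leading coefficient predicted by \eqref{omegaexpansion}.

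For the special case $\Omega(q(t),q(t))$, I would combine the spectral representation \eqref{ckpspectralrepr}, namely $q(t)={\rm Res}_\lambda\rho(\lambda)w(t,\lambda)$, with the linearity of $\Omega$ in its second slot to write
\begin{equation*}
\Omega(q(t),q(t))={\rm Res}_\lambda\rho(\lambda)\,\Omega(q(t),w(t,\lambda)).
\end{equation*}
Substituting the formula just proved and pulling the residue back inside the matrix element, together with $\beta={\rm Res}_\lambda\rho(\lambda)\phi(\lambda)$, replaces $\phi(\lambda)\beta$ by $\beta^2$ in the numerator and finishes the proof. The main obstacle is the asymptotic bookkeeping in the preceding paragraph: the sign conventions in $\phi(-z)$ and in the $(-z)^{-k}$ factors, together with the odd-only nature of the CKP times, must be tracked carefully to secure the $e^{-\xi(t',z)}\cdot O(z^{-1})$ shape required by Lemma \ref{2wave}.
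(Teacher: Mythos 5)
Your proposal is correct and follows essentially the same route as the paper: subtract the Bosonic bilinear identity \eqref{bilinear1} (after clearing the $q(t)^{-1}$ factor) from the spectral relation \eqref{qspec}, verify via Proposition \ref{fockiso} and \eqref{omegaexpansion} that the difference has the $\mathcal{O}(\lambda^{-1})e^{-\xi(t',\lambda)}$ form required by Lemma \ref{2wave}, and conclude; the special case then follows from $\beta={\rm Res}_\lambda\rho(\lambda)\phi(\lambda)$ and linearity of $\Omega$ in its second slot, exactly as the paper intends. Your explicit check that the leading $\lambda^{-1}$ coefficients of the two series agree is a slightly stronger verification than strictly needed, but it matches the paper's own computation.
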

\begin{proof}
Firstly, by noticing that
\begin{eqnarray*}
q(t)^{-1}w(t,\lambda)=\frac{\langle1|e^{H(t)}\phi(\lambda)g|0\rangle}{\langle1|e^{H(t)}\beta g|0\rangle},
\end{eqnarray*}
the substraction of (\ref{bilinear1}) from (\ref{qspec}) can lead to
\begin{eqnarray*}
{\rm Res}_\lambda w(t,\lambda)\left(\Omega(q(t'),w(t',-\lambda))-\frac{\langle0|e^{H(t')}\phi(-\lambda)\beta g|0\rangle}{\langle0|e^{H(t')}g|0\rangle}\right)=0.
\end{eqnarray*}
On the other hand, it can be found that
by Proposition \ref{fockiso} and \eqref{qbosonexp}
\begin{eqnarray*}
&&\frac{\langle0|e^{H(t)}\phi(\lambda)\beta g|0\rangle}{\langle0|e^{H(t)}g|0\rangle}=\frac{1}{\tau_0(t)}\sigma(\phi(\lambda))\sigma^{-1}\sigma(\beta g|0\rangle)\biggr|_{t_\mathrm{odd}=0}\nonumber\\
&=&2\frac{1}{\tau_0(t)}e^{\xi(t,\lambda)}e^{\xi(\tilde{\pa},\lambda^{-1})}\left(\lambda^{-1}\frac{\pa}{\pa t_{\frac{1}{2}}}\langle0|e^{H(t)}e^{\chi(t_\mathrm{odd})}\beta
g|0\rangle\biggr|_{t_\mathrm{odd}=0} +\mathcal{O}(\lambda^{-2})\right)\\
&=&\Big(q(t)\lambda^{-1}+\mathcal{O}(\lambda^{-2})\Big)e^{\xi(t,\lambda)},\label{twowave}
\end{eqnarray*}
therefore $\Omega(q(t'),w(t',-\lambda))-\frac{\langle0|e^{H(t')}\phi(-\lambda)\beta g|0\rangle}{\langle0|e^{H(t')}g|0\rangle}=\mathcal{O}(\lambda^{-2})e^{-\xi(t',\lambda)}$ by \eqref{omegaexpansion}. So this proposition can be proved by Lemma \ref{2wave}.
\end{proof}
\begin{corollary}
(\ref{bilinear1}) is equivalent to the spectral representation (\ref{qspec}) of the CKP hierarchy.
\end{corollary}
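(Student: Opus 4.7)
The plan is to treat this corollary as a direct formal rewriting via the two Bosonic dictionaries already established in Lemma \ref{qboson} and Proposition \ref{omegaqqboson}.

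First I would collect the two identifications I need. From the definition $q(t)=2\langle 1|e^{H(t)}\beta g|0\rangle/\langle 0|e^{H(t)}g|0\rangle$ in Lemma \ref{qboson}, together with the Bosonic formula $w(t,\lambda)=2\langle 1|e^{H(t)}\phi(\lambda)g|0\rangle/\langle 0|e^{H(t)}g|0\rangle$ recorded in Section 2, one obtains
\[\frac{w(t,\lambda)}{q(t)} \;=\; \frac{\langle 1|e^{H(t)}\phi(\lambda)g|0\rangle}{\langle 1|e^{H(t)}\beta g|0\rangle}.\]
From Proposition \ref{omegaqqboson}, evaluated at $(t',-\lambda)$, one obtains
\[\Omega\bigl(q(t'),w(t',-\lambda)\bigr) \;=\; \frac{\langle 0|e^{H(t')}\phi(-\lambda)\beta g|0\rangle}{\langle 0|e^{H(t')}g|0\rangle}.\]

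Next I would use these identifications to interchange the two forms. For the direction (\ref{qspec})$\Rightarrow$(\ref{bilinear1}), I start from ${\rm Res}_\lambda w(t,\lambda)\Omega(q(t'),w(t',-\lambda))=-q(t)$, divide both sides by $q(t)$ (which is $\lambda$-independent and so passes through the residue), and substitute the two displayed identities; the result is exactly (\ref{bilinear1}). For the converse (\ref{bilinear1})$\Rightarrow$(\ref{qspec}), I multiply (\ref{bilinear1}) by $q(t)$ and substitute the identities in the opposite direction.

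The only real subtlety is guarding against circular reasoning. Proposition \ref{omegaqqboson} was itself established by combining (\ref{bilinear1}) with (\ref{qspec}) and invoking Lemma \ref{2wave} on the leading asymptotics of the difference. What that proposition actually delivers, however, is a structural identity between $\Omega$, defined intrinsically through its SEP equations, and a specific ratio of Bosonic matrix elements. Once this identity is in hand, the corollary becomes the purely formal observation that, after pulling the $\lambda$-independent factor $q(t)$ across the residue, (\ref{bilinear1}) and (\ref{qspec}) are literally the same equation written in two different notations. In this sense the corollary is a clean repackaging of Proposition \ref{omegaqqboson} that makes explicit that the modified CKP bilinear equation (\ref{bilinear1}) encodes precisely the spectral representation of the CKP eigenfunction.
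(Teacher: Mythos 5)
Your proposal is correct and follows essentially the same route the paper intends: the paper states this corollary without proof, as an immediate consequence of Proposition \ref{omegaqqboson} together with the identity $q(t)^{-1}w(t,\lambda)=\langle1|e^{H(t)}\phi(\lambda)g|0\rangle/\langle1|e^{H(t)}\beta g|0\rangle$ already noted in that proposition's proof, which is exactly the dictionary you use. Your explicit handling of the $\lambda$-independence of $q(t)$ under the residue and your remark on why invoking Proposition \ref{omegaqqboson} is not circular are both sound and merely spell out what the paper leaves implicit.
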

After the preparation above, now we will investigate the dressing operator and the Lax equation of the modified CKP hierarchy starting from (\ref{bilinear1}).

If denote the wave function $\phi(t,z)$ and introduce the dressing operator $Z$ in the way below
\begin{eqnarray*}
&&\phi(t,z)=\frac{\langle1|e^{H(t)}\phi(z)g|0\rangle}{\langle1|e^{H(t)}\beta g|0\rangle}=Z(e^{\xi(t,z)}),
\end{eqnarray*}
where $Z=z_0+z_1\pa^{-1}+z_2\pa^{-2}+\cdots$ with $z_0=q^{-1}$,
then one can obtain the following proposition by Lemma \ref{newAB}.
\begin{proposition}\label{propmckp}
The wave function $\phi(t,z)$ satisfies
\begin{align*}
{\rm Res}_z \phi(t,z)\pa_{x'}^{-1}(z_0(t')^{-2}\phi(t',-z))=-1,
\end{align*}
which is just the bilinear equation of the modified CKP hierarchy. The dressing operator $Z$ obeys
$$Z^*=Z^{-1}z_0^{2}, \quad Z_{t_n}=-(Z\pa^{n}Z^{-1})_{\leq0}=0,\quad \text{$n$ odd}.$$
Further if set $\mathcal{L}=Z\pa Z^{-1}=\partial+v_0+v_1\partial^{-1}+v_2\partial^{-2}+\cdots$, then
\begin{align*}
\mathcal{L}^*=-e^{-2\int v_0dx}\cdot\mathcal{L}\cdot e^{2\int v_0dx},\quad
\mathcal{L}_{t_n}=[(\mathcal{L}^n)_{\geq 1},\mathcal{L}],
\end{align*}
which is the Lax structure of the modified CKP hierarchy.
\end{proposition}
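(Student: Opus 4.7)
The plan is to convert the free-Bosonic bilinear equation (\ref{bilinear1}) into the differential bilinear equation of the proposition, and then to read off the algebraic and Lax structure of $Z$ and $\mathcal{L}$ from it.

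First I would use Proposition \ref{omegaqqboson} together with the defining property of the SEP to recognize the second factor in (\ref{bilinear1}) as $\Omega(q(t'),w(t',-z))=\pa_{x'}^{-1}(q(t')w(t',-z))$. A leading-order comparison of $\phi(t,z)=(z_0+\mathcal{O}(z^{-1}))e^{\xi(t,z)}$ with $w(t,z)=(1+\mathcal{O}(z^{-1}))e^{\xi(t,z)}$, using $q=z_0^{-1}$, gives the pointwise identity $w(t,z)=q(t)\phi(t,z)$, so $q(t')w(t',-z)=z_0(t')^{-2}\phi(t',-z)$ and the claimed bilinear equation follows immediately.

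Next I would set $t=t'$ in the bilinear equation; since CKP times are odd-indexed, $e^{\xi(\hat t,z)+\xi(\hat t,-z)}=1$ and the identity reduces to ${\rm Res}_z\,Z(e^{xz})\cdot\pa_{x'}^{-1}(z_0^{-2}Z(e^{-x'z}))=-1$. Applying Lemma \ref{newAB} with $A=Z$ and $B=\pa_{x'}^{-1}z_0^{-2}Z$ (for which $B^*=-Z^*z_0^{-2}\pa^{-1}$) converts the left-hand side into $-(ZZ^*z_0^{-2})(\Delta^0)$. Since $ZZ^*z_0^{-2}$ has order $\le 0$ with leading coefficient $1$, its action on $\Delta^0$ expands as $1+c_{-1}(x-x')+c_{-2}(x-x')^2/2+\cdots$; matching with $-1$ forces each $c_{-j}$ ($j\ge 1$) to vanish, so $ZZ^*z_0^{-2}=1$, i.e.\ $Z^*=Z^{-1}z_0^2$. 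Then I would differentiate the bilinear equation in $t_n$ while keeping $t'$ fixed: writing $\phi_{t_n}=B_n\phi$ with $B_n=Z_{t_n}Z^{-1}+Z\pa^nZ^{-1}$ and repeating the Lemma \ref{newAB} computation, the already-established identity $ZZ^*z_0^{-2}=1$ collapses the algebra to $-B_n(\Delta^0)=0$; the $(x-x')$-expansion argument then reads off $(B_n)_{\le 0}=0$, giving $Z_{t_n}=-(Z\pa^nZ^{-1})_{\le 0}Z$.

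Finally, from $Z^*=Z^{-1}z_0^2$ one gets $(Z^{-1})^*=z_0^{-2}Z$, whence $\mathcal{L}^*=(Z^{-1})^*(-\pa)Z^*=-z_0^{-2}\mathcal{L}z_0^2$; identifying the leading coefficient $v_0=-(\log z_0)_x$ from the $\pa^0$-term of $\mathcal{L}$ rewrites this as the stated gauge conjugation involving $\exp(\pm 2\int v_0\,dx)$. For the Lax equation I would differentiate $\mathcal{L}=Z\pa Z^{-1}$ in $t_n$ and substitute $Z_{t_n}=-(\mathcal{L}^n)_{\le 0}Z$ together with $(Z^{-1})_{t_n}=-Z^{-1}Z_{t_n}Z^{-1}$: the two contributions combine into the commutator $[\mathcal{L},(\mathcal{L}^n)_{\le 0}]=[(\mathcal{L}^n)_{\ge 1},\mathcal{L}]$. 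The main obstacle is the passage from a scalar equality $P(\Delta^0)=0$ to the operator equality $P=0$ for $P$ of order $\le 0$: one must verify that the formal $(x-x')$-expansion of $P(\Delta^0)$ faithfully detects every coefficient $P_{-j}$ for $j\ge 0$, so that the vanishing of the series forces the operator to vanish identically.
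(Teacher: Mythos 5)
Your proof is correct and follows exactly the route the paper intends: the paper offers no details beyond invoking Lemma \ref{newAB} on the bilinear identity (\ref{bilinear1}), and your conversion of the second factor via Proposition \ref{omegaqqboson} and the relation $w=q\phi$ (which is immediate from the defining matrix elements, not merely from leading orders), the $(x-x')$-expansion argument extracting $ZZ^*z_0^{-2}=1$ and $(B_n)_{\leq 0}=0$, and the passage to $\mathcal{L}$ are all the standard completions of that outline, with the final step you flag (deducing $P=0$ from $P(\Delta^0)=0$ for $P$ of non-positive order) being valid since the powers $(x-x')^{j}$, $j\geq 0$, are linearly independent. Two small points: your derivation actually yields $Z_{t_n}=-(Z\pa^nZ^{-1})_{\leq 0}Z$ and $\mathcal{L}^*=-z_0^{-2}\mathcal{L}z_0^{2}=-e^{2\int v_0dx}\cdot\mathcal{L}\cdot e^{-2\int v_0dx}$ (since $v_0=-(\log z_0)_x$), both of which differ from the formulas as printed in the proposition; yours are the correct versions --- the second is confirmed by the paper's own listed constraint $v_2=-v_1v_0-\frac{1}{2}v_{1x}$ --- so the printed statement contains typos rather than your proof containing errors.
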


Next we will give some examples of the CKP constraints $Z^*=Z^{-1}z_0^{2}$ and
$\mathcal{L}^*=-e^{-2\int v_0dx}\cdot\mathcal{L}\cdot e^{2\int v_0dx}$, which are listed as follows.
\begin{itemize}
  \item Constraints on $Z$:
\begin{eqnarray*}
z_2&=&\frac{1}{2z_0}(z_1^2-z_0z_{1x}+z_1z_{0x}),\nonumber\\
z_4&=&-\frac{1}{8z^3_0}(z_1^4+8z^3_{1}z_{0x}-12z_3z_{0x}z_0^2
-8z_3z_1z_0^2-
8z_1^2 z_{1x}z_{0}+19z_1^2z^2_{0x}
+7z^2_{1x}z^2_{0}\\
&+&12z_1z^3_{0x}
+6 z_{1xx} z_{0x}z^2_{0}-4z_1z_{0xxx}z_0^2-
6z_{1x}z_{0xx}z_0^2
+4z_{1xxx}z_0^3
\\
&+&12z_0^3z_{3x}
-12z_{0x}^2z_0z_{1x}-26z_{0x}z_0z_{1x}z_1),\\
&\vdots&\nonumber
\end{eqnarray*}
\item Constraints on $\mathcal{L}$:
\begin{align*}
v_2=&-v_1v_0-\frac{1}{2}v_{1x},\nonumber\\
v_4=&2v_1v_0^3+3v_{1x}v_0^2+\frac{3}{2}v_1v_{0xx}-3v_3v_0+\frac{3}{2}v_{1xx}v_0+3v_{1x}v_{0x}+\frac{1}{4}v_{1xxx}-\frac{3}{2}v_{3x},\nonumber\\
v_6=&-120v_{1x}v_0^2v_{0x}-60v_1v_0^2v_{0xx}-60v_1v_0v^2_{0x}-30v_{1x}v_0v_{0xx}-60v_0v_{0x}v_{1xx}-40v_{0}^4v_{1x}\\
&-20v_{1xxx}v^2_0+30v_{3x}v_{0}^2-30v_{1x}v_{0x}^2+\frac{15}{2}v_{3xxx}-\frac{1}{2}v_{1xxxxx}-5v_{3xxx}
-\frac{5}{2}v_{5x}-40v_{1xx}v^3_{0}\\
&+15v_{3xx}v_{0}+30v_{3x}v_{0x}+15v_{0xx}v_{3}-5v_{1xxxx}v_{0}-20v_{1xxx}v_{0x}
-30v_{1xx}v_{0xx}-20v_{1x}v_{0xxx}\\
&-5v_{1}v_{0xxxx}-16v_1v_0^5+20v_{3}v^3_{0}-5v_{5}v_{0},\\
&\vdots\nonumber
\end{align*}
\end{itemize}

The corresponding Lax equations can be shown below,
\begin{align*}
v_{0_{t3}}=&3v_1v_{0x}+3v_{1x}v_0+\frac{3}{2}v_{1xx}+3v_{0x}^2+3v_0v_{0xx}+v_{0xxx}+3v_0^2v_{0x}\\
v_{1_{t3}}=&3v_{3x}-3v_{1x}v_0^2+6v_1v_{1x}-6v_{1x}v_{0x}-\frac{1}{2}v_{1xxx}-3v_{1xx}v_0-3v_1v_{0xx}-6v_1v_0v_{0x}\\
v_{0_{t5}}=&-10v_1v_0v_{0xx}-30v_1v^2_0v_{0x}
+20v_1v_0v_{1x}-40v_{1x}v_0v_{0x}+50v_0v_{0x}v_{0xx}+
15v_0^2v_{1xx}-10v_1v_{1xx}\\
&+5v_0^4v_{0x}+30v_0^2v_{0x}^2-15v_{1xx}v_{0}^2+10v_{0}^3v_{0xx}
+10v_{0xxx}v_{0}^2+30v_0^3v_{1x}+30v_0^2v_1v_{0x}+\frac{15}{4}v_{1xxxx}\\
&+10v_{3xx}+20v_3v_{0x}+20v_0v_{3x}^2+20v_1v_{1xx}-10v_1v_{0x}^2+10v_{1x}^2+v_{0xxxxx}+10v_{0xx}^2+15v_{0x^3}\\
&+\frac{15}{2}v_1v_{0xxx}+\frac{35}{2}v_{1x}v_{0xx}+\frac{25}{2}v_{1xx}v_{0x}+\frac{5}{2}v_{1xxx}v_0+15v_{0x}v_{0xxx}+5v_0v_{0xxxx}+10v^2_1v_{0x}\\
&-10v_0^3v_{1x}.
\end{align*}
If eliminating $v_1$ and $v_3$, one can obtain the differential equation of $v_0$. But this equation is really too cumbersome so that there is no much meaning. So we omit it here. Next we will give another form of the modified CKP equation via the following Miura links between the modified CKP hierarchy and the CKP hierarchy. By direct computation, one can get the proposition below.
\begin{proposition}
If $q$ is an eigenfunction of the CKP hierarchy and $L$ is the corresponding Lax operator, then
\begin{align}
\mathcal{L}=q^{-1}Lq\label{relation}
\end{align}
is the Lax operator of the modified CKP hierarchy. Conversely, suppose $z_0$ is the $\partial^0$-terms in the dressing operator $Z$ of the modified CKP hierarchy, then $q\triangleq z_0^{-1}$ is the eigenfunction of the CKP hierarchy satisfying $q_{t_n}=(L^n)_{\geq 0}(q)$, and $L=q \cdot Z\pa Z^{-1}\cdot q^{-1}$ is the Lax operator of the CKP hierarchy.
\end{proposition}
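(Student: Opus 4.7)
The plan is to verify both directions by direct pseudo-differential calculus. The central tool is the elementary conjugation identity $(qPq^{-1})_{\geq 0} = qP_{\geq 0}q^{-1}$ valid for any pseudo-differential operator $P$ and any nowhere-vanishing function $q$; it follows by comparing $\partial^m$-coefficients after expanding $\partial^k q^{\pm 1}$ via the formal Leibniz rule, since negative powers of $\partial$ applied to $q^{\pm 1}$ contribute only to $\partial^{\leq -1}$.

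For the forward direction, I set $\mathcal{L} := q^{-1}Lq$ and use $\partial q = q\partial + q_x$ to see that $\mathcal{L} = \partial + v_0 + \mathcal{O}(\partial^{-1})$ with $v_0 = q^{-1}q_x = (\log q)_x$, identifying $e^{2\int v_0\,dx}$ with $q^2$ up to a constant. Since the multiplication operator $q$ is self-adjoint and $L^* = -L$, one computes $\mathcal{L}^* = qL^*q^{-1} = -qLq^{-1} = -q^2\mathcal{L} q^{-2}$, which is the twisted anti-symmetry of Proposition~4. For the Lax equation, $\mathcal{L}^n = q^{-1}L^nq$ together with the conjugation identity gives $(\mathcal{L}^n)_{\geq 0} = q^{-1}(L^n)_{\geq 0}q$, whence $(\mathcal{L}^n)_{\geq 1} = q^{-1}B_n q - q^{-1}B_n(q)$ where $B_n := (L^n)_{\geq 0}$. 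Substituting into $[(\mathcal{L}^n)_{\geq 1},\mathcal{L}]$ and into $\partial_{t_n}(q^{-1}Lq) = (q^{-1})_{t_n}Lq + q^{-1}[B_n,L]q + q^{-1}Lq_{t_n}$ (using $L_{t_n} = [B_n,L]$ and $q_{t_n} = B_n(q)$), the $B_n(q)$-correction in the commutator absorbs exactly the two outer time-derivative contributions, yielding $\mathcal{L}_{t_n} = [(\mathcal{L}^n)_{\geq 1},\mathcal{L}]$.

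For the converse, set $q := z_0^{-1}$. Reading off the $\partial^0$-coefficient of the $Z$-evolution $Z_{t_n} = -(\mathcal{L}^n)_{\leq 0}Z$ gives $z_{0,t_n} = -[\mathcal{L}^n]_{\partial^0}\,z_0$, hence $q_{t_n} = q\cdot[\mathcal{L}^n]_{\partial^0}$. Define $L := q\mathcal{L} q^{-1}$; the conjugation identity then gives $(L^n)_{\geq 0} = q(\mathcal{L}^n)_{\geq 0}q^{-1}$, so $(L^n)_{\geq 0}(q) = q(\mathcal{L}^n)_{\geq 0}(1) = q\cdot[\mathcal{L}^n]_{\partial^0} = q_{t_n}$, which is the eigenfunction equation. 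The twisted anti-symmetry $\mathcal{L}^* = -z_0^{-2}\mathcal{L} z_0^2$ together with $z_0 = q^{-1}$ yields $L^* = q^{-1}\mathcal{L}^*q = -q\mathcal{L} q^{-1} = -L$. Anti-self-adjointness of $L^n$ for odd $n$ now forces $[L^n]_{\partial^0} = 0$, which via the conjugation identity is equivalent to $(\mathcal{L}^n)_{\geq 0}(q^{-1}) = 0$; hence $(L^n)_+ = (L^n)_{\geq 1} = (L^n)_{\geq 0} = q(\mathcal{L}^n)_{\geq 0}q^{-1}$. Finally, differentiating $L = q\mathcal{L} q^{-1}$ in $t_n$ and substituting $\mathcal{L}_{t_n} = [(\mathcal{L}^n)_{\geq 1},\mathcal{L}]$ gives $L_{t_n} = [(L^n)_+,L]$, since the gap $(\mathcal{L}^n)_{\geq 0}-(\mathcal{L}^n)_{\geq 1} = [\mathcal{L}^n]_{\partial^0}$ contributes precisely the commutator absorbing the $q_{t_n}$- and $(q^{-1})_{t_n}$-terms.

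The principal obstacle is reconciling the ``$\geq 1$'' splitting natural to the modified CKP Lax equation with the ``$+$'' (differential) splitting natural to the CKP Lax equation: these match only because $L^* = -L$ forces $[L^n]_{\partial^0} = 0$ for odd $n$, which via the conjugation identity is precisely the condition $(\mathcal{L}^n)_{\geq 0}(q^{-1}) = 0$ needed for $(L^n)_+ = (L^n)_{\geq 0}$.
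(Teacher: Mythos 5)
Your route is essentially the paper's: the paper disposes of this proposition in one line by citing the Oevel--Rogers formula $(q^{-1}Aq)_{\geq 1}=q^{-1}\cdot A_{\geq 0}\cdot q-q^{-1}\cdot A_{\geq 0}(q)$, which is precisely your conjugation identity $(qPq^{-1})_{\geq 0}=qP_{\geq 0}q^{-1}$ followed by reading off the $\partial^{0}$-coefficient of the resulting differential operator. Your forward direction and most of the converse carry this out correctly and in more detail than the paper does.

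There is, however, one false step in the converse, and since you present it as the ``principal obstacle'' it has to be flagged. The claim that $L^{*}=-L$ forces $[L^{n}]_{\partial^{0}}=0$ for odd $n$ is wrong: anti-self-adjointness does not kill the $\partial^{0}$-coefficient of a differential operator (the general anti-self-adjoint first-order operator is $a\partial+\tfrac12 a_{x}$, whose $\partial^{0}$-part is $\tfrac12 a_{x}$). Concretely, for the CKP Lax operator $L=\partial+u_{1}\partial^{-1}+\cdots$ one finds $[L^{3}]_{\partial^{0}}=\tfrac32 u_{1,x}$, which is visible in the paper's own formula $q_{t_{3}}=q_{xxx}+3u_{1}q_{x}+\tfrac32 u_{1x}q$ printed right after the proposition. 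Hence $(L^{n})_{\geq 1}\neq(L^{n})_{\geq 0}$ in general and the chain $(L^{n})_{+}=(L^{n})_{\geq 1}=(L^{n})_{\geq 0}$ is broken in the middle; likewise $(\mathcal{L}^{n})_{\geq 0}(q^{-1})\neq 0$. Fortunately the assertion is not needed: in this paper $(L^{n})_{+}$ simply denotes $(L^{n})_{\geq 0}$ (the two notations are used interchangeably, compare $q_{t_{n}}=(L^{n})_{+}(q)$ with $q_{i,t_{n}}=(L^{n})_{\geq 0}(q_{i})$), so there is no splitting discrepancy to reconcile on the $L$-side. The entire reconciliation happens on the $\mathcal{L}$-side, exactly as your final clause describes: with $q(\mathcal{L}^{n})_{\geq 1}q^{-1}=(L^{n})_{\geq 0}-[\mathcal{L}^{n}]_{\partial^{0}}$ and $q_{t_{n}}q^{-1}=[\mathcal{L}^{n}]_{\partial^{0}}$, the nonzero scalar $[\mathcal{L}^{n}]_{\partial^{0}}$ cancels between the conjugated commutator and the $q_{t_{n}}$-, $(q^{-1})_{t_{n}}$-contributions, leaving $L_{t_{n}}=[(L^{n})_{\geq 0},L]$. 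Deleting the $[L^{n}]_{\partial^{0}}=0$ step leaves a correct proof. (A side remark: your $\mathcal{L}^{*}=-q^{2}\mathcal{L}q^{-2}=-z_{0}^{-2}\mathcal{L}z_{0}^{2}$ is the form actually forced by $Z^{*}=Z^{-1}z_{0}^{2}$; the opposite placement of the exponential factors in the displayed statement of the modified CKP proposition appears to be a slip there, not in your computation.)
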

\begin{proof}
In fact, this proposition can be easily proved by the formula \cite{Oevelrmp1993}
\begin{align*}
(q^{-1}Aq)_{\geq 1}=q^{-1}\cdot A_{\geq 0}\cdot q-q^{-1}\cdot A_{\geq 0}(q),
\end{align*}
where $A$ is a pseudo-differential operator.
\end{proof}
Note that $q_{t_n}=(L^n)_+(q)$ with $L=\pa+u_1\pa^{-1}+u_2\pa^{-2}+\cdots$ be the CKP Lax operator. If set $n=3$ and $n=5$, one can obtain
\begin{align*}
q_{t3}&=q_{xxx}+3u_1q_x+\frac{3}{2}u_{1x}q,\quad u_{1t3}=-\frac{1}{2}u_{1xxx}+3u_{3x}+6u_1u_{1x},\\
q_{t5}&=q_{xxxxx}+5u_1q_{xxx}+\frac{15}{2}u_{1x}q_{xx}+(5u_3+10u_1^2+5u_{1xx})q_x
+(10u_1u_{1x}+\frac{5}{4}u_{1xxx}+\frac{5}{2}u_{3x})q.
\end{align*}
By eliminating $u_1$ and $u_3$ in the relations above and setting $t_3=y$, $t_5=t$, one can obtain another form of modified CKP equation
\begin{align*}
(q_x^{-1}q_t)_x&=\frac{5}{9}\big(12q^{-3}q_x^{2}q_{xxx}-12q^{-3}q_x^2q_y-8q^{-2}q^{2}_{x}q_{xxx}+4q^{-2}q_xq_{xy}-4q^{-2}q_xq_{xxxx}+8q^{-2}q_{xx}q_{y}\\
&+3q^{-1}q_x^{-1}q_y^{2}+3q^{-1}q_x^{-1}q_{xxx}^{2}-q^{-1}q_{xxy}+q^{-1}q_{xxxxxx}+6q^{-1}q_{x}^{-1}q_{xxx}q_{y}-q^{-1}q_{x}^{-1}q_{xx}q_{xy}\\
&-q^{-1}q_{x}^{-2}(q_{xx})^2q_{xxx}+q^{-1}q_{x}q^{-2}_{x}q^{2}_{xx}q_{y}+q^{-1}q_{x}^{-1}q_{xx}q_{xxxx}+5q_x^{-2}q_{xx}q_{xxxxx}-2q^{-2}_{x}q_{xxy}q_{xx}\\
&-q_x^{-1}q_{yy}+q_x^{-1}q_{xxxy}-5q_x^{-1}q_{xxxxxx}+24q^{-5}q_x^3\int qq_ydx+16q^{-5}q_x(\int qq_ydx)^2\\
&+24q^{-5}q_{x}^3\int qq_{xxx}dx+16q^{-5}q_x(\int qq_{xxx}dx)^2-32q^{-5}\int qq_{y}\int qq_{xxx}dx\\
&+24q^{-4}q_{x}q_{xx}\int qq_{y}dx+24q^{-4}q_{x}q_{xx}\int qq_{xxx}dx-16q^{-3}q_y\int qq_ydx\\
&+16q^{-3}q_y\int qq_{xxx}dx+16q^{-3}q_{xxx}\int qq_{y}dx-16q^{-3}q_{xxx}\int qq_{xxx}dx\\
&+q^{-2}\int q_y^2dx+q^{-2}\int qq_{yy}dx-q^{-2}\int qq_{xxxy}dx-q^{-2}\int q_yq_{xxx}dx\\
&+2q^{-2}q_{x}^{-2}q_{xx}q_{xxx}\int qq_{y}dx-2q^{-2}q_{x}^{-2}q_{xx}q_{xxx}\int qq_{xxx}dx-2q^{-2}q_{x}^{-2}q_{xx}q_{y}\int qq_{y}dx\\
&+2q^{-2}q_{x}^{-2}q_{xx}q_{y}\int qq_{xxx}dx-2q^{-2}q_{x}^{-1}q_{xy}\int qq_{xxx}dx-2q^{-2}q_{x}^{-1}q_{xxxx}\int qq_{y}dx\\
&+2q^{-2}q_{x}^{-1}q_{xxxx}\int qq_{xxx}dx+2q^{-2}q^{-1}_{x}q_{xy}\int qq_ydx-q^{-1}q^{-2}_{x}q_{xx}\int qq_{yy}dx\\
&
+q^{-1}q^{2}_{x}q_{xx}\int qq_{xxxy}dx-q^{-1}q^{-2}_{x}q_{xx}\int q_y^2dx+q^{-1}q^{-2}_{x}q_{xx}\int q_yq_{xxx}dx
\big).
\end{align*}
\subsection{The constrained CKP hierarchy}
Another important object in this section is the $k$-constrained CKP hierarchy ($k$ is a fixed odd positive integer)\cite{Hewz2007,Loris1999}, defined by the following constraint on the CKP Lax operator,
\begin{equation}\label{cCKPL}
    L^{k}=\pa^{k}+\sum_{i=2}^ka_{i}\pa^{k-i}+q_1\pa^{-1}q_2
    +q_{2}\pa^{-1}q_1,
\end{equation}
where $q_1(t)$ and $q_2(t)$ are two independent eigenfunctions of the CKP hierarchy. So we can find that the system of $k$-constrained CKP hierarchy contains the data $(L,q_1,q_2)$ satisfying \eqref{cCKPL}, \eqref{ckplaxeq} and $q_{i,t_n}=(L^n)_{\geq 0}(q_i)$ with $i=1,2$.
When $k=1$, if we set $q_1=q_2=q$, then one can obtain the modified KdV equation $q_{t_3}=12q^2q_{x}+q_{xxx}$.

The bilinear equations of the constrained CKP hierarchy, which can be viewed as a special case of the ones for BC$_r$-KP hierarchy \cite{Geng2019}, is given in the proposition below.
\begin{theorem}\label{ckpequation}
The system of $k$-constrained CKP hierarchy \eqref{cCKPL} is equivalent to the following bilinear equations
\begin{align}
{\rm Res}_\lambda& \lambda^{k}w(t,\lambda)w(t',-\lambda)=q_1(t)q_2(t')+q_2(t)q_1(t'),\label{cbkpbilinearwave1}\\
{\rm Res}_\lambda&w(t,\lambda)\Omega(q_i(t'),w(t',-\lambda))=-q_i(t),\quad i=1,2.\label{cbkpbilinearwave2}
\end{align}
\end{theorem}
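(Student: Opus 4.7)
The plan is to prove both directions by reducing the bilinear residues to pseudo-differential operator expressions via Lemma \ref{newAB}, and then compare Taylor coefficients in $x-x'$. The key identity I would use repeatedly is that with $A=W\pa^{k}$ and $B=W$ (for $k$ odd) one has
\begin{align*}
\text{Res}_\lambda \lambda^{k}w(t,\lambda)\,w(t',-\lambda)\Big|_{\hat t'=\hat t}
=(W\pa^{k})W^{*}\pa_{x}(\Delta^{0})
=L^{k}WW^{*}\pa_{x}(\Delta^{0})
=L^{k}\pa_{x}(\Delta^{0}),
\end{align*}
where $\hat t=(t_3,t_5,\cdots)$ and $WW^{*}=1$ is the CKP constraint.

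For the direction ($\Rightarrow$), equation \eqref{cbkpbilinearwave2} is immediate from \eqref{qspec}, which was already proven for any CKP eigenfunction $q$. For \eqref{cbkpbilinearwave1}, first specialize to $\hat t'=\hat t$ and plug the assumed form $L^{k}=D+q_1\pa^{-1}q_2+q_2\pa^{-1}q_1$ into $L^{k}\pa_{x}(\Delta^{0})$. The differential part $D$ annihilates $\pa_{x}(\Delta^{0})$ since $\pa^{n}\pa_{x}(\Delta^{0})=0$ for $n\geq 0$. Using $\pa^{-1}q_i=\sum_{j\geq 0}(-1)^{j}q_i^{(j)}\pa^{-j-1}$ together with $\pa^{-j-1}\pa_{x}(\Delta^{0})=(x-x')^{j}/j!$, the surviving contribution equals $\sum_{j\geq 0}(-1)^{j}[q_1 q_2^{(j)}+q_2 q_1^{(j)}]\frac{(x-x')^{j}}{j!}$, which is precisely the Taylor expansion of $q_1(x,\hat t)q_2(x',\hat t)+q_2(x,\hat t)q_1(x',\hat t)$ about $x'=x$. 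To pass from $\hat t'=\hat t$ to the full bilinear identity, I would invoke the standard argument: both sides, viewed as functions of $\hat t'-\hat t$, satisfy the same system of flow equations (the CKP Lax evolution for $w(t',-\lambda)$ on the left and the eigenfunction equations $q_{i,t_n}=(L^{n})_{\geq 0}(q_i)$ on the right), and they agree at $\hat t'=\hat t$, hence they agree everywhere.

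For the converse ($\Leftarrow$), equation \eqref{cbkpbilinearwave2} for $i=1,2$, combined with Lemma \ref{2wave} and the argument used to obtain \eqref{qspec}, implies that each $q_i$ is an eigenfunction of the CKP hierarchy, i.e.\ $q_{i,t_n}=(L^{n})_{\geq 0}(q_i)$. For the structural constraint \eqref{cCKPL}, specialize \eqref{cbkpbilinearwave1} to $\hat t'=\hat t$ and use the key identity above to get $L^{k}\pa_{x}(\Delta^{0})=q_1(x)q_2(x')+q_2(x)q_1(x')$. Write $L^{k}=(L^{k})_{\geq 0}+\sum_{j\geq 1}b_j\pa^{-j}$; the differential part contributes zero, leaving $\sum_{j\geq 1}b_j\,(x-x')^{j-1}/(j-1)!$ on the left. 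Matching this with the Taylor expansion of the right-hand side in $(x-x')$, exactly as carried out in the forward direction, reads off $\sum_{j\geq 1}b_j\pa^{-j}=q_1\pa^{-1}q_2+q_2\pa^{-1}q_1$, which is \eqref{cCKPL}. The antisymmetry $(L^{k})^{*}=-L^{k}$, already built into $L^{*}=-L$, is automatically consistent with this form since $(q_1\pa^{-1}q_2+q_2\pa^{-1}q_1)^{*}=-(q_2\pa^{-1}q_1+q_1\pa^{-1}q_2)$.

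The main obstacle, and the step I would be most careful with, is the $\hat t'\neq \hat t$ extension in both directions: Lemma \ref{newAB} is formulated only for the $x$-variable, so one must justify that the Taylor extension in $\hat t'-\hat t$ is driven by the CKP flows consistently on the two sides. A secondary technical point, which is needed in the $(\Leftarrow)$ direction, is to verify that the $q_i$ recovered from \eqref{cbkpbilinearwave2} coincide with the coefficients appearing in \eqref{cbkpbilinearwave1}; this is done by examining the $\lambda^{-1}$ expansion of $\Omega(q_i(t'),w(t',-\lambda))$ as in \eqref{omegaexpansion} and matching with the leading $\lambda^{-1}$ term arising from $L^{k}\pa_{x}(\Delta^{0})=2q_1q_2+\mathcal{O}(x-x')$.
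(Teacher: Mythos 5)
The paper does not actually prove Theorem \ref{ckpequation}: it only remarks that the statement ``can be viewed as a special case'' of the bilinear identities for the BC$_r$-KP hierarchy and defers entirely to \cite{Geng2019}. So there is no internal proof to compare against; what you have written is a self-contained argument along the standard lines of that reference, and it is essentially correct. Your key reduction via Lemma \ref{newAB}, namely ${\rm Res}_\lambda \lambda^k w(t,\lambda)w(t',-\lambda)\big|_{\hat t'=\hat t}=W\pa^kW^*\pa_x(\Delta^0)=L^k\pa_x(\Delta^0)$ (using $W^*=W^{-1}$), together with the observation that a differential operator kills $\pa_x(\Delta^0)$ while $\pa^{-j-1}\pa_x(\Delta^0)=(x-x')^j/j!$ reproduces the Taylor expansion of $q_1(x)q_2(x')+q_2(x)q_1(x')$, is exactly the mechanism that makes \eqref{cbkpbilinearwave1} equivalent to the negative-order part of \eqref{cCKPL}; the converse read-off of $b_{j}$ is the same computation run backwards. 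Two points deserve more care than you give them. First, in the extension from $\hat t'=\hat t$ to general $t'$ you need that $w(t',-\lambda)$ (not just $w(t',\lambda)$) evolves by $\pa_{t_n'}=(L'^n)_{\geq 0}$ in $x'$; this does hold, since $\pa_{t_n}e^{\xi(t,-\lambda)}=(-\lambda)^ne^{\xi(t,-\lambda)}=\pa_x^n e^{\xi(t,-\lambda)}$ and \eqref{dressingevolution} then gives $\pa_{t_n}w(t,-\lambda)=(L^n)_{+}(w(t,-\lambda))$, but it should be stated, as it is the only reason the induction on $t'$-derivatives closes. Second, in the converse direction the symbol $\Omega(q_i(t'),w(t',-\lambda))$ is defined in the paper only for $q_i$ already known to be an eigenfunction, so \eqref{cbkpbilinearwave2} must be interpreted as postulating a function with the expansion \eqref{omegaexpansion} and $x'$-derivative $q_i(t')w(t',-\lambda)$; once that is made explicit, your appeal to the spectral-representation argument behind \eqref{qspec} and Lemma \ref{2wave} does yield $q_{i,t_n}=(L^n)_{\geq 0}(q_i)$. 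With these two clarifications the proposal is a complete proof, and in fact supplies a detail the paper leaves to the literature.
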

{\noindent\bf Remark}: Note that there is the structure of the modified CKP hierarchy in the constrained CKP hierarchy.

If denote $\varrho_i(t)=q_i(t)\tau_0(t)$ and notice that $a_i=a_i(w_1,\cdots,w_{i-1})$ and $w_i=w_i(\tau_0,\tau_{(3,1)},\cdots,\tau_{(2i-1)})$, then it can be found the system of the $k$-constrained CKP hierarchy involves $k+1$ functions, that is,
$\tau_0,\tau_{(3,1)},\cdots,\tau_{(2k-3,1)}, \varrho_1, \varrho_2$.
Next the solutions of the $k$-constrained CKP hierarchy are given in the proposition below
\begin{proposition}\label{propsolution}
If $g\in Sp_\infty$ satisfies
\begin{eqnarray*}\label{flj}
g^{-1}H_{k}g=\sum_{l,j\in\mathbb{Z}}f_{l,j}
\phi_{l-\frac{1}{2}}\phi_{j-\frac{1}{2}},
\end{eqnarray*}
where the constant $f_{l,j}$ can be decomposed into
\begin{eqnarray}
f_{l,j}=-(d_{l}e_{j}+e_{j}d_{l})\quad \text{for $j,l\geq1$},
\end{eqnarray}
then the following expressions
\begin{align*}
\tau_0(t)&=\langle 0|e^{H(t)}g|0\rangle,\quad
\tau_{(2i-1,1)}(t)=\langle 0| H_{\frac{2i-1}{2}}H_{\frac{1}{2}}e^{H(t)}g|0\rangle, \quad i=2,3,\cdots, k-1,\\
\varrho_1(t)&=2\langle1|e^{H(t)}g\sum_{l\geq1}d_l\phi_{l-\frac{1}{2}}|0\rangle,\quad
\varrho_2(t)=2\langle1|e^{H(t)}g\sum_{j\geq1}e_j\phi_{j-\frac{1}{2}}|0\rangle,
\end{align*}
satisfy the bilinear equations (\ref{cbkpbilinearwave1}).
\end{proposition}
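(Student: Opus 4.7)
The strategy is to start from the Bosonic CKP identity $S(g|0\rangle\otimes g|0\rangle)=0$, twist it with the Heisenberg generator $H_k$ to produce the factor $z^k$ on the left-hand side of \eqref{cbkpbilinearwave1}, and then use the hypothesis on $g^{-1}H_kg$ to identify the right-hand side. Equation \eqref{cbkpbilinearwave2} is then automatic: by Lemma \ref{gphiginverse}, $gDg^{-1}$ and $gEg^{-1}$ lie in $V$ for $D=\sum_{l\geq 1}d_l\phi_{l-1/2}$ and $E=\sum_{j\geq 1}e_j\phi_{j-1/2}$, so Lemma \ref{qboson} identifies $q_1,q_2$ as CKP eigenfunctions and Proposition \ref{omegaqqboson} together with the spectral representation then yields \eqref{cbkpbilinearwave2}.

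The key commutator is $[H_k,\phi(z)]=z^k\phi(z)$, which follows directly from Proposition \ref{fockiso}: in the $\sigma$-picture $H_k$ acts as $\partial/\partial t_k$, and $\sigma\phi(z)\sigma^{-1}$ begins with $\exp(\sum_j t_j z^j)$. This lifts to $[H_k\otimes 1,\,S]={\rm Res}_z\,z^k\phi(z)\otimes\phi(-z)$, so applying $H_k\otimes 1$ to $S(g|0\rangle\otimes g|0\rangle)=0$ gives
\[
{\rm Res}_z\,z^k\,\phi(z)g|0\rangle\otimes\phi(-z)g|0\rangle=-S\bigl(H_kg|0\rangle\otimes g|0\rangle\bigr).
\]
Writing $H_kg|0\rangle=g(g^{-1}H_kg)|0\rangle$ and inserting the hypothesis, together with the vacuum condition $\phi_{-n}|0\rangle=0$ for $n>0$ (which kills all monomial contributions with $l\leq 0$ or $j\leq 0$, up to scalar contractions that vanish under the symmetric decomposition), I obtain $H_kg|0\rangle=-g(DE+ED)|0\rangle$. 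Since $[S,g\otimes g]=0$, I pull $g\otimes g$ outside $S$ and apply Lemma \ref{lemmas} first with $\beta_{\vec{\bf n}}=DE$ and then with $\beta_{\vec{\bf n}}=ED$ to get
\[
{\rm Res}_z\,z^k\phi(z)g|0\rangle\otimes\phi(-z)g|0\rangle=2\bigl(gD|0\rangle\otimes gE|0\rangle+gE|0\rangle\otimes gD|0\rangle\bigr).
\]

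The final step is to contract both sides with $\langle 1|e^{H(t)}\otimes\langle 1|e^{H(t')}$ and invoke $w(t,z)=2\langle 1|e^{H(t)}\phi(z)g|0\rangle/\tau_0(t)$ together with $\varrho_1(t)=2\langle 1|e^{H(t)}gD|0\rangle$ and $\varrho_2(t)=2\langle 1|e^{H(t)}gE|0\rangle$; dividing by $\tau_0(t)\tau_0(t')$ and using $q_i=\varrho_i/\tau_0$ reproduces \eqref{cbkpbilinearwave1}. The main obstacle I foresee is the careful bookkeeping of the numerical constants introduced by all the conventions in play: the coefficient in $[H_k,\phi(z)]$ arising from the CKP Heisenberg normalization in Proposition \ref{fockiso}, the factors of $\tfrac12$ and $2$ in the definitions of $w$ and $\varrho_i$, and the multiplicity produced by Lemma \ref{lemmas} on the symmetric combination $DE+ED$ must all line up. A secondary subtlety is verifying that components of $g^{-1}H_kg$ with $l\leq 0$ or $j\leq 0$ make no contribution on $|0\rangle$, which uses $\phi_{-n}|0\rangle=0$ and the commutativity of free Bosons at unpaired indices.
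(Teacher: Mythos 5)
Your argument is correct in substance, but it reaches \eqref{cbkpbilinearwave1} by a genuinely different mechanism than the paper. The paper never touches the commutator $[H_k,\phi(z)]=z^k\phi(z)$ or Lemma \ref{lemmas}: it expands ${\rm Res}_\lambda \lambda^k w(t,\lambda)w(t',-\lambda)$ directly in Fourier modes $\phi_{n-\frac12}\otimes\phi_{-n-k+\frac12}$, pushes each mode through $g$ using the matrix $a^{-1}$ of Lemma \ref{gphiginverse} (only the modes $\phi_{j-\frac12}|0\rangle$ with $j\geq 1$ survive), and then observes that the resulting coefficient $\sum_n(-1)^n(a^{-1})_{j,n}(a^{-1})_{l,-n-k+1}$ is exactly $2f_{jl}$ because the mode sum in the residue is the same quadratic form as $H_k=\frac12\sum_n(-1)^n\phi_{n-\frac12}\phi_{-n-k+\frac12}$. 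Your route --- applying $H_k\otimes 1$ to $S(g|0\rangle\otimes g|0\rangle)=0$, converting the commutator $[H_k\otimes 1,S]$ into the $z^k$ weight, writing $H_kg|0\rangle=g(g^{-1}H_kg)|0\rangle$ and then peeling $DE+ED$ off with Lemma \ref{lemmas} --- is more conceptual and avoids all matrix bookkeeping; the paper's is more elementary but requires recognizing the coefficient identity by hand. Two small points. First, the unpaired contributions with $l\leq 0$, $j\geq 1$ do not literally vanish: they contract to a scalar multiple of $g|0\rangle$, and are harmless only because $S(g|0\rangle\otimes g|0\rangle)=0$ kills them; your phrasing should say that rather than that they ``vanish under the symmetric decomposition.'' Second, if one tracks all the factors of $2$ honestly (the $2$'s in $w$ and $\varrho_i$, the $\frac12$ in $H_k$, the doubling from $DE+ED$ in Lemma \ref{lemmas}), both your computation and the paper's land on $2\big(q_1(t)q_2(t')+q_2(t)q_1(t')\big)$ rather than the stated right-hand side; this overall constant is a normalization issue already present in the paper's own proof (which silently drops the factor $4$ coming from $w=2\langle 1|e^{H(t)}\phi(\lambda)g|0\rangle/\tau_0$) and does not affect the structure of either argument.
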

\begin{proof}
Firstly by using Lemma \ref{gphiginverse}, there exists matrix $(a_{m,n})_{m,n\in\mathbb{Z}}$ such that
\begin{eqnarray}
\phi_{n-\frac{1}{2}}g=\sum_{n\in\mathbb{Z}}(a^{-1})_{m,n}g\phi_{m-\frac{1}{2}},\label{phig}
\end{eqnarray}
therefore
\begin{align*}
{\rm Res}_\lambda\lambda^{k}w(t,\lambda)w(t',-\lambda)=&-\sum_{n\in\mathbb{Z}}(-1)^{n}\frac{\langle 1|e^{H(t)} \phi_{n-\frac{1}{2}}g|0\rangle}{\langle
0|e^{H(t)}g|0\rangle}\frac{\langle 1|e^{H(t')} \phi_{-k-n+\frac{1}{2}}g|0\rangle}{\langle
0|e^{H(t')}g|0\rangle}\\
=&-\sum_{n\in\mathbb{Z}}\sum_{j,l\geq 1}(-1)^{n}(a^{-1})_{j,n}(a^{-1})_{l,-n-k+1}\frac{\langle 1|e^{H(t)} g\phi_{j-\frac{1}{2}}|0\rangle}{\langle
0|e^{H(t)}g|0\rangle}\frac{\langle 1|e^{H(t')} g\phi_{l-\frac{1}{2}}|0\rangle}{\langle
0|e^{H(t')}g|0\rangle}.
\end{align*}
On the other hand by (\ref{phig}) and $H_k=\frac{1}{2}\sum_{n\in\mathbb{Z}}(-1)^n\phi_{n-\frac{1}{2}}\phi_{-n-k+\frac{1}{2}}$, we find that
\begin{eqnarray*}
\sum_{n\in\mathbb{Z}}(-1)^{n}(a^{-1})_{j,n}(a^{-1})_{l,-n-k+l}=2f_{jl}=-2(d_{j}e_{l}+e_{j}d_{l}).
\end{eqnarray*}
So if set $\varrho_1(t)=2\langle1|e^{H(t)}g\sum_{l\geq1}d_l\phi_{l-\frac{1}{2}}|0\rangle$,
$\varrho_2(t)=2\langle1|e^{H(t)}g\sum_{j\geq1}e_j\phi_{j-\frac{1}{2}}|0\rangle$, one can obtain \eqref{cbkpbilinearwave1}.
\end{proof}
Let us give an example to show the existence of $g$ in Proposition \ref{propsolution}. Assume $k=3$ and $g=e^{-a\phi_{\frac{7}{2}}\phi_{\frac{9}{2}}}\in Sp_\infty$, where $a$ is a constant, then
\begin{eqnarray}
g^{-1}H_{k}g=H_{k}-a(\phi_{\frac{7}{2}}\phi_{\frac{3}{2}}+\phi_{\frac{1}{2}}\phi_{\frac{9}{2}}).
\end{eqnarray}
It can be found that there are only two values for $f_{l,j}$ with $l,j\geq1$, that is,  $f_{4,2}=-a$ and $f_{1,5}=-a$. So we can choose $d_{1}=d_4=a$, $d_2=d_5=0$ and $e_{2}=e_5=1$, $e_1=e_4=0$. Then the corresponding $\varrho_1$ and $\varrho_2$ can be given by
\begin{eqnarray*}
\varrho_1(t)=2\langle1|e^{H(t)}ga(\phi_{\frac{1}{2}}+\phi_{\frac{7}{2}})|0\rangle,\quad
\varrho_2(t)=2\langle1|e^{H(t)}g(\phi_{\frac{3}{2}}+\phi_{\frac{9}{2}})|0\rangle.
\end{eqnarray*}

\section{The Darboux transformation of the CKP hierarchy via free Bosons}
In this section, we firstly review some classical results of the squared eigenfunction symmetry and Darboux transformation for the KP hierarchy. It is found that the infinitesimal element of binary Darboux transformation is just the squared eigenfunction symmetry. Then based upon these results, the Darboux transformation of the CKP hierarchy is formulated in the language of free Bosons, which can help to better understand the essential properties of the CKP hierarchy.
\subsection{Squared eigenfunction symmetry via binary Darboux transformation}
In this subsection, we will investigate the relations of squared eigenfunction symmetries and binary Darboux transformations in the KP case. Firstly, consider the following additional flows $\pa^*_{q,r}$ on the Lax operator $\mathfrak{L}$ and the dressing operator $\mathfrak{D}$ of the KP hierarchy\cite{Oevelpa1993,Oevel1998,Aratyn},
$$\pa^*_{q,r}\mathfrak{L}=[q\pa^{-1}r,\mathfrak{L}],\quad \pa^*_{q,r}\mathfrak{D}=q\pa^{-1}r\mathfrak{D},$$
where the eigenfunction $q(\mathfrak{t})$ and the adjoint eigenfunction $r(\mathfrak{t})$ with $\mathfrak{t}=(t_1,t_2,\cdots)$ are defined in the way below,
$$q_{t_n}=\mathfrak{L}^n_{\geq k}(q),\ r_{t_n}=-\Big(\mathfrak{L}^n_{\geq k}\Big)^*(r),\quad n=1,2,3\cdots.$$
It can be proved that
$$[\pa^*_{q,r},\pa_{t_n}]=0,$$
therefore $\pa^*_{q,r}$ is a kind of symmetry flow called squared eigenfunction symmetry. The actions of $\pa^*_{q,r}$ on another eigenfunction $q'$ and adjoint eigenfunction $r'$ are given by \cite{Oevelpa1993,Aratyn}
\begin{align}
&\pa^*_{q,r}q'=q\Omega(q',r),\quad \pa^*_{q,r}r'=\Omega(q,r'),\nonumber\\
&\pa^*_{q,r}\Omega(q',r')=\Omega(q,r')\Omega(q',r),\quad \pa^*_{q,r}\tau_{\rm KP}=-\Omega(q,r)\tau_{\rm KP}.\label{paqrobj}
\end{align}

\begin{lemma}\label{paqrcomm}
Up to some constants in SEP,
\begin{align}
[\pa^*_{q,r},\pa^*_{q',r'}]\mathfrak{N}=0,\quad \mathfrak{N}\in \{\mathfrak{L},\mathfrak{D},q''(\mathfrak{t}),r''(\mathfrak{t}),\psi(\mathfrak{t},\lambda), \psi^*(\mathfrak{t},\lambda), \Omega(q'',r''),\tau_{\rm KP}(\mathfrak{t})\},
\end{align}
where $q,q',q''$ and $r,r',r''$ are the eigenfunctions and adjoint eigenfunctions, and $\psi(\mathfrak{t},\lambda)=\mathfrak{D}(e^{\eta(\mathfrak{t},\lambda)})$, $\psi^*(\mathfrak{t},\lambda)=\mathfrak{D}^{-1*}(e^{-\eta(\mathfrak{t},\lambda)})$ with $\eta(\mathfrak{t},\lambda)=\sum_{i=1}^\infty t_i\lambda^i$ are the wave function and adjoint wave function of the KP hierarchy.
\end{lemma}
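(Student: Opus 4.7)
The strategy is to reduce every commutator $[\pa^*_{q,r},\pa^*_{q',r'}]\mathfrak{N}$ to a direct algebraic identity built from the explicit action formulas \eqref{paqrobj} together with the Leibniz rule for the derivations $\pa^*_{q,r}$, treating the wave function $\psi(\mathfrak{t},\lambda)$ and the adjoint wave function $\psi^*(\mathfrak{t},\lambda)$ as $\lambda$-parameter families of (adjoint) eigenfunctions so that their cases are contained in $\mathfrak{N}=q''$ and $\mathfrak{N}=r''$. For the scalar objects $q''$, $r''$, $\psi$, $\psi^*$, $\Omega(q'',r'')$ and $\tau_{\rm KP}$ the verification is immediate: a single unfolding gives, for instance,
\begin{align*}
\pa^*_{q,r}\pa^*_{q',r'}q'' = q\,\Omega(q',r)\,\Omega(q'',r') + q'\,\Omega(q,r')\,\Omega(q'',r),
\end{align*}
which is manifestly symmetric under the swap $(q,r)\leftrightarrow(q',r')$; the same symmetrization occurs for $r''$ and, after two Leibniz applications, for $\Omega(q'',r'')$. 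For $\tau_{\rm KP}$ both orders produce $[\Omega(q,r)\Omega(q',r')-\Omega(q,r')\Omega(q',r)]\tau_{\rm KP}$, so the commutator vanishes.

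For the operator-valued cases $\mathfrak{N}=\mathfrak{L}$ and $\mathfrak{N}=\mathfrak{D}$, I would invoke the Jacobi identity to rewrite
\begin{align*}
[\pa^*_{q,r},\pa^*_{q',r'}]\mathfrak{L} = \bigl[\,\pa^*_{q,r}(q'\pa^{-1}r') - \pa^*_{q',r'}(q\pa^{-1}r) - [q\pa^{-1}r,\; q'\pa^{-1}r'],\;\mathfrak{L}\,\bigr].
\end{align*}
I would expand the first two terms by $\pa^*_{q,r}q'=q\Omega(q',r)$ and $\pa^*_{q,r}r'=r\Omega(q,r')$, and expand the inner commutator by the pseudo-differential identity $\pa^{-1}(rq')\pa^{-1}=\Omega(q',r)\pa^{-1}-\pa^{-1}\Omega(q',r)$ (a consequence of $\pa(\Omega(q',r)\pa^{-1})=rq'\pa^{-1}+\Omega(q',r)$, where $\Omega(q',r)=\pa^{-1}(q'r)$). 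The four $\Omega\,\pa^{-1}$-type pieces then cancel pairwise, and the remaining $\pa^{-1}\Omega$-type pieces cancel because the scalar functions $r$, $r'$, $\Omega(q,r')$, $\Omega(q',r)$ commute as multiplication operators. Consequently the inner bracket vanishes as a pseudo-differential operator, and the outer commutator with $\mathfrak{L}$ does too. The dressing operator case is identical in structure once one replaces $[\,\cdot\,,\mathfrak{D}]$ by left multiplication.

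The main obstacle is exactly the clause \emph{``up to some constants in SEP''}: each $\Omega(\cdot,\cdot)$ is defined only modulo an additive constant of integration, and a shift of the constant in, say, $\Omega(q',r)$ adds a multiple of $q$ to $\pa^*_{q,r}q'$ and a multiple of $q\pa^{-1}r'$ to $\pa^*_{q,r}(q'\pa^{-1}r')$, leaving behind a residual in the commutator that has the shape of a lower-order squared eigenfunction flow. Thus the identity is strict only after one fixes a compatible normalization for every SEP appearing in the computation; with such a choice the cancellations above are exact and the lemma follows.
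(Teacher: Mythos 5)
Your proof is correct, and where it overlaps with the paper it is essentially the same computation: the paper's own proof only verifies the $\tau_{\rm KP}$ case explicitly (obtaining exactly your expression $[\Omega(q,r)\Omega(q',r')-\Omega(q,r')\Omega(q',r)]\tau_{\rm KP}$ from both orderings), states that $\Omega(q'',r'')$ is ``similar,'' and delegates the cases $\mathfrak{N}=\mathfrak{L},\mathfrak{D},q'',r'',\psi,\psi^*$ entirely to the reference \cite{Oevel1998}. What you add is a self-contained verification of precisely those delegated cases: the reduction of $\psi,\psi^*$ to $\lambda$-families of (adjoint) eigenfunctions, the symmetrization argument for $q''$ and $r''$, and the Jacobi-identity reduction of the operator cases to the vanishing of $\pa^*_{q,r}(q'\pa^{-1}r')-\pa^*_{q',r'}(q\pa^{-1}r)-[q\pa^{-1}r,q'\pa^{-1}r']$ via the splitting $\pa^{-1}(rq')\pa^{-1}=\Omega(q',r)\pa^{-1}-\pa^{-1}\Omega(q',r)$; this is the standard content of the cited result, so your route buys independence from the citation at the cost of length. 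Two remarks. First, your operator computation silently uses $\pa^*_{q,r}r'=r\,\Omega(q,r')$, whereas the paper's display \eqref{paqrobj} reads $\pa^*_{q,r}r'=\Omega(q,r')$; your version is the correct one (it is what $\pa^*_{q,r}\mathfrak{D}=q\pa^{-1}r\mathfrak{D}$ forces on $\psi^*$, and without the factor $r$ the Leibniz cancellation in $[q\pa^{-1}r,q'\pa^{-1}r']$ would not close), so you have in effect corrected a typo in the paper rather than introduced an error. Second, your closing discussion of the integration constants is a faithful reading of the lemma's ``up to some constants in SEP'' qualifier and matches how the paper uses that freedom later (e.g.\ absorbing $-1$ into $\Omega(q,r)$ in Proposition \ref{onedtsesy}).
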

\begin{proof}
The proof for the cases of $\mathfrak{N}=\mathfrak{L},\mathfrak{D},q'',r'',\psi, \psi^*$ can be found in \cite{Oevel1998}. Here we only discuss the remaining cases. Note that by (\ref{paqrobj}),
\begin{align*}
\pa^*_{q',r'}\pa^*_{q,r}\tau_{\rm KP}=-\pa^*_{q',r'}(\Omega(q,r)\tau_{\rm KP})
=-\Omega(q,r')\Omega(q',r)\tau_{\rm KP}+\Omega(q,r)\Omega(q',r')\tau_{\rm KP},
\end{align*}
therefore $[\pa^*_{q,r},\pa^*_{q',r'}]=0$ on $\tau_{\rm KP}$. As for the case of $\Omega(q'',r'')$, it can be proved by similar method.

\end{proof}

The relations of $\pa^*_{q,r}$ and the binary Darboux transformation $T(q,r)$ (see Section 1) are shown in the proposition below,
\begin{proposition}\label{onedtsesy}
Up to some constants in SEP,
$$\mathfrak{N}^{[1]}=e^{\pa^*_{q,r}}\mathfrak{N},\quad \mathfrak{N}\in \{\mathfrak{L},\mathfrak{D},q'(\mathfrak{t}),r'(\mathfrak{t}),\psi(\mathfrak{t},\lambda), \psi^*(t,\lambda), \Omega(q',r'),\tau_{\rm KP}(\mathfrak{t})\},$$
where $\mathfrak{N}^{[1]}$ means the transformed $\mathfrak{N}$ under $T(q,r)$, while $q'$ and $r'$ are the eigenfunction and adjoint eigenfunction different from $q$ and $r$.
\end{proposition}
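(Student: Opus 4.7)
My plan is to verify the identity $\mathfrak{N}^{[1]}=e^{\pa^*_{q,r}}\mathfrak{N}$ object by object, exploiting the concrete actions collected in \eqref{paqrobj} together with the compatibility in Lemma \ref{paqrcomm}. First I would write down the closed form of the binary Darboux transformation, $T(q,r)=1-q\,\Omega(q,r)^{-1}\pa^{-1}r$, together with the standard explicit transformations of $\mathfrak{L},\mathfrak{D},q',r',\Omega(q',r'),\psi,\psi^*,\tau_{\rm KP}$ under $T(q,r)$; these are the quantities that must be recovered on the right-hand side.

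Next I would view $\pa^*_{q,r}$ as the infinitesimal generator of a one-parameter flow $\mathfrak{N}(s)=e^{s\pa^*_{q,r}}\mathfrak{N}$. Lemma \ref{paqrcomm} guarantees that the flow is internally consistent, in the sense that although $q$ and $r$ themselves evolve under $\pa^*_{q,r}$, the compositions and iterations of $\pa^*_{q,r}$ remain well defined. Setting $\omega=\Omega(q,r)$, the formulas \eqref{paqrobj} give the ODE system $\partial_s\omega=\omega^2$, $\partial_s q'=q\,\Omega(q',r)$, $\partial_s r'=r\,\Omega(q,r')$, $\partial_s\Omega(q',r')=\Omega(q,r')\Omega(q',r)$, and $\partial_s\tau_{\rm KP}=-\omega\,\tau_{\rm KP}$, each of which integrates in closed form. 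The key verification is that, at $s=1$, these closed forms match the binary DT formulas after the additive constants in the various SEPs are adjusted consistently; this is precisely the content of the phrase ``up to some constants in SEP.''

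For the operator-valued objects I would use $\pa^*_{q,r}\mathfrak{D}=q\pa^{-1}r\,\mathfrak{D}$. The integrated flow is a Dyson-type time-ordered exponential in the evolving $q(s),r(s)$, and a direct comparison with $T(q,r)\mathfrak{D}$, combined with $\mathfrak{L}=\mathfrak{D}\pa\mathfrak{D}^{-1}$, yields $\mathfrak{L}^{[1]}=T\mathfrak{L}T^{-1}$. The wave functions $\psi=\mathfrak{D}(e^{\eta(\mathfrak{t},\lambda)})$ and $\psi^*=(\mathfrak{D}^{-1})^{*}(e^{-\eta(\mathfrak{t},\lambda)})$ then follow as corollaries by applying the dressing operators to the appropriate exponentials. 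The hard part will be the bookkeeping of these integration constants: the naive flow solution $\omega(s)=\omega(0)/(1-s\omega(0))$ of $\partial_s\omega=\omega^2$ does not literally equal the DT-transformed $\Omega$ at $s=1$, so one must choose the constants in $\Omega(q,r),\Omega(q',r),\Omega(q,r')$ coherently across all of the objects so that the integrated flow and the binary DT produce matching closed forms simultaneously.
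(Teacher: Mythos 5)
Your scalar computations are, in substance, the paper's own proof written in flow form: the paper computes $\pa^{*m}_{q,r}(q')=m!\,q\,\Omega(q,r)^{m-1}\Omega(q',r)$ by induction, sums the exponential series to get $q'+\frac{q\Omega(q',r)}{1-\Omega(q,r)}$, and compares with $q'^{[1]}=q'-\frac{q\Omega(q',r)}{\Omega(q,r)}$ by absorbing the $-1$ into the integration constant of $\Omega(q,r)$; integrating your ODE system $\partial_s\omega=\omega^2$, $\partial_s q=q\omega$, $\partial_s\Omega(q',r)=\omega\,\Omega(q',r)$, $\partial_s q'=q\,\Omega(q',r)$ and evaluating at $s=1$ produces exactly the same closed forms, and the same remark handles $\Omega(q',r')$ and $\tau_{\rm KP}$ (for the latter the flow even truncates, since $(\pa^*_{q,r})^2\tau_{\rm KP}=0$). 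Two points of difference are worth flagging. First, your transformation law $\partial_s r'=r\,\Omega(q,r')$ is the standard one and is what makes the induction homogeneous; the paper's \eqref{paqrobj} prints $\pa^*_{q,r}r'=\Omega(q,r')$ without the factor $r$, so you are implicitly correcting a typo rather than following the stated formula. Second, and more substantively, for $\mathfrak{N}=\mathfrak{D}$ and $\mathfrak{N}=\mathfrak{L}$ your plan of integrating $\partial_s\mathfrak{D}=q(s)\pa^{-1}r(s)\,\mathfrak{D}(s)$ as a Dyson time-ordered exponential and comparing directly with $T(q,r)\mathfrak{D}$ is a genuine obstacle as stated: the operators $q(s)\pa^{-1}r(s)=\frac{q_0}{1-s\omega_0}\pa^{-1}\frac{r_0}{1-s\omega_0}$ at different $s$ do not commute (the $\pa^{-1}$ sits between the evolving factors), so the time-ordered exponential has no evident closed form and you have not shown how to evaluate it. The paper avoids this entirely by running the logic in the opposite direction: $\psi(\mathfrak{t},\lambda)$ is itself an eigenfunction, so $\psi^{[1]}=e^{\pa^*_{q,r}}(\psi)$ follows from the already-proved scalar case, and since $\pa^*_{q,r}$ annihilates $e^{\eta(\mathfrak{t},\lambda)}$ one reads off $\mathfrak{D}^{[1]}=e^{\pa^*_{q,r}}(\mathfrak{D})$ from $\psi=\mathfrak{D}(e^{\eta})$, and then $\mathfrak{L}^{[1]}=e^{\pa^*_{q,r}}(\mathfrak{D})\,\pa\,\bigl(e^{\pa^*_{q,r}}(\mathfrak{D})\bigr)^{-1}$ from the dressing relation. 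You should replace your operator step by this reduction; everything you need for it is already contained in your scalar analysis.
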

\begin{proof}
Let us firstly consider the case of $\mathfrak{N}=q$. In fact by (\ref{paqrobj}), one can prove
\begin{align}
\pa^{*m}_{q,r}(q')=m!q\Omega(q,r)^{m-1}\Omega(q',r),\quad m\geq 1\label{paqrmq}
\end{align}
by induction on $m$. Therefore one can find that
\begin{eqnarray}
e^{\pa^*_{q,r}}(q')&=&\sum_{m=0}^{\infty}\frac{\pa^{*m}_{q,r}(q')}{m!}
=q'+\sum_{m=1}^{\infty}q\Omega(q,r)^{m-1}\Omega(q',r)=
q'+\frac{q\Omega(q',r)}{1-\Omega(q,r)}.\nonumber
\end{eqnarray}
On the other hand,
\begin{eqnarray}
q'^{[1]}=q'-\frac{q\Omega(q',r)}{\Omega(q,r)}.\label{q1dt}
\end{eqnarray}
If absorb the $-1$ into $\Omega(q,r)$ (since its definition can be up to a constant), then we can find that $e^{\pa^*_{q,r}}(q')$ coincides with $q'^{[1]}$. Similarly, one can prove the cases of $\mathfrak{N}=r,\psi,\psi^*$. While for $\mathfrak{N}=\Omega(q',r')$, it can be found that
\begin{align*}
\pa^{*m}_{q,r}\Omega(q',r')=m!\Omega(q,r)^{m-1}\Omega(q',r)\Omega(q,r'),\quad m\geq 1
\end{align*}
and
\begin{align*}
\Omega(q'^{[1]},r'^{[1]})=\Omega(q',r')-\frac{\Omega(q,r')\Omega(q',r)}{\Omega(q,r)}.
\end{align*}
The case of $\mathfrak{N}=\Omega(q',r')$ can be proved by viewing $\Omega(q,r)-1$ as $\Omega(q,r)$. The case of $\mathfrak{N}=\tau_{\rm KP}$ can be derived by using $(\pa^*_{q,r})^2(\tau_{\rm KP})=0$ (see the proof of Lemma \ref{paqrcomm}) and $\pa^*_{q,r}\tau_{\rm KP}=-\Omega(q,r)\tau_{\rm KP}$.

Now we try to prove the case of $\mathfrak{N}=\mathfrak{D}$ or $\mathfrak{L}$. Actually,
the action of $\pa^*_{q,r}$ on $\eta(\mathfrak{t},\lambda)$ vanishes. Thus
one can know that $\mathfrak{D}^{[1]}=e^{\pa^*_{q,r}}(\mathfrak{D})$
from $\psi^{[1]}=e^{\pa^*_{q,r}}(\psi)$ and $\psi^{[1]}=\mathfrak{D}^{[1]}(e^{\eta(\mathfrak{t},\lambda)})$. As for the case of
$\mathfrak{N}=\mathfrak{L}$, it can be found that
\begin{align}
e^{\pa^*_{q,r}}(\mathfrak{L})=e^{\pa^*_{q,r}}(\mathfrak{D})\cdot \pa\cdot e^{\pa^*_{q,r}}(\mathfrak{D}^{-1}),
\end{align}
which can be proved by direct computation or viewing $\mathfrak{L}$ or $\mathfrak{D}$ depending on $t^*$ corresponding to $\pa^*_{q,r}$. In this point, $e^{\pa^*_{q,r}}(\mathfrak{L}(\mathfrak{t},t^*))=\mathfrak{L}(\mathfrak{t},t^*+1)$ and $e^{\pa^*_{q,r}}(\mathfrak{D}(\mathfrak{t},t^*))=\mathfrak{D}(\mathfrak{t},t^*+1)$. Further by $e^{\pa^*_{q,r}}(\mathfrak{D})\cdot e^{\pa^*_{q,r}}(\mathfrak{D}^{-1})
=e^{\pa^*_{q,r}}(\mathfrak{D}\cdot\mathfrak{D}^{-1})=1$, one can obtain $e^{\pa^*_{q,r}}(\mathfrak{D}^{-1})=(e^{\pa^*_{q,r}}(\mathfrak{D}))^{-1}$. Therefore, one can prove $\mathfrak{L}^{[1]}=e^{\pa^*_{q,r}}(\mathfrak{L})$ by $\mathfrak{L}^{[1]}=\mathfrak{D}^{[1]}\cdot\pa\cdot\mathfrak{D}^{[1]-1}$.
\end{proof}
By Lemma \ref{paqrcomm} and the fact that $T(q,r)$ can commute with each other (that is, $T(q_2^{[1]},r_2^{[1]})T(q_1^{[0]},r_1^{[0]})
=T(q_1^{[1]},r_1^{[1]})T(q_2^{[0]},r_2^{[0]})$), one can obtain the corollary below.
\begin{corollary}\label{cordtkstep}
Under the same conditions of Proposition \ref{onedtsesy},
$$\mathfrak{N}^{[k]}=e^{\sum_{i=1}^k\pa^*_{q_i,r_i}}\mathfrak{N},\quad \mathfrak{N}\in \{\mathfrak{L},\mathfrak{D},q'(\mathfrak{t}),r'(\mathfrak{t}),\psi(\mathfrak{t},\lambda), \psi^*(\mathfrak{t},\lambda), \Omega(q',r'),\tau_{KP}(\mathfrak{t})\}.$$
\end{corollary}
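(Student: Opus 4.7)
My plan is to exploit Proposition \ref{onedtsesy} together with Lemma \ref{paqrcomm}, organizing the proof by regarding the $k$ squared eigenfunction symmetries as commuting time flows. Concretely, I would introduce auxiliary time parameters $s_1, \ldots, s_k$ with $\pa_{s_i} = \pa^*_{q_i, r_i}$. By Lemma \ref{paqrcomm}, these flows pairwise commute on every object in the listed set. Hence the time evolution from $(s_1, \ldots, s_k) = (0, \ldots, 0)$ to $(1, \ldots, 1)$ is path-independent and equals $\exp\bigl(\sum_{i=1}^k \pa^*_{q_i, r_i}\bigr)$ applied to the initial data, which is the right-hand side of the desired identity.

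On the other hand, I would evaluate the same endpoint by traversing the path one coordinate at a time: first move $s_1$ from $0$ to $1$, then $s_2$ from $0$ to $1$, and so on. Each such leg corresponds to integrating a single flow from $0$ to $1$; however, by the time the $j$-th leg begins, the eigenfunctions have already been transported by the preceding $j-1$ flows, so the relevant generator along that leg is $\pa^*_{q_j^{[j-1]}, r_j^{[j-1]}}$. By Proposition \ref{onedtsesy}, integrating this generator over a unit interval is exactly the binary Darboux transformation $T(q_j^{[j-1]}, r_j^{[j-1]})$, and the composition of all $k$ legs reproduces $\mathfrak{N}^{[k]}$ by the very definition of the $k$-step binary Darboux transformation. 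Equating the two computations of the same endpoint yields the corollary for each $\mathfrak{N}$ in the list.

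The main obstacle, as the paper itself flags, is the constant-of-integration issue inherent to the squared eigenfunction potentials: Proposition \ref{onedtsesy} identifies $T(q,r)$ with $e^{\pa^*_{q,r}}$ only after absorbing a specific additive constant into $\Omega(q,r)$. In the iterated setting, these absorptions must be carried out compatibly at every stage, so that the successive exponentials assemble into $\exp\bigl(\sum_i \pa^*_{q_i, r_i}\bigr)$ under a single normalization convention; this is what the phrase ``up to some constants in SEP'' in the statement encodes. Once this bookkeeping is fixed, the argument above uses only the commutativity of the $\pa^*_{q_i, r_i}$'s and the one-step identification of Proposition \ref{onedtsesy}, so it applies uniformly to every listed member $\mathfrak{N}\in\{\mathfrak{L},\mathfrak{D},q'(\mathfrak{t}),r'(\mathfrak{t}),\psi(\mathfrak{t},\lambda),\psi^*(\mathfrak{t},\lambda),\Omega(q',r'),\tau_{\rm KP}(\mathfrak{t})\}$.
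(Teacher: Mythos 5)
Your proposal is correct and follows essentially the same route as the paper, which obtains the corollary in one line from the commutativity of the flows $\pa^*_{q_i,r_i}$ (Lemma \ref{paqrcomm}) together with the permutability of binary Darboux transformations and the one-step identification of Proposition \ref{onedtsesy}. Your explicit discussion of path-independence and of the compatible normalization of the constants in the squared eigenfunction potentials merely spells out what the paper leaves implicit.
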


\subsection{CKP Darboux transformation via free Bosons}
In the CKP case, the corresponding Darboux transformation must keep the CKP constraint, that is, $\pa^*L+\pa^*L^*=0$. The suitable basic CKP Darboux transformation \cite{Nimmo1995,He2006,Hewz2007,Loris1999} is just $T(q)\triangleq T(q,q)$, corresponding to the squared eigenfunction symmetry \cite{Cheng2014,Loris2001}
$\pa_{q,q}L=[q\pa^{-1}q,L]$,
which is the special case of $\pa_{(q_1,q_2)}L\triangleq[q_1\pa^{-1}q_2+q_2\pa^{-1}q_1, L]$
with $q_2=q_1/2=q/2$. Firstly,
by comparing the coefficient of $\pa^{-1}$ in $\pa^*_{q,q}W=q\pa^{-1}qW$, one can get
\begin{eqnarray*}
\pa^*_{q,q} \tau_0=\frac{1}{2}\Omega(q,q)\tau_0.
\end{eqnarray*}
Further by (\ref{paqrobj}) and induction, the general case is given in the following lemma.
\begin{lemma}\label{pamt0}
For $m\geq 1$,
\begin{eqnarray*}
\pa^{*m}_{q,q}(\tau_0(t))=\frac{1}{2^m}(2m-1)!!\Omega(q,q)^m\tau_0(t).
\end{eqnarray*}
\end{lemma}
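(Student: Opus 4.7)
The plan is to prove this by induction on $m$. The base case $m=1$ is exactly the identity stated in the text just before the lemma, namely $\pa^{*}_{q,q}\tau_0=\frac{1}{2}\Omega(q,q)\tau_0$, which follows from comparing the coefficient of $\pa^{-1}$ in $\pa^{*}_{q,q}W=q\pa^{-1}qW$. For the inductive step, I would assume the identity for some $m\geq 1$ and apply $\pa^{*}_{q,q}$ one more time, using the fact that $\pa^{*}_{q,q}$ acts as a derivation on products of functions of $t$ (since it is a time-flow derivation).

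The key ingredient I will need is the action of $\pa^{*}_{q,q}$ on the SEP $\Omega(q,q)$. Specializing the identity $\pa^{*}_{q,r}\Omega(q',r')=\Omega(q,r')\Omega(q',r)$ from \eqref{paqrobj} to $q=r=q'=r'$, I get
\begin{equation*}
\pa^{*}_{q,q}\Omega(q,q)=\Omega(q,q)^{2}.
\end{equation*}
Combining this with the Leibniz rule and the inductive hypothesis yields
\begin{align*}
\pa^{*(m+1)}_{q,q}(\tau_0) &= \pa^{*}_{q,q}\!\left(\tfrac{(2m-1)!!}{2^{m}}\Omega(q,q)^{m}\tau_0\right)\\
&=\tfrac{(2m-1)!!}{2^{m}}\!\left(m\,\Omega(q,q)^{m-1}\cdot\Omega(q,q)^{2}\cdot\tau_0+\Omega(q,q)^{m}\cdot\tfrac{1}{2}\Omega(q,q)\tau_0\right)\\
&=\tfrac{(2m-1)!!}{2^{m}}\cdot\tfrac{2m+1}{2}\,\Omega(q,q)^{m+1}\tau_0=\tfrac{(2m+1)!!}{2^{m+1}}\Omega(q,q)^{m+1}\tau_0,
\end{align*}
which closes the induction.

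There is essentially no obstacle here; the computation is a direct recursion and the main point is simply to identify the recursion $(2m-1)!!\cdot(2m+1)=(2m+1)!!$ together with the factor $1/2$ picked up from the base case. The only place where one must be slightly careful is to justify the Leibniz rule for $\pa^{*}_{q,q}$ on the product $\Omega(q,q)^{m}\tau_0$, but this is immediate because $\pa^{*}_{q,q}$ is the squared eigenfunction symmetry flow and therefore a derivation on the algebra of smooth functions of the time variables. Thus the proof is a two-line induction once \eqref{paqrobj} is in hand.
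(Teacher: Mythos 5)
Your proof is correct and follows exactly the route the paper intends: the paper gives no detailed argument beyond stating that the lemma follows ``by (\ref{paqrobj}) and induction,'' and your induction — base case $\pa^{*}_{q,q}\tau_0=\tfrac12\Omega(q,q)\tau_0$, the derivation property of the flow, and $\pa^{*}_{q,q}\Omega(q,q)=\Omega(q,q)^2$ from \eqref{paqrobj} — is precisely that argument carried out. The recursion $(2m-1)!!\,(2m+1)=(2m+1)!!$ with the extra factor $\tfrac12$ closes the induction as you show.
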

Then one can obtain the lemma below
\begin{lemma}\label{tau0g1dtses}
Under the CKP Darboux transformation $T{(q)}$,
\begin{eqnarray*}
\tau_0(t)^{[1]}=e^{\pa^*_{q,q}}(\tau_0(t)),\quad g_1(t,z)^{[1]}=e^{\pa^*_{q,q}}(g_1(t,z)).
\end{eqnarray*}
\end{lemma}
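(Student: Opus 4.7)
The plan is to verify the two identities separately, the second reducing to the first plus Proposition \ref{onedtsesy} once we recall the Bosonic expression $g_1(t,z)=\tfrac{1}{2}\tau_0(t)w(t,z)$ implicit in the formula $w(t,z)=2g_1(t,z)/\tau_0(t)$ from Section 2.

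For the first identity I would begin by evaluating the right-hand side term by term using Lemma \ref{pamt0}. Substituting $\pa^{*m}_{q,q}(\tau_0)=(2m-1)!!\,\Omega(q,q)^m\tau_0/2^m$ and resumming via the binomial series, I obtain
\begin{equation*}
e^{\pa^*_{q,q}}(\tau_0)=\tau_0\sum_{m=0}^{\infty}\frac{(2m-1)!!}{2^m\,m!}\,\Omega(q,q)^m =\frac{\tau_0}{\sqrt{1-\Omega(q,q)}},
\end{equation*}
where the closed form comes from $(1-x)^{-1/2}=\sum_m (2m-1)!!\,x^m/(2^m m!)$. For the left-hand side I would use the remark in Section 2 identifying $\tau_0=\tau_{\rm KP}^{-1/2}$ (up to a multiplicative constant) together with the KP instance of Proposition \ref{onedtsesy} applied with $r=q$, which gives $\tau_{\rm KP}^{[1]}=(1-\Omega(q,q))\tau_{\rm KP}$ after the constant-shift in the SEP explained in that proof. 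Taking the $(-1/2)$-power then yields exactly $\tau_0^{[1]}=\tau_0/\sqrt{1-\Omega(q,q)}$, and the two sides coincide.

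For the second identity I would write $g_1^{[1]}=\tfrac{1}{2}\tau_0^{[1]}\,w^{[1]}$. Proposition \ref{onedtsesy} applied to the KP wave function $\psi$, which coincides with the CKP wave function $w$ since the CKP hierarchy is a reduction of the KP one, gives $w^{[1]}=e^{\pa^*_{q,q}}(w)$; combined with the first part's $\tau_0^{[1]}=e^{\pa^*_{q,q}}(\tau_0)$ and the fact that $\pa^*_{q,q}$ acts on functions of $t$ as a derivation (so its exponential respects pointwise products by the binomial theorem), one finds
\begin{equation*}
g_1^{[1]}=\tfrac{1}{2}\,e^{\pa^*_{q,q}}(\tau_0)\,e^{\pa^*_{q,q}}(w) =e^{\pa^*_{q,q}}\!\bigl(\tfrac{1}{2}\tau_0 w\bigr)=e^{\pa^*_{q,q}}(g_1),
\end{equation*}
which is the claim.

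The hard part will be the bookkeeping of constants in Part 1: the SEP $\Omega(q,q)$ is only defined up to an additive constant, and the usual Darboux formula $\tau_{\rm KP}^{[1]}=\Omega(q,q)\tau_{\rm KP}$ differs from the exponential form $(1-\Omega(q,q))\tau_{\rm KP}$ by exactly such a shift; one must absorb the $-1$ into $\Omega(q,q)$ in the same way as inside the proof of Proposition \ref{onedtsesy} so that the binomial series on the right matches the $(-1/2)$-power of the KP Darboux transform on the left. The derivation property of $\pa^*_{q,q}$ needed in Part 2 is routine but deserves a brief verification since $\pa^*_{q,q}$ is itself built from the eigenfunction $q$.
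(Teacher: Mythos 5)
Your proposal is correct and follows essentially the same route as the paper: both sides of the first identity are matched by resumming Lemma \ref{pamt0} into $\tau_0(1-\Omega(q,q))^{-1/2}$ and absorbing the constant shift in the SEP, and the second identity is reduced to Proposition \ref{onedtsesy} via $w(t,z)=2g_1(t,z)/\tau_0(t)$. The only cosmetic difference is that you compute $\tau_0^{[1]}$ from $\tau_{\rm KP}^{[1]}$ through the relation $\tau_{\rm KP}=\tau_0^{-2}$, whereas the paper reads it off from $w_1^{[1]}=w_1-q^2/\Omega(q,q)$ and $w_1=2\pa_x\log\tau_0$; these are equivalent computations.
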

\begin{proof}
Firstly for $w(t,\lambda)=(1+w_1\lambda^{-1}+w_2\lambda^{-2}+\cdots)e^{\xi(t,\lambda)}$, one can obtain $w_1^{[1]}=w_1-q^2/\Omega(q,q)$ by \eqref{omegaexpansion} and \eqref{q1dt}. Then by $w_1=2\pa_x\log \tau_0$, it can be found that $\tau_0^{[1]}=\tau_0/\sqrt{-\Omega(q,q)}$, since $\tau_0$ can be determined up to a multiplication of a constant. While according to Lemma \ref{pamt0},
\begin{align*}
e^{\pa^*_{q,q}}(\tau_0)=\sum_{m\geq 0}\frac{(-1/2)(-3/2)\cdots(1/2-m)}{m!}(-\Omega(q,q))^m\tau_0
=\tau_0(1-\Omega(q,q))^{-1/2}.
\end{align*}
So $\tau_0(t)^{[1]}=e^{\pa^*_{q,q}}(\tau_0(t))$ holds when viewing $\Omega(q,q)-1$ as $\Omega(q,q)$.

As for $g_1(t,z)^{[1]}=e^{\pa^*_{q,q}}(g_1(t,z))$, one can prove it by Proposition \ref{onedtsesy} and $w(t,\lambda)=2g_1(t,\lambda)/\tau_0(t)$.
\end{proof}
Next we will try to seek the expressions of $\tau_0(t)^{[1]}$ and $g_1(t,z)^{[1]}$ by using free Bosons. Note that
the expression of $\tau_0^{[1]}$ under $T(q)$ by free Bosons are given in the proposition below.
\begin{proposition}\label{proptau01}
$\tau^{[1]}_0(t)$ under $T(q)$ can be expressed by
\begin{eqnarray*}
\tau^{[1]}_0(t)=\langle 0|e^{H(t)} e^{\frac{\beta^2}{2}}g|0\rangle,
\end{eqnarray*}
where $\beta\in V=\sum_{l\in \mathbb{Z}+1/2}\mathbb{C}\phi_l$ is corresponding to the eigenfunction $q(t)$ of the CKP hierarchy in the way (\ref{qbosonexp}).
\end{proposition}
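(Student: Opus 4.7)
The strategy is to reduce the claimed identity to a Wick-type factorization for free Bosonic correlators and then to prove that factorization using the CKP bilinear equation. By Lemma \ref{tau0g1dtses}, the problem reduces to showing
$$e^{\partial^*_{q,q}}\tau_0(t)=\langle 0|e^{H(t)}e^{\beta^2/2}g|0\rangle.$$
Applying Lemma \ref{pamt0} expands the left-hand side as $\sum_{m\geq 0}\frac{(2m-1)!!}{m!\,2^m}\Omega(q,q)^m\tau_0$, while the right-hand side expands as $\sum_{m\geq 0}\frac{1}{m!\,2^m}\langle 0|e^{H(t)}\beta^{2m}g|0\rangle$. Matching the coefficients of $1/(m!\,2^m)$ then reduces the proposition to establishing the Wick-type identity
\begin{equation*}
\langle 0|e^{H(t)}\beta^{2m}g|0\rangle = (2m-1)!!\,\Omega(q,q)^m\tau_0(t),\qquad m\geq 0,
\end{equation*}
which, by Proposition \ref{omegaqqboson}, is equivalent to the Hirota-type relation $\tau_0^{m-1}\,\langle 0|e^{H(t)}\beta^{2m}g|0\rangle = (2m-1)!!\,\langle 0|e^{H(t)}\beta^2 g|0\rangle^m$.

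I plan to establish this identity by induction on $m$. The cases $m=0$ and $m=1$ are immediate (the latter is precisely Proposition \ref{omegaqqboson}). For the inductive step, Lemma \ref{lemmas} applied to $\vec\beta=(\beta,\ldots,\beta)$ of length $2m$, combined with the CKP bilinear equation $S(g|0\rangle\otimes g|0\rangle)=0$, yields
$$S\bigl(g|0\rangle\otimes\beta^{2m}g|0\rangle\bigr)=-2m\,\beta g|0\rangle\otimes\beta^{2m-1}g|0\rangle.$$
Pairing this tensor identity with suitable bra states of the form $\langle 1|e^{H(t)}\otimes\langle 1|e^{H(t')}$ and extracting residues in the spectral parameter $z$ (mimicking the argument behind Lemma \ref{qboson}) should produce the scalar recursion
$$\langle 0|e^{H(t)}\beta^{2m}g|0\rangle = (2m-1)\,\Omega(q,q)\cdot\langle 0|e^{H(t)}\beta^{2m-2}g|0\rangle,$$
whose iteration together with the base case yields the key identity.

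The principal obstacle is tracking the combinatorial factor $(2m-1)$ in the scalar recursion: Lemma \ref{lemmas} supplies an initial factor of $2m$, after which one must commute a second $\beta$ across $S$ to create the $\beta^2$ pair responsible for $\Omega(q,q)$, and careful bookkeeping of the resulting contractions is required to separate out the $(2m-1)$ from the residual $2m-2$ copies of $\beta$. Conceptually, the key identity is nothing but Wick's theorem for the Gaussian state $g|0\rangle$: since $g\in Sp_\infty$ is an exponential of quadratic expressions in the $\phi_j$'s, correlators of the linear operator $\beta$ factorize into sums over perfect pairings, each contributing the two-point function $\Omega(q,q)$, and the number of perfect pairings on $2m$ elements is precisely $(2m-1)!!$. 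Once the recursion is in hand the proposition follows by summing the two Taylor expansions, with any overall multiplicative constant absorbable into the standard ambiguity in $\tau_0$ noted already in the proof of Lemma \ref{tau0g1dtses}.
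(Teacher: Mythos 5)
Your overall architecture is exactly the paper's: reduce via Lemma \ref{tau0g1dtses} and Lemma \ref{pamt0} to the key identity $\langle 0|e^{H(t)}\beta^{2m}g|0\rangle=(2m-1)!!\,\Omega(q,q)^m\tau_0(t)$, and prove that identity by induction with base cases $m=0,1$ from Proposition \ref{omegaqqboson}. The gap is in your inductive step. You apply Lemma \ref{lemmas} to the \emph{even} power $\beta^{2m}$ and propose pairing with $\langle 1|e^{H(t)}\otimes\langle 1|e^{H(t')}$. By the $\mathbb{Z}_2$-grading of $\mathcal{F}$ (the expansion of $\sigma(g|0\rangle)$ involves only even-degree monomials in $t_{\rm odd}$ and $\sigma(\phi(z)g|0\rangle)$ only odd-degree ones), the natural $\langle 1|\otimes\langle 0|$ pairing of $S(g|0\rangle\otimes\beta^{2m}g|0\rangle)=-2m\,\beta g|0\rangle\otimes\beta^{2m-1}g|0\rangle$ collapses to $0=0$, while your $\langle 1|\otimes\langle 1|$ pairing produces a relation among $\langle 1|$-correlators only --- it relates $\langle 1|e^{H(t')}\phi(-z)\beta^{2m}g|0\rangle$ to $\langle 1|e^{H(t')}\beta^{2m-1}g|0\rangle$ and never touches the vacuum correlator $\langle 0|e^{H(t)}\beta^{2m}g|0\rangle$ you need. (That $\langle 1|\otimes\langle 1|$ identity is in fact the engine of the companion Proposition \ref{propg1tz}, not of this one.) So the scalar recursion $\langle 0|e^{H(t)}\beta^{2m}g|0\rangle=(2m-1)\Omega(q,q)\langle 0|e^{H(t)}\beta^{2m-2}g|0\rangle$, while correct as a target, does not follow from the pairing you propose, and the "careful bookkeeping" you defer is not bookkeeping but a different choice of identity.

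The fix, which is what the paper does, is to apply Lemma \ref{lemmas} to the \emph{odd} power $\beta^{2m-1}$, giving $S(g|0\rangle\otimes\beta^{2m-1}g|0\rangle)=-(2m-1)\,\beta g|0\rangle\otimes\beta^{2m-2}g|0\rangle$, and pair with $\langle 1|e^{H(t)}\otimes\langle 0|e^{H(t')}$; the factor $(2m-1)$ then appears immediately and all entries have the right parity. Combining with the induction hypothesis on $\langle 0|e^{H(t')}\beta^{2m-2}g|0\rangle$, the spectral relation \eqref{qspec}, the expansion $\langle 0|e^{H(t')}\phi(-z)\beta^{2m-1}g|0\rangle=\mathcal{O}(z^{-1})e^{-\xi(t',z)}$ and Lemma \ref{2wave}, one identifies $\langle 0|e^{H(t)}\phi(z)\beta^{2m-1}g|0\rangle/\tau_0$ with $(2m-1)!!\,\Omega(q,q)^{m-1}\Omega(q,w(t,z))$, and contracting $\phi(z)$ against $\rho$ to recover $\beta$ closes the induction. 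Your Wick's-theorem heuristic is a good sanity check on the $(2m-1)!!$ count, but it is not a proof within this framework unless you separately justify Gaussian factorization for the state $g|0\rangle$ normalized by $\tau_0$.
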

\begin{proof}
In fact by Lemma \ref{pamt0} and Lemma \ref{tau0g1dtses}, one only needs to prove
\begin{eqnarray}
\langle 0|e^{H(t)} \beta^{2m}g|0\rangle=(2m-1)!!\Omega(q,q)^m\tau_0(t).\label{2mproof}
\end{eqnarray}
Notice that it is right for $m=0$ and $m=1$ by Proposition \ref{omegaqqboson}. So we can assume it is correct for $<m$. In this case,  by using Lemma $\ref{lemmas}$ and applying $\langle1|e^{H(t)} \otimes\langle0|e^{H(t')}$,
\begin{eqnarray*}
&&{\rm Res}_{z}w(t,z)\frac{\langle0|e^{H(t')}\phi(-z)\beta^{2m-1} g|0\rangle)}{\langle0|e^{H(t')}g|0\rangle}=-(2m-1)q(t)\frac{\langle0|e^{H(t')}
\beta^{2m-2}g|0\rangle}{\langle0|e^{H(t')}g|0\rangle}.
\end{eqnarray*}
Further by using the assumption of $m-1$ and the spectral representation of $q(t)$ in \eqref{qspec},
\begin{eqnarray*}
&&{\rm Res}_{z}w(t,z)\left(\frac{\langle0|e^{H(t')}\phi(-z)\beta^{2m-1} g|0\rangle)}{\langle0|e^{H(t')}g|0\rangle}
-(2m-1)!!\Omega(q(t'),q(t'))^{m-1}\Omega(q_1(t'),w(t',-z))\right)=0.
\end{eqnarray*}
By using similar method in Proposition \ref{omegaqqboson}, one can find that $\langle0|e^{H(t')}\phi(-z)\beta^{2m-1} g|0\rangle)=\mathcal{O}(z^{-1})e^{-\xi(t',z)}$. Therefore by Lemma \ref{2wave} and spectral representation of $q$ in \eqref{qspec}, one can at last prove \eqref{2mproof} is correct for $m$.
\end{proof}
Now let us discuss the expression of $g_1^{[1]}(t,z)$ by free Bosons, which is given in the proposition below.
\begin{proposition}\label{propg1tz}
Under the same conditions of Proposition \ref{proptau01},
\begin{eqnarray*}
g_1^{[1]}(t,z)=\langle 1|e^{H(t)} \phi(z)e^{\frac{\beta^2}{2}}g|0\rangle.
\end{eqnarray*}
\end{proposition}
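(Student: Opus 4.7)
The strategy parallels the proof of Proposition \ref{proptau01}. By Lemma \ref{tau0g1dtses}, $g_1^{[1]}(t,z) = e^{\partial^*_{q,q}}(g_1(t,z))$. Expanding the Bosonic exponential as $e^{\beta^2/2} = \sum_{m\geq 0}\beta^{2m}/(2^m m!)$ and matching term-by-term with $\sum_{m\geq 0}\partial^{*m}_{q,q}(g_1(t,z))/m!$, the proposition reduces to the Bosonic identity
\begin{equation*}
\partial^{*m}_{q,q}\bigl(g_1(t,z)\bigr) = \frac{1}{2^m}\langle 1|e^{H(t)}\phi(z)\beta^{2m}g|0\rangle, \qquad m\geq 0, \tag{$\star$}
\end{equation*}
which I would prove by induction on $m$; the base case $m=0$ is the definition of $g_1(t,z)$.

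On the Lax side, I would compute $\partial^{*m}_{q,q}g_1$ by the Leibniz rule applied to the factorization $g_1 = w\tau_0/2$, combined with Lemma \ref{pamt0} and the auxiliary identity $\partial^{*k}_{q,q}(w(t,z)) = k!\,q(t)\,\Omega(q,q)^{k-1}\Omega(w(t,z),q)$ for $k\geq 1$, itself easily proved inductively from the action rules \eqref{paqrobj}. This gives a clean two-term expression with explicit coefficients in front of $\Omega(q,q)^m w\tau_0$ and $\Omega(q,q)^{m-1}\Omega(w,q)q\tau_0$. For the Bosonic side, I would apply Lemma \ref{lemmas} to obtain $S(\beta^{2m}g|0\rangle \otimes g|0\rangle) = 2m\,\beta^{2m-1}g|0\rangle \otimes \beta g|0\rangle$, pair with $\langle 1|e^{H(t)} \otimes \langle 1|e^{H(t')}$, and translate via Proposition \ref{fockiso} into the residue identity
\begin{equation*}
{\rm Res}_z\,\langle 1|e^{H(t)}\phi(z)\beta^{2m}g|0\rangle\cdot w(t',-z) = 2m\,\langle 1|e^{H(t)}\beta^{2m-1}g|0\rangle\cdot q(t').
\end{equation*}
The odd-degree matrix element on the right-hand side is itself handled by expressing $\beta^{2m-1} = {\rm Res}_\lambda\rho(\lambda)\phi(\lambda)\beta^{2m-2}$ and iterating, invoking the companion identity $\langle 0|e^{H(t)}\phi(z)\beta^{2m-1}g|0\rangle = (2m-1)!!\,\Omega(q,q)^{m-1}\Omega(q,w(t,z))\tau_0(t)$ (the intermediate step of Proposition \ref{proptau01}), the spectral representation \eqref{qspec}, and finally Lemma \ref{2wave} to pin down the asymptotic coefficients.

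The main obstacle is combinatorial. On the Lax side, the Leibniz expansion produces a coefficient of the form $(m!/2)\sum_{j=0}^{m-1}(2j-1)!!/(j!\,2^j)$ in front of the $q\Omega(w,q)\Omega(q,q)^{m-1}\tau_0$ term, whereas on the Bosonic side the analogous coefficient emerges from a Wick-type reordering of $\phi(z)$ past the string $\beta^{2m}$, governed by the $c$-number commutator $[\phi(z),\beta]$ encoding wave-function data. Verifying that the two expansions agree term-by-term is delicate, and is perhaps best packaged as a separate auxiliary lemma giving a closed Bosonic formula for $\langle 1|e^{H(t)}\phi(z)\beta^{2m}g|0\rangle$ that runs in parallel with Lemma \ref{pamt0}; then both sides of $(\star)$ satisfy the same recursion and the two-variable match is automatic, avoiding any need to close-form the combinatorial sum.
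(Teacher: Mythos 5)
Your proposal follows essentially the same route as the paper's proof: it reduces the statement to the same induction identity $(\star)$ (the paper's Eq.~(\ref{g1boson})), computes the Lax side by Leibniz on $g_1=w\tau_0/2$ using (\ref{paqrmq}) and Lemma \ref{pamt0}, and handles the Bosonic side via Lemma \ref{lemmas}, the bilinear/spectral identities and Lemma \ref{2wave} to obtain a recursion (the paper's Eq.~(\ref{mgi})) into which the induction hypothesis for $\beta^{2m-1}$ is inserted. Your closing suggestion of matching the two sides through a common recursion rather than closing the combinatorial sum is exactly how the paper resolves the coefficient matching.
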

\begin{proof}
Similar to Proposition \ref{proptau01}, we only need to prove
\begin{eqnarray}
\pa^{*m}_{q,q}(g_1(t,z))=\frac{1}{2^m}\langle 1|e^{H(t)} \phi(z)\beta ^{2m}g|0\rangle.\label{g1boson}
\end{eqnarray}
Notice that \eqref{g1boson} is obviously true for $m=0$. Now we assume \eqref{g1boson} holds for $<m$. By using \eqref{paqrmq} and Lemma \ref{pamt0},
\begin{eqnarray*}
\pa^{*m}_{q,q}(g_1(t,z))=
\frac{1}{2}\pa^{*m}_{q,q}\Big(w(t,z)\tau_0(t)\Big)=
\frac{1}{2}\sum_{k=0}^m\frac{1}{2^{m-k}}k!(2m-2k-1)!!
C_{m}^{k}q\Omega(q,q)^{m-1}\Omega(q,w)\tau_0.
\end{eqnarray*}
So now we only need to prove
\begin{eqnarray}
\sum_{k=0}^m2^{k-1}k!(2m-2k-1)!!
C_{m}^{k}q\Omega(q,q)^{m-1}\Omega(q,w)\tau_0=\langle 1|e^{H(t)} \phi(z)\beta^{2m}g|0\rangle.\label{omegaphibeta2m}
\end{eqnarray}
In fact by Lemma $\ref{lemmas}$, one can obtain $S(1\otimes\beta ^{2m})(g|0\rangle\otimes g|0\rangle)=-2m\beta g|0\rangle\otimes\beta ^{2m-1}g|0\rangle$. Therefore
\begin{eqnarray*}
{\rm Res}_{z}w(t,z)\frac{\langle1|e^{H(t')}\phi(-z)\beta ^{2m} g|0\rangle}{\langle0|e^{H(t')}g|0\rangle}=-2mq(t)
\frac{\langle1|e^{H(t')}\beta ^{2m-1}g|0\rangle}{\langle0|e^{H(t')}g|0\rangle}.
\end{eqnarray*}
Further by bilinear equation \eqref{waveckpbilinear} and spectral representation \eqref{qspec} of $q(t)$,
\begin{align*}
{\rm Res}_{z}w(t,z)\Big(&\frac{\langle1|e^{H(t')}\phi(-z)\beta ^{2m} g|0\rangle}{\langle0|e^{H(t')}g|0\rangle}-\frac{1}{2^{m-1}}w(t',-z)\pa^{*m}_{q,q}(\tau_0(t'))
\\&-2m\Omega(q(t'),w(t',-z))\frac{\langle1|e^{H(t')}
\beta ^{2m-1}g|0\rangle}{\langle0|e^{H(t')}g|0\rangle}\Big)=0.
\end{align*}
It can be proved the terms in the bracket above satisfy the condition of Lemma  \ref{2wave}. Thus
\begin{eqnarray}
\frac{\langle1|e^{H(t)}\phi(z)\beta ^{2m} g|0\rangle}{\langle0|e^{H(t)}g|0\rangle}=\frac{1}{2^{m-1}}w(t,z)
\pa^{*m}_{q,q}(\tau_0(t))+2m
\Omega(q_1(t),w(t,z))\frac{\langle1|e^{H(t)}
\beta ^{2m-1}g|0\rangle}{\langle0|e^{H(t)}g|0\rangle}.\label{mgi}
\end{eqnarray}
So if insert the expression of $\langle 1|e^{H(t)} \beta^{2m-1}g|0\rangle$ (just replace the $w(t,z)$ in (\ref{omegaphibeta2m}) for $m-1$ with $q$) into the right hand side of \eqref{mgi}, one can at last prove (\ref{omegaphibeta2m}) is also correct for $m$.
\end{proof}
By Proposition \ref{fockiso}, Proposition \ref{proptau01} and Proposition \ref{propg1tz}, one can obtain the corollary below.
\begin{corollary}\label{corwavetrans}
Under the Darboux transformation $T(q)$, the wave function $w(t,z)$ of the CKP hierarchy can be expressed in free Bosons as
\begin{eqnarray*}
w(t,z)^{[1]}=\frac{2\langle1|e^{H(t)}\phi(z)e^{\frac{\beta^2}{2}} g|0\rangle}{\langle0|e^{H(t)}e^{\frac{\beta^2}{2}}g|0\rangle},
\end{eqnarray*}
where Bosonic field $\beta\in V=\sum_{l\in \mathbb{Z}+1/2}\mathbb{C}\phi_l$ is corresponding to $q$ by (\ref{qbosonexp}). In particular,
\begin{align*}
\tau_0^{[1]}(t)=
\langle0|e^{H(t)}e^{\frac{\beta^2}{2}}g|0\rangle,\quad
\tau_{(2j-1,1)}^{[1]}(t)=\langle(2j-1,1)|e^{H(t)}e^{\frac{\beta^2}{2}}g|0\rangle,\quad j=2,3,\cdots,
\end{align*}
where $\tau_0^{[1]}$ and $\tau_{(2j-1,1)}^{[1]}$ be the tau functions corresponding to $w(t,z)^{[1]}$.
\end{corollary}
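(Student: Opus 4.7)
The statement is really a packaging of Propositions \ref{proptau01} and \ref{propg1tz} together with the Boson--Fermion correspondence, so the plan is to assemble the existing results rather than carry out new computations.

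First I would recall from Section 2 the basic identity $w(t,z)=2g_1(t,z)/\tau_0(t)$ expressing the CKP wave function through its two fundamental bosonic matrix elements. Because the binary Darboux transformation $T(q)$ acts on the numerator and denominator separately, one has $w(t,z)^{[1]}=2g_1(t,z)^{[1]}/\tau_0(t)^{[1]}$. Substituting the free-bosonic expressions $\tau_0(t)^{[1]}=\langle 0|e^{H(t)}e^{\beta^2/2}g|0\rangle$ from Proposition \ref{proptau01} and $g_1(t,z)^{[1]}=\langle 1|e^{H(t)}\phi(z)e^{\beta^2/2}g|0\rangle$ from Proposition \ref{propg1tz} immediately yields the displayed formula for $w(t,z)^{[1]}$.

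For the individual tau coefficients $\tau_{(2j-1,1)}^{[1]}(t)$, I would appeal to Proposition \ref{fockiso}. Under the Boson--Fermion isomorphism $\sigma$, the transformed Fock state $e^{\beta^2/2}g|0\rangle$ corresponds to the tau function $\tau(\mathbf{t})^{[1]}=\langle 0|e^{H(t)}e^{\chi(t_{\mathrm{odd}})}e^{\beta^2/2}g|0\rangle$. Expanding this in the Grassmannian variables $t_{\mathrm{odd}}$ and using the general formula \eqref{taualpha} produces $\tau_\alpha^{[1]}(t)=\langle\alpha|e^{H(t)}e^{\beta^2/2}g|0\rangle$ for every $\alpha\in\mathrm{ODP}_{\mathrm{ev}}$; specialising to $\alpha=(2j-1,1)$ gives exactly the claimed identity. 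The $\tau_0^{[1]}$ formula is the case $\alpha=\emptyset$ of the same expansion, consistent with Proposition \ref{proptau01}.

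The only delicate point is to ensure that $e^{\beta^2/2}g$ is again a genuine element of $Sp_\infty$, so that every construction of Section 2 applies verbatim to the transformed system; this reduces to checking that $\beta^2$ is a bilinear in the $\phi_l$'s lying, up to a scalar from the normal ordering, in the image of $\hat\pi(c_\infty)$, which is immediate from the explicit form of $\hat\pi(Z_{i,j})$ and the definition $\beta\in V$. Once this is in place, agreement on the two fundamental pieces $\tau_0^{[1]}$ and $g_1^{[1]}$ already pins down the whole family $\{\tau_\alpha^{[1]}\}$ through the van de Leur expansion, so no further obstacle arises; I expect this bookkeeping step to be the main, but mild, technicality.
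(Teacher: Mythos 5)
Your overall strategy is the paper's: the corollary is obtained by assembling Proposition \ref{fockiso}, Proposition \ref{proptau01} and Proposition \ref{propg1tz}, and your derivation of the $w(t,z)^{[1]}$ formula from $w=2g_1/\tau_0$ together with the two propositions is exactly what is intended and is correct.

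However, your justification of the identities $\tau_{(2j-1,1)}^{[1]}(t)=\langle(2j-1,1)|e^{H(t)}e^{\beta^2/2}g|0\rangle$ has a real gap. You argue that $e^{\beta^2/2}g|0\rangle$ is ``the transformed Fock state'', apply $\sigma$ to get the full transformed super tau function $\langle 0|e^{H(t)}e^{\chi(t_{\rm odd})}e^{\beta^2/2}g|0\rangle$, and then read off $\tau_\alpha^{[1]}=\langle\alpha|e^{H(t)}e^{\beta^2/2}g|0\rangle$ for \emph{every} $\alpha\in{\rm ODP}_{\rm ev}$. That statement is precisely Conjecture \ref{conjecture}, which the paper explicitly leaves open: the obstruction (see the remark after (\ref{ckplaxeq}) and after the conjecture) is that the evolution equations of the higher wave functions $w_\alpha(t,z)$ are not determined, so one cannot conclude that the whole family $\{\tau_\alpha\}$ transforms by $\tau\mapsto e^{\beta^2/2}\tau$. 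Your closing claim that agreement on $\tau_0^{[1]}$ and $g_1^{[1]}$ ``pins down the whole family $\{\tau_\alpha^{[1]}\}$'' is therefore unjustified; $g_1$ only involves the coefficients $\tau_0$ and $\tau_{(\nu,1)}$. The correct (and intended) route for the second display is narrower: since $T(q)$ preserves the CKP hierarchy, $w^{[1]}$ is again a CKP wave function and $g_1^{[1]}=\tfrac12 w^{[1]}\tau_0^{[1]}$ admits the expansion (\ref{glam}) in terms of \emph{its} tau functions $\tau_0^{[1]},\tau_{(\nu,1)}^{[1]}$; on the other hand, expanding $\langle1|e^{H(t)}\phi(z)e^{\beta^2/2}g|0\rangle$ in $z^{-1}$ via the bosonization formula of Proposition \ref{fockiso} produces the coefficients $\langle(\nu,1)|e^{H(t)}e^{\beta^2/2}g|0\rangle$. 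Matching the two expansions order by order in $z$ yields exactly the claimed formulas for $\tau_0^{[1]}$ and $\tau_{(2j-1,1)}^{[1]}$ -- and only for these -- without invoking the conjecture. Relatedly, your ``mild technicality'' that $e^{\beta^2/2}g\in Sp_\infty$ is neither needed for this argument nor as immediate as you suggest (and, even granted, it would not by itself give the transformation law of the $\tau_\alpha$ for general $\alpha$).
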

\begin{conjecture}\label{conjecture}
Under the CKP Darboux transformation $T(q)$ with $\beta\in V=\sum_{l\in \mathbb{Z}+1/2}\mathbb{C}\phi_l$ corresponding to $q$, then
$$\tau=g|0\rangle\xrightarrow{T(q)}e^{\frac{\beta^2}{2}}\tau,$$
where $g\in Sp_{\infty}$.
\end{conjecture}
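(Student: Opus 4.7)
The plan is to use the Boson-(Boson+Fermion) isomorphism $\sigma$ of Proposition \ref{fockiso} to reduce the vector equality $\tau^{[1]}=e^{\beta^2/2}\tau$ in $\mathcal{F}$ to the family of scalar identities
\begin{align*}
\tau_\alpha^{[1]}(t)=\langle\alpha|e^{H(t)}e^{\beta^2/2}g|0\rangle,\qquad \alpha\in{\rm ODP}_{\rm ev},
\end{align*}
since an element of $\mathcal{F}$ is recovered from the entire collection of such matrix elements. The cases $\alpha=\emptyset$ and $\alpha=(2j-1,1)$ with $j\ge 2$ are already covered by Proposition \ref{proptau01} and Corollary \ref{corwavetrans}, so the task is to extend those arguments to partitions of arbitrary length.

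The approach I would take is induction on the length of $\alpha$, following the template of Proposition \ref{propg1tz}. By the CKP analogue of Proposition \ref{onedtsesy}, the Darboux transformation acts as $e^{\pa^*_{q,q}}$ on tau-function-like objects, hence expanding the exponential reduces the problem to an infinitesimal identity of the shape
\begin{align*}
\langle\alpha|e^{H(t)}\beta^{2m}g|0\rangle=c_{\alpha,m}\,\pa^{*m}_{q,q}\bigl(\tau_\alpha(t)\bigr)
\end{align*}
for explicit combinatorial constants $c_{\alpha,m}$, generalising Lemma \ref{pamt0} and formula (\ref{g1boson}). To establish it one would apply $\langle\alpha|e^{H(t)}\otimes\langle\alpha'|e^{H(t')}$ to the supercommutation $S(1\otimes\beta^{2m})(g|0\rangle\otimes g|0\rangle)$ provided by Lemma \ref{lemmas}, then use the spectral representation (\ref{qspec}), the $\Omega$-expansion (\ref{omegaexpansion}), and Lemma \ref{2wave} to close a recursion simultaneously on $m$ and on the length of $\alpha$.

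The principal obstacle will be describing $\pa^*_{q,q}\tau_\alpha(t)$ explicitly for partitions $\alpha$ beyond those visible in $w(t,z)$: the relations (\ref{paqrobj}) and Lemma \ref{pamt0} cover only $\tau_0$ and $\tau_{(2j-1,1)}$, which are exactly the tau-components read off from the CKP wave function, whereas the longer $\alpha$'s sit in the Grassmann sector of $\tau(\mathbf{t})$ where no intrinsic action of the squared eigenfunction symmetry has been written down. Filling this gap requires extending the KP picture of Section 4.1 to the Bosonic Fock side. A more structural alternative, which I believe will ultimately be the cleanest route, is to verify directly on $\mathcal{F}$ that $e^{\beta^2/2}$ is the element of a suitable completion of $Sp_\infty$ implementing $T(q)$, by checking that conjugation by $e^{\beta^2/2}$ acts on $\phi(z)$ in the same way that $T(q)$ acts on the dressing operator $W$; Lemma \ref{lemmas} then guarantees $S\bigl(e^{\beta^2/2}\tau\otimes e^{\beta^2/2}\tau\bigr)=0$, and the matrix elements already verified in Corollary \ref{corwavetrans} pin down the normalisation, yielding the full vector identity.
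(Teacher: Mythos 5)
This statement is not proved in the paper: it is stated as Conjecture \ref{conjecture} and explicitly left open, with the remark following it (and the discussion in Section 6) identifying the obstruction, namely that the evolution equations of the wave functions $w_\alpha(t,z)$ for $\alpha_1=r>1$ cannot be determined from the BC$_r$ constraint $(W_\alpha\pa^{r-1}W_\alpha^*)_-=0$ alone, because the differential operator $P=W_\alpha\pa^{r-1}W_\alpha^*$ has positive order and the relation $\big((W_{\alpha,t_n}W_\alpha^{-1}+W_\alpha\pa^nW_\alpha^{-1})P\big)_{<0}=0$ admits many solutions for $W_{\alpha,t_n}$. Your proposal runs into exactly this wall and does not get past it. Your induction requires an identity $\langle\alpha|e^{H(t)}\beta^{2m}g|0\rangle=c_{\alpha,m}\,\pa^{*m}_{q,q}(\tau_\alpha(t))$, but for $\alpha$ outside the sectors $\emptyset$, $(1)$, $(\nu,1)$ the left-hand side of the conjectured transformation, $\tau_\alpha^{[1]}$, is not even defined: $T(q)$ and $\pa^*_{q,q}$ act on the Lax operator, the dressing operator $W$ and the wave function $w(t,z)=w_{(1)}(t,z)$, and these data only determine the components read off from $w(t,z)$, i.e.\ $\tau_0$ and $\tau_{(\nu,1)}$. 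There is no intrinsic squared-eigenfunction-symmetry action on the remaining $\tau_\alpha$, precisely because the dynamics of $w_\alpha$ is not fixed. You correctly name this as ``the principal obstacle,'' but naming it is not filling it, and the paper's authors state that ``something else involving free Bosons must be needed'' to overcome it.

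Your structural alternative also falls short of a proof. It is true that $e^{\beta^2/2}$ conjugates $V$ into itself (since $[\beta^2/2,\phi_k]$ is a multiple of $\beta$) and that $S(e^{\beta^2/2}\tau\otimes e^{\beta^2/2}\tau)=0$, so $e^{\beta^2/2}\tau$ satisfies the full bilinear equation \eqref{ckpbilinear}; and Corollary \ref{corwavetrans} already shows its $\alpha=\emptyset,(1),(\nu,1)$ matrix elements reproduce $w(t,z)^{[1]}$, $\tau_0^{[1]}$, $\tau_{(2j-1,1)}^{[1]}$. But the final step --- that the bilinear equation together with these components ``pins down'' the whole Fock vector, ``yielding the full vector identity'' --- is asserted without argument, and it is exactly the content of the conjecture: one would need to show either that a solution of \eqref{ckpbilinear} is determined by its $(1)$-sector, or that the Darboux transformation admits a canonical lift to all BC$_r$ sectors agreeing with $e^{\beta^2/2}$. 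Neither is established here, so the proposal is a plausible programme but not a proof, consistent with the statement remaining a conjecture in the paper.
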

\noindent{\bf Remark:} The major difficulty of this conjecture is to determine the evolution equations of $w_{\alpha}(t,z)$. One can refer to the remark below Eq. (\ref{ckplaxeq}) for more details. But it does not affect the applications of free Bosons in CKP hierarchy.

Next we will consider the successive applications of CKP Darboux transformations in the language of free Bosons. For this, denote $\rho_1^{[1]}(z)$ and $\beta_1^{[1]}\in V$ in the way below,
\begin{align*}
\rho_1^{[1]}(z)=\Omega(q_1^{[1]}(t'),\psi(t',-z)^{[1]}),
\quad\beta_1^{[1]}=-{\rm Res}_{z}\phi(z)\rho_1^{[1]}(z).
\end{align*}
By direct computation, one can obtain the lemma below
\begin{lemma}\label{transep}
Under the CKP Darboux transformation $T(q)$,
\begin{align*}
\rho_1^{[1]}(z)=\Omega(q_1(t')^{[1]},\psi(t',-z)^{[1]})=
\Omega(q_1(t'),\psi(t',-z))
-\frac{\Omega(q_1(t'),q(t'))}{\Omega(q(t'),q(t'))}\Omega(q(t'),\psi(t',-z)),
\end{align*}
and
\begin{align*}
\beta_1^{[1]}=
\beta_1
-\frac{\Omega(q_1(t'),q(t'))}{\Omega(q(t'),q(t'))}\beta,
\end{align*}
where $\beta,\beta_1\in V=\sum_{l\in \mathbb{Z}+1/2}\mathbb{C}\phi_l$ are corresponding to eigenfunctions $q$ and $q_1$ according to (\ref{qbosonexp}).
\end{lemma}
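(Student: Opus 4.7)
The plan is to derive both identities by specializing the KP binary Darboux transformation formulas of Proposition \ref{onedtsesy} to the CKP setting $r=q$, and then to translate the first identity into Bosonic language via the spectral representation.

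First I would recall from the proof of Proposition \ref{onedtsesy} that for the KP binary Darboux transformation $T(q,r)$, the SEP of a generic pair (eigenfunction, adjoint eigenfunction) transforms as
\begin{align*}
\Omega(q_1^{[1]}, r_1^{[1]}) = \Omega(q_1, r_1) - \frac{\Omega(q, r_1)\,\Omega(q_1, r)}{\Omega(q, r)}.
\end{align*}
In the CKP reduction the relevant Darboux transformation is $T(q)=T(q,q)$, so I would simply set $r=q$. Moreover, because the CKP constraint $L^*=-L$ forces $\psi(t,-z)$ to be the adjoint wave function, $\psi(t',-z)$ is a legitimate choice for the slot $r_1$ in the displayed formula. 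Substituting $q_1$ for $q'$ and $\psi(t',-z)$ for $r_1$ directly yields the first identity for $\rho_1^{[1]}(z)=\Omega(q_1^{[1]}(t'),\psi(t',-z)^{[1]})$.

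For the second identity, I would apply the linear operator $-{\rm Res}_z\phi(z)$ to both sides of the first identity. The key input is the spectral representation obtained in the proof of Proposition \ref{omegaqqboson}: each CKP eigenfunction $q_1$ corresponds to a unique Boson
\begin{align*}
\beta_1=-{\rm Res}_z\phi(z)\,\Omega(q_1(t'),\psi(t',-z))\in V,
\end{align*}
and analogously for $\beta$. Since the scalar coefficient $\Omega(q_1(t'),q(t'))/\Omega(q(t'),q(t'))$ does not depend on $z$, it can be pulled outside the residue, giving
\begin{align*}
\beta_1^{[1]}=-{\rm Res}_z\phi(z)\rho_1^{[1]}(z)
=\beta_1-\frac{\Omega(q_1(t'),q(t'))}{\Omega(q(t'),q(t'))}\beta,
\end{align*}
which is the second stated formula.

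The argument is a clean specialization once Proposition \ref{onedtsesy} is granted, so I do not anticipate a major technical obstacle. The one subtle point worth verifying is that $\psi(t,-z)^{[1]}$ actually coincides with the wave function of the Darboux-transformed CKP hierarchy, so that $\rho_1^{[1]}(z)$ is well defined as an SEP in the \emph{transformed} CKP system; this is precisely guaranteed by the compatibility of $T(q)$ with the involution $L\mapsto -L^*$, which is already built into the choice $T(q)=T(q,q)$ as the CKP Darboux transformation. Aside from this compatibility check and the mild nondegeneracy assumption $\Omega(q,q)\neq 0$, no further input is required.
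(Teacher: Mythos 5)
Your proposal is correct and matches the paper's (unwritten) "direct computation": the first identity is exactly the SEP transformation formula $\Omega(q'^{[1]},r'^{[1]})=\Omega(q',r')-\Omega(q,r')\Omega(q',r)/\Omega(q,r)$ already established in the proof of Proposition \ref{onedtsesy}, specialized to $r=q$, $q'=q_1$, $r'=\psi(t',-z)$, and the second follows by applying $-{\rm Res}_z\phi(z)$ and using the uniqueness of the spectral density $\rho_1(z)=-\Omega(q_1(t'),\psi(t',-z))$ from Lemma \ref{2wave}. Your remark that $\psi(t',-z)^{[1]}$ is indeed the adjoint wave function of the transformed CKP system is the right compatibility point to flag, and it is guaranteed by the choice $T(q)=T(q,q)$.
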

\noindent {\bf Remark:} By Corollary \ref{corwavetrans}, one can found that $\phi(z)^{[1]}=\phi(z)$.

Notice that $q_1^{[1]}$ can be expressed by the spectral representation in terms of $w(t,z)^{[1]}$, therefore one can obtain the proposition below by Corollary \ref{corwavetrans}.
\begin{proposition}
Under the Darboux transformation $T(q)$, the eigenfunction $q_1(t)$ of the CKP hierarchy can be expressed in terms of free Bosons
\begin{eqnarray}
q_1(t)^{[1]}=\frac{2\langle1|e^{H(t)}\beta_1^{[1]}e^{\frac{\beta^2}{2}} g|0\rangle}{\langle0|e^{H(t)}e^{\frac{\beta^2}{2}}g|0\rangle},
\end{eqnarray}
where $\beta_1^{[1]}$ satisfies the relation in Lemma \ref{transep}.
\end{proposition}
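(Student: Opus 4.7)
The plan is to combine the spectral representation (\ref{qspec}) for CKP eigenfunctions with the explicit free-Boson formula for $w(t,z)^{[1]}$ given in Corollary \ref{corwavetrans}, as suggested by the hint preceding the statement. Since $T(q)=T(q,q)$ preserves the CKP constraint $L^*=-L$, the transformed pair $(L^{[1]},q_1^{[1]})$ is again a CKP Lax operator with an eigenfunction of the CKP hierarchy; consequently the identity (\ref{qspec}), whose derivation only used Lemma \ref{newAB} and Taylor expansion in $x'$, applies verbatim with primed objects. This yields
\begin{align*}
-q_1^{[1]}(t)={\rm Res}_\lambda w(t,\lambda)^{[1]}\,\Omega\bigl(q_1^{[1]}(t'),w(t',-\lambda)^{[1]}\bigr)={\rm Res}_\lambda \rho_1^{[1]}(\lambda)\, w(t,\lambda)^{[1]},
\end{align*}
where $\rho_1^{[1]}(\lambda)=\Omega(q_1^{[1]}(t'),w(t',-\lambda)^{[1]})$ is precisely the series introduced just before Lemma \ref{transep}.

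Next I would substitute $w(t,\lambda)^{[1]}=2\langle 1|e^{H(t)}\phi(\lambda)e^{\beta^2/2}g|0\rangle/\langle 0|e^{H(t)}e^{\beta^2/2}g|0\rangle$ from Corollary \ref{corwavetrans}. Since $\rho_1^{[1]}(\lambda)\in\mathbb{C}((\lambda^{-1}))$ is a pure scalar Laurent series, it commutes past every operator appearing in the matrix element, so the residue can be absorbed inside to give
\begin{align*}
-q_1^{[1]}(t)=\frac{2\langle 1|e^{H(t)}\bigl({\rm Res}_\lambda \rho_1^{[1]}(\lambda)\phi(\lambda)\bigr)e^{\beta^2/2}g|0\rangle}{\langle 0|e^{H(t)}e^{\beta^2/2}g|0\rangle}.
\end{align*}
By the definition of $\beta_1^{[1]}$ in Lemma \ref{transep}, we have ${\rm Res}_\lambda \rho_1^{[1]}(\lambda)\phi(\lambda)={\rm Res}_\lambda \phi(\lambda)\rho_1^{[1]}(\lambda)=-\beta_1^{[1]}$, and the two minus signs cancel to produce exactly the claimed identity.

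The computation is short, and the subtlety I expect lies in justifying the invocation of (\ref{qspec}) in the transformed hierarchy rather than in any algebraic manipulation. This invocation is legitimate because $L^{[1]*}=-L^{[1]}$ by construction of the CKP Darboux transformation, so $w(t,z)^{[1]}$ satisfies the CKP bilinear equation (\ref{waveckpbilinear}) and $q_1^{[1]}$ is a genuine CKP eigenfunction for $L^{[1]}$; nothing in the derivation of (\ref{qspec}) uses features specific to the untransformed system. It is worth emphasizing that the argument does \emph{not} rely on the unresolved Conjecture \ref{conjecture} asserting $e^{\beta^2/2}g\in Sp_\infty$: only the matrix-element formulas for $w^{[1]}$ and $\tau_0^{[1]}$ furnished by Corollary \ref{corwavetrans}, together with Lemma \ref{transep}, are needed.
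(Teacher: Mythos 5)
Your proposal is correct and follows the same route as the paper, which simply notes that $q_1^{[1]}$ admits the spectral representation \eqref{qspec} with respect to the transformed wave function and then invokes Corollary \ref{corwavetrans}; your write-up merely makes the residue manipulation and the sign bookkeeping via $\beta_1^{[1]}=-{\rm Res}_z\phi(z)\rho_1^{[1]}(z)$ explicit. The added remarks on why \eqref{qspec} persists after $T(q)$ and on the independence from Conjecture \ref{conjecture} are sound and consistent with the paper's intent.
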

\noindent {\bf Remark:} $\frac{\Omega(q_1(t'),q(t'))}{\Omega(q(t'),q(t'))}$ in the relation between $\beta_1^{[1]}$ and $\beta_1$ in Lemma \ref{transep} can be viewed as one constant independent of $t$. We can set it to be zero, for example, when $\beta_1=\phi(z)$.

If consider the
following CKP Darboux transformation chain,
\begin{eqnarray}
&&L\xrightarrow{T(q_1)}L^{[1]}\xrightarrow{T(q^{[1]}_2)}L^{[2]}
  \rightarrow\cdots\rightarrow L^{[k-1]}\xrightarrow{T(q^{[k-1]}_k )}L^{[k]},\label{chaint}
\end{eqnarray}
and denote
\begin{eqnarray*}
\rho_i^{[j]}(z)=\Omega(q_i^{[j]}(t'),\psi(t',-z)^{[j]}),\quad \beta_i^{[j]}=-{\rm Res}_z\phi(z)\rho_i^{[j]}(z)
\end{eqnarray*}
and $T^{[\vec{\bf k}]}=T(q_{k}^{[k-1]})\cdots T(q_1)$, then one can obtain the following proposition.
\begin{proposition}\label{ndttd}
Under the $k$-step Darboux transformation $T^{[\vec{\bf k}]}$ (see chain (\ref{chaint})),
\begin{align*}
w(t,z)^{[k]}=\frac{2\langle1|e^{H(t)}\phi(z)e^{\frac{(\beta^{[k-1]}_k)^2}{2}}\cdots e^{\frac{(\beta^{[1]}_1)^2}{2}}e^{\frac{\beta_1^2}{2}} g|0\rangle}{\langle0|e^{H(t)}e^{\frac{(\beta^{[k-1]}_k)^2}{2}}\cdots e^{\frac{(\beta^{[1]}_1)^2}{2}}e^{\frac{\beta_1^2}{2}}g|0\rangle},\\
q(t)^{[k]}=\frac{2\langle1|e^{H(t)}\beta^{[k]}e^{\frac{(\beta^{[k-1]}_k)^2}{2}}\cdots e^{\frac{(\beta^{[1]}_1)^2}{2}}e^{\frac{\beta_1^2}{2}} g|0\rangle}{\langle0|e^{H(t)}e^{\frac{(\beta^{[k-1]}_k)^2}{2}}\cdots e^{\frac{(\beta^{[1]}_1)^2}{2}}e^{\frac{\beta_1^2}{2}}g|0\rangle}.
\end{align*}
\end{proposition}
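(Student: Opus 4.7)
The plan is to prove Proposition \ref{ndttd} by induction on $k$. The base case $k=1$ is precisely Corollary \ref{corwavetrans} together with the eigenfunction proposition immediately preceding it, so the work is in the induction step. I would split the chain (\ref{chaint}) as the first $k-1$ Darboux transformations followed by the final $T(q_k^{[k-1]})$. Writing
\begin{equation*}
\tilde g \,:=\, e^{(\beta_{k-1}^{[k-2]})^2/2}\cdots e^{(\beta_2^{[1]})^2/2}e^{\beta_1^2/2} g
\end{equation*}
for the ``effective element'' produced by the induction hypothesis, that hypothesis reads
\begin{equation*}
w(t,z)^{[k-1]} \;=\; \frac{2\langle 1|e^{H(t)}\phi(z)\,\tilde g|0\rangle}{\langle 0|e^{H(t)}\,\tilde g|0\rangle},\qquad q(t)^{[k-1]} \;=\; \frac{2\langle 1|e^{H(t)}\beta^{[k-1]}\,\tilde g|0\rangle}{\langle 0|e^{H(t)}\,\tilde g|0\rangle}.
\end{equation*}

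The induction step then amounts to applying the single-step result (Corollary \ref{corwavetrans} and the preceding eigenfunction proposition) to the one further Darboux transformation $T(q_k^{[k-1]})$ acting on the already $(k-1)$-transformed hierarchy. Formally this inserts one more factor $e^{(\beta_k^{[k-1]})^2/2}$ just to the left of $\tilde g$ in both numerator and denominator, producing exactly the formulas claimed. For the eigenfunction identity one uses the spectral representation (\ref{ckpspectralrepr}) applied to $w(t,z)^{[k]}$: the $\phi(z)$ in the wave-function formula is replaced by the Bosonic field $\beta^{[k]}$ corresponding to $q^{[k]}$, exactly as in the $k=1$ proposition.

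The key technical input that makes the induction step go through is Lemma \ref{transep}. Iterated along the chain, it shows that the field $\beta_k^{[k-1]}$, defined from the transformed SEPs $\Omega(q_k^{[j]},\psi(t',-z)^{[j]})$, is indeed the Bosonic representative of $q_k^{[k-1]}$ relative to the effective element $\tilde g$ in the sense of (\ref{qbosonexp}). Granted this, the arguments of Propositions \ref{proptau01} and \ref{propg1tz} — which rely only on the bilinear identity $S(\tau\otimes\tau)=0$, the spectral representation, Lemma \ref{lemmas} and Lemma \ref{2wave} — can be rerun verbatim with $g|0\rangle$ replaced by $\tilde g|0\rangle$, because all of these structural properties are preserved under CKP Darboux transformation.

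The main obstacle is precisely that $\tilde g$ does not literally lie in $Sp_\infty$, so Corollary \ref{corwavetrans} cannot be invoked as a black box at the inductive step; this is the content of Conjecture \ref{conjecture}. I would circumvent this by treating $\tilde g|0\rangle$ as an abstract Bosonic vector that satisfies the CKP bilinear equation (which follows from $S(\tau\otimes\tau)=0$ being preserved by the binary Darboux transformation, by Lemma \ref{lemmas} and the action of $e^{\beta^2/2}$), and verifying by hand that the inductive analogues of (\ref{2mproof}) and (\ref{omegaphibeta2m}) still hold. Concretely, one proves by a secondary induction on $m$ that
\begin{equation*}
\langle 0|e^{H(t)}(\beta_k^{[k-1]})^{2m}\tilde g|0\rangle = (2m-1)!!\,\Omega(q_k^{[k-1]},q_k^{[k-1]})^m\,\langle 0|e^{H(t)}\tilde g|0\rangle,
\end{equation*}
and the analogous identity with an extra $\phi(z)$ inserted, exactly as in the single-step proofs; the only new observation needed is that the SEP $\Omega(q_k^{[k-1]},q_k^{[k-1]})$ satisfies the same spectral representation in the $(k-1)$-transformed hierarchy, which follows from Lemma \ref{transep}. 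This reduces Proposition \ref{ndttd} to the already-proved single-step statements without having to resolve Conjecture \ref{conjecture}.
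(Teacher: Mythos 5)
Your proof is correct and follows the route the paper intends: the paper states Proposition \ref{ndttd} with no proof at all, presenting it as an immediate iteration of Corollary \ref{corwavetrans}, Lemma \ref{transep} and the one-step eigenfunction formula. Your induction on $k$ --- and in particular your observation that the bilinear relation $S(\tau\otimes\tau)=0$ is preserved under $g|0\rangle\mapsto e^{\beta^2/2}g|0\rangle$ by Lemma \ref{lemmas}, so that the single-step arguments of Propositions \ref{proptau01} and \ref{propg1tz} can be rerun even though the effective element $\tilde g$ no longer lies in $Sp_\infty$ --- supplies exactly the justification the paper omits.
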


\section{Additional symmetries of CKP hierarchy}
In this section, we will discuss the actions of the additional symmetries on the tau functions for the CKP hierarchy. For this, let us firstly review some backgrounds on the additional symmetries of the KP hierarchy.

The additional symmetry flow $\pa^*_{ml}$ on the KP Lax operator $\mathfrak{L}$ are introduced by\cite{ASvM,Dickey1995,OS}
$$\pa^*_{ml}\mathfrak{L}=-[(\mathfrak{M}^m\mathfrak{L}^l)_-,\mathfrak{L}],$$
where $\mathfrak{M}=\mathfrak{D}\cdot(\sum_{n=1}nt_n\pa^{n-1})\cdot\mathfrak{D}^{-1}$ with the dressing operator $\mathfrak{D}$ is Orlov-Shulman (OS) operator \cite{OS}. If denote the generator of the additional symmetries as follows,
$$Y(\mathfrak{t},\lambda,\mu)=\sum_{m=0}^\infty\frac{(\mu-\lambda)^m}{m!}
\sum_{l=-\infty}^\infty\lambda^{-l-m-1}(\mathfrak{M}^m\mathfrak{L}^{m+l})_-,$$
then one can obtain the proposition below \cite{Dickey1995}.
\begin{proposition}\cite{Dickey1995}
Given the wave function $\psi(\mathfrak{t},\mu)$ and adjoint wave functions $\psi^*(\mathfrak{t},\lambda)$ of the KP hierarchy corresponding to the Lax operator $\mathfrak{L}$,
$$Y(\mathfrak{t},\lambda,\mu)=\psi(\mathfrak{t},\mu)\pa^{-1}\psi^*(\mathfrak{t},\lambda).$$
\end{proposition}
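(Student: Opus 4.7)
The plan is to Taylor expand $\psi(\mathfrak{t},\mu)$ around $\mu=\lambda$ and match the two generating functions term by term. The essential tools are the eigenvalue relations $\mathfrak{L}\psi(\mathfrak{t},\lambda)=\lambda\psi(\mathfrak{t},\lambda)$ and $\mathfrak{M}\psi(\mathfrak{t},\lambda)=\pa_\lambda\psi(\mathfrak{t},\lambda)$. Both follow directly from $\psi=\mathfrak{D}e^{\eta(\mathfrak{t},\lambda)}$: the first uses $\pa e^{\eta}=\lambda e^{\eta}$ together with $\mathfrak{L}=\mathfrak{D}\pa\mathfrak{D}^{-1}$, while the second uses $\pa_\lambda e^{\eta(\mathfrak{t},\lambda)}=(\sum_n nt_n\lambda^{n-1})e^{\eta}$, the definition $\mathfrak{M}=\mathfrak{D}\,(\sum_n nt_n\pa^{n-1})\,\mathfrak{D}^{-1}$, and the $\lambda$-independence of $\mathfrak{D}$. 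Iterating gives $\mathfrak{M}^m\mathfrak{L}^{l'}\psi(\mathfrak{t},\lambda)=\lambda^{l'}\pa_\lambda^m\psi(\mathfrak{t},\lambda)$ for all $m\geq 0$ and $l'\in\mathbb{Z}$.

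I would then expand
\[\psi(\mathfrak{t},\mu)\pa^{-1}\psi^*(\mathfrak{t},\lambda)=\sum_{m\geq 0}\frac{(\mu-\lambda)^m}{m!}\bigl(\pa_\lambda^m\psi(\mathfrak{t},\lambda)\bigr)\pa^{-1}\psi^*(\mathfrak{t},\lambda),\]
and substitute $l'=l+m$ in the definition of $Y(\mathfrak{t},\lambda,\mu)$ to put it into the parallel form
\[Y(\mathfrak{t},\lambda,\mu)=\sum_{m\geq 0}\frac{(\mu-\lambda)^m}{m!}\sum_{l'\in\mathbb{Z}}\lambda^{-l'-1}(\mathfrak{M}^m\mathfrak{L}^{l'})_-.\]
Matching coefficients of $(\mu-\lambda)^m$ reduces the proposition, for each $m\geq 0$, to the generating-function identity
\[\bigl(\pa_\lambda^m\psi(\mathfrak{t},\lambda)\bigr)\pa^{-1}\psi^*(\mathfrak{t},\lambda)=\sum_{l'\in\mathbb{Z}}\lambda^{-l'-1}(\mathfrak{M}^m\mathfrak{L}^{l'})_-.\]
The case $m=0$ is the classical Sato-type identity, obtained by comparing the coefficient of each $\pa^{j}$ ($j\leq-1$) on both sides using Lemma \ref{newAB} together with the dressing form of $\mathfrak{L}^l$. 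For $m\geq 1$, the same coefficient-matching argument applies with $\mathfrak{L}^l$ replaced by $\mathfrak{M}^m\mathfrak{L}^{l'}$, the key point being that $\mathfrak{M}^m\mathfrak{L}^{l'}\psi=\lambda^{l'}\pa_\lambda^m\psi$, so the generating series in $\lambda^{-1}$ reassembles correctly.

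The main obstacle is the passage from $m=0$ to general $m$. A naive $\lambda$-differentiation of the $m=0$ identity also hits the factor $\psi^*(\mathfrak{t},\lambda)$, producing a spurious term $\psi\,\pa^{-1}\,\pa_\lambda\psi^*$ that does not appear on the desired right-hand side. One must show that when this term is combined with $\sum_l\pa_\lambda(\lambda^{-l-1})(\mathfrak{L}^l)_-$, the excess precisely accounts for the insertion of $\mathfrak{M}$ on the right, promoting $(\mathfrak{L}^l)_-$ to $(\mathfrak{M}\mathfrak{L}^{l})_-$. This cancellation, and its iteration for higher $m$, is essentially the argument carried out in Dickey's \cite{Dickey1995}; once it is in place, the proof concludes by reassembling the Taylor series into the claimed formula $Y(\mathfrak{t},\lambda,\mu)=\psi(\mathfrak{t},\mu)\pa^{-1}\psi^*(\mathfrak{t},\lambda)$.
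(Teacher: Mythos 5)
The paper offers no proof of this proposition: it is quoted directly from \cite{Dickey1995}, so there is nothing in-house to compare against. Your reconstruction is the standard argument and its core is correct: $\mathfrak{L}\psi=\lambda\psi$ and $\mathfrak{M}\psi=\pa_\lambda\psi$ follow from $\psi=\mathfrak{D}(e^{\eta})$ and the $\lambda$-independence of $\mathfrak{D}$; the reindexing $l'=l+m$ reduces everything to the per-$m$ identity $\sum_{l'}\lambda^{-l'-1}(\mathfrak{M}^m\mathfrak{L}^{l'})_-=(\pa_\lambda^m\psi)\pa^{-1}\psi^*$; and that identity is exactly the residue lemma $P_-={\rm Res}_z\big((P\psi)(\mathfrak{t},z)\,\pa^{-1}\psi^*(\mathfrak{t},z)\big)$ (Lemma \ref{newAB} with $A=P\mathfrak{D}$ and $B=(\mathfrak{D}^{-1})^*$, so that $AB^*=P$) applied to $P=\mathfrak{M}^m\mathfrak{L}^{l'}$ and resummed against the formal delta function $\sum_{l'}\lambda^{-l'-1}z^{l'}$, using $\mathfrak{M}^m\mathfrak{L}^{l'}\psi=\lambda^{l'}\pa_\lambda^m\psi$. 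The one thing to repair is your closing paragraph: the ``main obstacle'' you describe (the spurious $\psi\,\pa^{-1}\pa_\lambda\psi^*$ term) arises only if one tries to obtain the $m\ge 1$ cases by $\lambda$-differentiating the $m=0$ identity, a strategy you do not need. The residue lemma holds for any operator with $z$-independent coefficients, so it handles every $m$ directly and uniformly; nothing remains to be deferred to \cite{Dickey1995}, and presenting that deferral as the crux wrongly suggests an essential step is missing. The only caveat genuinely worth recording is that $\mathfrak{M}^m\mathfrak{L}^{l'}$ has unbounded positive order, so one should remark that its $\pa^{j}$-coefficients for $j\le -1$ are still well defined and that the residue computation is formal in the usual sense.
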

Therefore, one can find the the generator of the additional symmetries is just the squared eigenfunction symmetry \cite{Dickey1995,Aratyn}
$$\pa^*_{\psi(\mathfrak{t},\mu),\psi^*(\mathfrak{t},\lambda)}=-\sum_{m=0}^\infty\frac{(\mu-\lambda)^m}{m!}
\sum_{l=-\infty}^\infty\lambda^{-l-m-1}\pa^*_{m,m+l}.$$
The actions of the additional symmetries on the wave function $\psi(\mathfrak{t},z)$ and tau function $\tau_{\rm KP}(\mathfrak{t})$ are related by so-called famous Adler-Shiota-van Moerbeke-Dickey formula \cite{ASvM,Dickey1995}, that is,
$$\pa^*_{\psi(\mathfrak{t},\mu),\psi^*(\mathfrak{t},\lambda)}\psi(\mathfrak{t},z)
=\psi(\mathfrak{t},z)\Big(e^{\sum_{l=1}^\infty\frac{1}{l}z^{-1}\frac{\pa}{\pa t_l}}-1\Big)\frac{X(\mathfrak{t},\lambda,\mu)\tau_{\rm KP}(\mathfrak{t})}{\tau_{\rm KP}(\mathfrak{t})}.$$
This can be obtained from $$\pa^*_{\psi(\mathfrak{t},\mu),\psi^*(\mathfrak{t},\lambda)}
\tau_{\rm KP}(\mathfrak{t})=-
\Omega(\psi(\mathfrak{t},\mu),\psi^*(\mathfrak{t},\lambda))
\tau_{\rm KP}(\mathfrak{t})\quad\text{(see (\ref{paqrobj}))}$$
and the following expression of SEP \cite{Aratyn}
$$\Omega(\psi(\mathfrak{t},\mu),\psi^*(\mathfrak{t},\lambda))
=-\frac{X(\mathfrak{t},\lambda,\mu)\tau_{\rm KP}(\mathfrak{t})}{\tau_{\rm KP}(\mathfrak{t})}$$
with the vertex operator given by
\begin{eqnarray*}
X(\mathfrak{t},\lambda ,\mu )&=&\frac{1}{\lambda}e^{\xi(\mathfrak{t}+[\lambda^{-1}],\mu)
-\xi(\mathfrak{t},\lambda)}e^{\sum_{l=1}^\infty\frac{1}{l}(\lambda^{-1}-\mu^{-1})\frac{\pa}{\pa t_l}}\\
  &=&-\frac{1}{\mu}e^{\xi(\mathfrak{t},\mu)-\xi(\mathfrak{t}-[\mu^{-1}],\lambda)}e^{\sum_{l=1}^\infty\frac{1}{l}(\lambda^{-1}-\mu^{-1})\frac{\pa}{\pa t_l}}+\delta(\lambda,\mu).
\end{eqnarray*}
Here $\delta(\lambda,\mu)=\frac{1}{\lambda}\frac{1}{1-\mu/\lambda}+\frac{1}{\mu}\frac{1}{1-\lambda/\mu}$ has support $\lambda\neq \mu$, that is, we can believe that $\delta(\lambda,\mu)$ is zero when  $\lambda\neq \mu$.

As for the CKP hierarchy, the corresponding additional symmetry on the Lax operator $L$ \cite{He2007} is given by
\begin{align*}
\pa^*_{ml}L=-[(M^mL^l-(-1)^lL^lM^m)_-,L],
\end{align*}
where $M=W\cdot(\sum_{n\in 1+2\mathbb{Z}_{\geq 0}}nt_n\pa^{n-1})\cdot W^{-1}$ is the OS operator of the CKP hierarchy \cite{He2007}.
Then the corresponding generator of CKP additonal symmetries \cite{He2007}
$$Y^C(\lambda,\mu)=\sum_{m=0}^\infty\frac{(\mu-\lambda)^m}{m!}
\sum_{l=-\infty}^\infty\lambda^{-l-m-1}(M^mL^{m+l}-(-1)^{m+l}L^{m+l}M^m)_-,$$
are given by the following proposition.
\begin{proposition}\cite{He2007}
$$Y^C(\lambda,\mu)=w(t,\mu)\pa^{-1}w(t,-\lambda)+w(t,-\lambda)\pa^{-1}w(t,\mu).$$
\end{proposition}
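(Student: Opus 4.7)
The plan is to reduce the CKP generator formula to the KP generator formula stated just above (the Dickey proposition) by combining the CKP involutive constraints $L^*=-L$ and $W^*=W^{-1}$ with a short adjoint computation. The crucial observation is that, for the CKP hierarchy, the Orlov--Schulman operator $M=W\cdot\bigl(\sum_{n\in 1+2\mathbb{Z}_{\geq 0}}nt_n\pa^{n-1}\bigr)\cdot W^{-1}$ is \emph{self-adjoint}. Indeed, since only odd $n$ occur, $n-1$ is even so $(-\pa)^{n-1}=\pa^{n-1}$; combined with $t_n$ being independent of $x=t_1$ for $n\geq 3$ (and the $n=1$ term $t_1$ being manifestly self-adjoint as a multiplication operator), one obtains $\bigl(\sum nt_n\pa^{n-1}\bigr)^*=\sum nt_n\pa^{n-1}$, and then $M^*=(W^{-1})^*\cdot(\cdots)\cdot W^*=M$ using $(W^{-1})^*=W$. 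Together with $L^*=-L$, this yields
\begin{equation*}
(L^{m+l}M^m)^*=M^{*m}(L^*)^{m+l}=(-1)^{m+l}M^mL^{m+l},
\end{equation*}
hence, after projecting onto the negative-order part (which commutes with the adjoint):
\begin{equation*}
(-1)^{m+l}(L^{m+l}M^m)_-=\bigl((M^mL^{m+l})_-\bigr)^*.
\end{equation*}

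With this identity the definition of $Y^C$ rewrites as $Y^C(\lambda,\mu)=\Sigma(\lambda,\mu)-\Sigma(\lambda,\mu)^*$, where
\begin{equation*}
\Sigma(\lambda,\mu)=\sum_{m=0}^{\infty}\frac{(\mu-\lambda)^m}{m!}\sum_{l\in\mathbb{Z}}\lambda^{-l-m-1}(M^mL^{m+l})_-.
\end{equation*}
The KP generator formula of the preceding proposition, applied to the pair $(L,M)$, gives $\Sigma(\lambda,\mu)=\psi(t,\mu)\pa^{-1}\psi^*(t,\lambda)$. For the CKP hierarchy, the adjoint wave function collapses to $\psi^*(t,\lambda)=(W^*)^{-1}(e^{-\xi(t,\lambda)})=W(e^{\xi(t,-\lambda)})=w(t,-\lambda)$, where I have used $W^*=W^{-1}$ and the identity $\xi(t,-\lambda)=-\xi(t,\lambda)$, which holds because only odd $t_k$ appear in $\xi$. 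Therefore $\Sigma(\lambda,\mu)=w(t,\mu)\pa^{-1}w(t,-\lambda)$; combining $(\pa^{-1})^*=-\pa^{-1}$ with the self-adjointness of the multiplication operators $w(t,\pm\lambda)$ gives $\Sigma(\lambda,\mu)^*=-w(t,-\lambda)\pa^{-1}w(t,\mu)$, and substituting produces exactly the claimed expression for $Y^C(\lambda,\mu)$.

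The main obstacle I would verify most carefully is the self-adjointness $M^*=M$. In the KP case the analogous identity \emph{fails} because even times enter the middle factor with odd powers of $\pa$, which pick up signs under the adjoint; it is precisely the restriction to odd times in the CKP setting that makes this middle factor self-adjoint, and this is ultimately what produces the symmetrized double-term form of $Y^C$. Once this point is secured, the rest of the argument is a mechanical reduction to the KP result together with the CKP identification $\psi^*(t,\lambda)=w(t,-\lambda)$.
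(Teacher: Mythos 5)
Your argument is correct. The paper itself gives no proof of this proposition (it is quoted from \cite{He2007}), and your derivation is essentially the standard one found there: the involutive constraints $L^*=-L$, $W^*=W^{-1}$ give $M^*=M$ and hence $(M^mL^{m+l})^*_-=(-1)^{m+l}(L^{m+l}M^m)_-$, so $Y^C=\Sigma-\Sigma^*$ with $\Sigma$ handled by the Dickey KP formula and $\psi^*(t,\lambda)=w(t,-\lambda)$. The only step you should make explicit is that the KP generator formula applies verbatim to the CKP Orlov--Schulman operator built from odd times only: this holds because $Lw=zw$ and $Mw=\pa_z w$ (the latter since $\xi(t,z)$ contains only odd powers of $z$), which is all Dickey's residue computation uses.
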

Therefore
$$\pa^*_{w(t,\mu),w(t,-\lambda)}+\pa^*_{w(t,-\lambda),w(t,\mu)}=-\sum_{m=0}^\infty\frac{(\mu-\lambda)^m}{m!}
\sum_{l=-\infty}^\infty\lambda^{-l-m-1}\pa^*_{m,m+l}.$$
If set $\pa^*_{\lambda\mu}=\pa^*_{w(t,\mu),w(t,-\lambda)}+\pa^*_{w(t,-\lambda),w(t,\mu)}$ for short, then it can be found that
\begin{align}
\pa^*_{\lambda\mu}
=\pa^*_{w(t,\mu)+w(t,-\lambda),w(t,\mu)+w(t,-\lambda)}-\pa^*_{w(t,\mu),w(t,\mu)}-\pa^*_{w(t,-\lambda),w(t,-\lambda)},
\end{align}
that is,
\begin{align*}
\pa^*_{\lambda\mu}L=&
\Big[\big(w(t,\mu)+w(t,-\lambda))\pa^{-1}(w(t,\mu)+w(t,-\lambda)\big)\nonumber\\
&-w(t,\mu)\pa^{-1}w(t,\mu)-w(t,-\lambda)\pa^{-1}w(t,-\lambda),L\Big].
\end{align*}
Thus if denote $w(t,z)^{\{1\}}\triangleq e^{\pa^*_{\lambda\mu}}\Big(w(t,z)\Big)$, by using Lemma \ref{paqrcomm} and Corollary \ref{cordtkstep} one can find that
\begin{align*}
w(t,z)^{\{1\}}
=&
e^{-\pa^*_{w(t,\mu),w(t,\mu)}}\cdot
e^{-\pa^*_{w(t,-\lambda),w(t,-\lambda)}}\cdot
e^{\pa^*_{w(t,\mu)+w(t,-\lambda),w(t,\mu)+w(t,-\lambda)}}\Big(w(t,z)\Big)\nonumber\\
=&\frac{2\langle1|e^{H(t)}\phi(z)e^{-\frac{\phi(\mu)^2}{2}} e^{-\frac{\phi(-\lambda)^2}{2}}e^{\frac{(\phi(\mu)+\phi(-\lambda))^2}{2}} g|0\rangle}{\langle0|e^{H(t)}e^{-\frac{\phi(\mu)^2}{2}} e^{-\frac{\phi(-\lambda)^2}{2}}e^{\frac{(\phi(\mu)+\phi(-\lambda))^2}{2}}g|0\rangle},
\end{align*}
where we have used $\phi(\mu)^{[1]}=\phi(\mu)$ and $-\pa^*_{w(t,\mu),w(t,\mu)}=\pa^*_{\sqrt {-1}w(t,\mu),\sqrt {-1}w(t,\mu)}$

On the other hand, one can rewrite the commutation relation of free Bosons into
$$\phi(\mu)\phi(-\lambda)-\phi(-\lambda)\phi(\mu)=\delta(\lambda,\mu).$$
Therefore when $\mu\neq\lambda$, we can believe that $\phi(\mu)$ commute with $\phi(\lambda)$. Based upon this,
\begin{align*}
e^{-\frac{\phi(\mu)^2}{2}} e^{-\frac{\phi(-\lambda)^2}{2}}e^{\frac{(\phi(\mu)+\phi(-\lambda))^2}{2}}
=e^{\phi(\mu)\phi(-\lambda)}
\end{align*}
So one can at last obtain the proposition below.
\begin{proposition}
When $\mu\neq\lambda$,
\begin{align*}
w(t,z)^{\{1\}}=\frac{2\langle1|e^{H(t)}\phi(z)e^{\phi(\mu)\phi(-\lambda)} g|0\rangle}{\langle0|e^{H(t)}e^{\phi(\mu)\phi(-\lambda)}g|0\rangle},
\end{align*}
and
\begin{align*}
\tau_0^{\{1\}}(t)=
\langle0|e^{H(t)}e^{\phi(\mu)\phi(-\lambda)}g|0\rangle,\quad
\tau_{(2j-1,1)}^{\{1\}}(t)=\langle(2j-1,1)|e^{H(t)}e^{\phi(\mu)\phi(-\lambda)}g|0\rangle,\quad j=2,3,\cdots,
\end{align*}
where $\tau_0^{\{1\}}$ and $\tau_{(2j-1,1)}^{\{1\}}$ be the tau functions corresponding to $w(t,z)^{\{1\}}$.
Further
\begin{align*}
\pa^*_{\lambda\mu}\tau_0(t)
=\langle0|e^{H(t)}\phi(\mu)\phi(-\lambda)g|0\rangle,\quad
\pa^*_{\lambda\mu}\tau_{(2j-1,1)}(t)=\langle(2j-1,1)|e^{H(t)}\phi(\mu)\phi(-\lambda)g|0\rangle.
\end{align*}
\end{proposition}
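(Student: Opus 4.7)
The plan is to build all four identities in the statement from the expression for $w(t,z)^{\{1\}}$ that has already been derived in the paragraph preceding the proposition, so the work essentially reduces to one algebraic manipulation of exponentials of Bosonic fields together with the tau-function bookkeeping supplied by Corollary \ref{corwavetrans}. First I would take as given the intermediate formula
$$w(t,z)^{\{1\}}=\frac{2\langle1|e^{H(t)}\phi(z)e^{-\phi(\mu)^2/2}e^{-\phi(-\lambda)^2/2}e^{(\phi(\mu)+\phi(-\lambda))^2/2}g|0\rangle}{\langle0|e^{H(t)}e^{-\phi(\mu)^2/2}e^{-\phi(-\lambda)^2/2}e^{(\phi(\mu)+\phi(-\lambda))^2/2}g|0\rangle},$$
which comes from writing $\pa^*_{\lambda\mu}$ as the difference $\pa^*_{w(t,\mu)+w(t,-\lambda),\,w(t,\mu)+w(t,-\lambda)}-\pa^*_{w(t,\mu),w(t,\mu)}-\pa^*_{w(t,-\lambda),w(t,-\lambda)}$ of three squared eigenfunction symmetries and applying Corollary \ref{cordtkstep} together with the rescaling trick $-\pa^*_{w(t,\mu),w(t,\mu)}=\pa^*_{\sqrt{-1}\,w(t,\mu),\sqrt{-1}\,w(t,\mu)}$ to exponentiate each factor.

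The central step is to collapse the three exponentials into a single one. For $\mu\neq\lambda$ the commutator $\phi(\mu)\phi(-\lambda)-\phi(-\lambda)\phi(\mu)=\delta(\lambda,\mu)$ vanishes, so all polynomials in $\phi(\mu)$ and $\phi(-\lambda)$ mutually commute; the Baker-Campbell-Hausdorff expansion is trivial and the three exponents simply add to
$$-\frac{\phi(\mu)^2}{2}-\frac{\phi(-\lambda)^2}{2}+\frac{(\phi(\mu)+\phi(-\lambda))^2}{2}=\phi(\mu)\phi(-\lambda).$$
Substituting this into the ratio yields the first claimed identity for $w(t,z)^{\{1\}}$. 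Since the resulting expression has exactly the same shape as $w(t,z)$ but with the group element $g$ replaced by $g^{\sharp}:=e^{\phi(\mu)\phi(-\lambda)}g$, the general tau-function formula \eqref{taualpha} immediately gives $\tau_\alpha^{\{1\}}(t)=\langle\alpha|e^{H(t)}g^{\sharp}|0\rangle$ for every partition $\alpha$, and specializing to $\alpha=\emptyset$ and $\alpha=(2j-1,1)$ produces the next two formulas of the proposition.

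Finally, I would obtain the infinitesimal formulas by inserting an auxiliary parameter $\epsilon$: running the entire derivation with $\epsilon\pa^*_{\lambda\mu}$ in place of $\pa^*_{\lambda\mu}$ yields $\tau_0^{\{\epsilon\}}(t)=\langle 0|e^{H(t)}e^{\epsilon\phi(\mu)\phi(-\lambda)}g|0\rangle$ and likewise for $\tau_{(2j-1,1)}^{\{\epsilon\}}$, and differentiating at $\epsilon=0$ produces the advertised formulas for $\pa^*_{\lambda\mu}\tau_0$ and $\pa^*_{\lambda\mu}\tau_{(2j-1,1)}$. The main subtlety I expect to have to handle is the rigorous meaning of the commutativity step: $\delta(\lambda,\mu)$ is a formal distribution rather than an honest function, so the equality of exponentials must be read as an identity between formal Laurent-series coefficients away from the diagonal $\lambda=\mu$, exactly in the sense that the authors explicitly adopt in the sentence immediately preceding the proposition. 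Once this convention is accepted, no further analytic obstacle remains and the proof reduces to bookkeeping.
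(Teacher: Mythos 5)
Your proposal is correct and follows essentially the same route as the paper: the decomposition of $\pa^*_{\lambda\mu}$ into three squared eigenfunction symmetries, the exponentiation via Corollary \ref{cordtkstep} and Proposition \ref{ndttd}, the collapse $e^{-\phi(\mu)^2/2}e^{-\phi(-\lambda)^2/2}e^{(\phi(\mu)+\phi(-\lambda))^2/2}=e^{\phi(\mu)\phi(-\lambda)}$ off the diagonal, the identification of the transformed tau functions via Corollary \ref{corwavetrans}, and the first-order $\epsilon$ expansion for the infinitesimal actions. No substantive difference from the paper's argument.
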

\begin{proof}
The results of $w(t,z)^{\{1\}}$, $\tau_0^{\{1\}}$ and $\tau_{(2j-1,1)}^{\{1\}}$ can be obtained by Corollary \ref{corwavetrans} and Proposition \ref{ndttd}. While for $\pa^*_{\lambda\mu}\tau_0(t)$ and  $\pa^*_{\lambda\mu}\tau_{(2j-1,1)}(t)$, one can consider the the first order of $\epsilon$ expansion in $e^{\epsilon\pa^*_{\lambda\mu}}\Big(w(t,z)\Big)$ or compute directly by Lemma \ref{pamt0} and \eqref{g1boson}.
\end{proof}
Next we will compute the explicit forms of $\pa^*_{\lambda\mu}\tau_0(t)$ and $\pa^*_{\lambda\mu}\tau_{(2j-1,1)}(t)$.
\begin{proposition}\label{propaddontau}
The actions of $\pa^*_{\lambda\mu}$ on the CKP tau functions are given by
\begin{align*}
\pa^*_{\lambda\mu}\tau_0(t)=&e^{\xi(t,\mu)-\xi(t,\lambda)}
e^{\xi(\tilde{\pa},\mu^{-1})-\xi(\tilde{\pa},\lambda^{-1})}
\frac{\mu+\lambda}{\mu-\lambda}\\
&\times\left(\frac{\mu+\lambda}{(\mu-\lambda)^2}\tau_0(t)
+4\sum_{k\neq l}\mu^{-l}(-\lambda)^{-k}\tau_{(2k-1,2l-1)}(t)\right),\\
\pa^*_{\lambda\mu}\tau_{(2j-1,1)}(t)=&e^{\xi(t,\mu)-\xi(t,\lambda)}
e^{\xi(\tilde{\pa},\mu^{-1})-\xi(\tilde{\pa},\lambda^{-1})}
\frac{\mu+\lambda}{\mu-\lambda}\\
&\times\Big(4\sum_{k\neq l>0}\mu^{-l}(-\lambda)^{-k}\tau_{(2k-1,2l-1,2j-1,1)}(t)\\
&+\frac{\mu+\lambda}{(\mu-\lambda)^2}
\tau_{(2j-1,1)}(t)+(2j-1)\sum_{k>0}\big((-\mu)^{j-1}(-\lambda)^k
-\lambda^{j-1}\mu^k\big)\tau_{(2k-1,1)}(t)\\
&+\sum_{k>0}((-\lambda)^{-k}-\mu^{-k})\tau_{(2k-1,2j-1)}(t)
+\frac{1}{4}(2j-1)\left(\lambda^{j-1}-(-\mu)^{j-1}\right)\tau_0(t)\Big),
\end{align*}
where $k$ and $l$ in $\tau_{(2k-1,2l-1)}(t)$ (resp.  $\tau_{(2k-1,2l-1,2j-1,1)}(t)$) may not be $k>l$ (resp. $k>l>j$).
\end{proposition}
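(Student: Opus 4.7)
The plan is to start from the explicit bosonic identities
\begin{equation*}
\pa^*_{\lambda\mu}\tau_0(t)=\langle 0|e^{H(t)}\phi(\mu)\phi(-\lambda)g|0\rangle,\qquad \pa^*_{\lambda\mu}\tau_{(2j-1,1)}(t)=\langle(2j-1,1)|e^{H(t)}\phi(\mu)\phi(-\lambda)g|0\rangle
\end{equation*}
proved in the previous proposition, and to push each right-hand side through the boson-(boson+fermion) correspondence $\sigma$ of Proposition \ref{fockiso}. Writing $\sigma\phi(z)\sigma^{-1}=A(z)\Phi(z)$ with
\begin{equation*}
A(z)=e^{\xi(t,z)}e^{\xi(\tilde\pa,z^{-1})},\qquad \Phi(z)=\sum_{j\geq 1}\Bigl((j-\tfrac12)\,t_{\frac{2j-1}{2}}(-z)^{j-1}+2z^{-j}\tfrac{\pa}{\pa t_{\frac{2j-1}{2}}}\Bigr),
\end{equation*}
I would exploit the fact that $A(z)$ touches only the odd-integer times $t_k$ while $\Phi(z)$ touches only the Grassmann times $s_j=t_{\frac{2j-1}{2}}$. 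Hence in $\sigma\phi(\mu)\phi(-\lambda)\sigma^{-1}$ the $A$'s and $\Phi$'s mutually commute and regroup as $A(\mu)A(-\lambda)\cdot\Phi(\mu)\Phi(-\lambda)$, which I then apply to $\tau(\mathbf{t})=\sum_{\alpha\in\rm ODP_{ev}}\tau_\alpha(t)\xi_\alpha$ before setting $t_{\mathrm{odd}}=0$.

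The integer-time factor is disposed of by Baker-Campbell-Hausdorff: the only nonzero commutator is
\begin{equation*}
[\xi(\tilde\pa,\mu^{-1}),\xi(t,-\lambda)]=2\!\!\sum_{k\in 1+2\mathbb{Z}_{\geq 0}}\!\!\tfrac{(-\lambda/\mu)^k}{k}=\log\tfrac{\mu-\lambda}{\mu+\lambda},
\end{equation*}
which, combined with the parity identities $\xi(t,-\lambda)=-\xi(t,\lambda)$ and $\xi(\tilde\pa,-\lambda^{-1})=-\xi(\tilde\pa,\lambda^{-1})$ (valid because the indices $k$ involved are all odd), yields the exponential prefactor $e^{\xi(t,\mu)-\xi(t,\lambda)}e^{\xi(\tilde\pa,\mu^{-1})-\xi(\tilde\pa,\lambda^{-1})}$ together with the rational scalar factor displayed in the statement.

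For the Grassmann factor I would split $\Phi(z)=\Phi^+(z)+\Phi^-(z)$ into its creation (multiplication by $s_j$) and annihilation ($\pa/\pa s_j$) parts so that $\Phi(\mu)\Phi(-\lambda)$ becomes a sum of four terms $\Phi^{\epsilon}(\mu)\Phi^{\eta}(-\lambda)$, each with a definite effect on $s$-degree. For $\pa^*_{\lambda\mu}\tau_0$ I retain only the $s$-independent part of $\Phi(\mu)\Phi(-\lambda)\tau(\mathbf{t})$: the mixed piece $\Phi^-(\mu)\Phi^+(-\lambda)$ applied to $\xi_\emptyset$ sums in closed form via the elementary identity $\sum_{j\geq 1}(j-\tfrac12)r^j=r(1+r)/[2(1-r)^2]$ and produces $\tfrac{\mu+\lambda}{(\mu-\lambda)^2}\tau_0$, while $\Phi^-(\mu)\Phi^-(-\lambda)$ applied to length-2 $\xi_\alpha=s_ks_l$ produces $4\sum_{k\neq l>0}\mu^{-l}(-\lambda)^{-k}\tau_{(2k-1,2l-1)}(t)$; the extension of the summation from $k>l$ to $k\neq l$ is justified by the antisymmetry $\tau_{(2k-1,2l-1)}=-\tau_{(2l-1,2k-1)}$ inherited from $H_{i}H_{j}=-H_{j}H_{i}$ for $i,j\in\tfrac12+\mathbb{Z}_{>0}$. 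For $\pa^*_{\lambda\mu}\tau_{(2j-1,1)}$ one instead extracts the coefficient of $s_js_1$ by applying $\tfrac{\pa}{\pa s_1}\tfrac{\pa}{\pa s_j}$ before setting $t_{\mathrm{odd}}=0$; now all four $\Phi^\epsilon\Phi^\eta$ pieces contribute — $\Phi^+\Phi^+$ on $\xi_\emptyset$ supplies the $\tfrac{2j-1}{4}(\lambda^{j-1}-(-\mu)^{j-1})\tau_0$ term, the mixed pieces on length-2 $\xi_\alpha$'s produce the two middle sums involving $\tau_{(2k-1,1)}$ and $\tau_{(2k-1,2j-1)}$, and $\Phi^-\Phi^-$ on length-4 $\xi_\alpha=s_ks_ls_js_1$ yields both the length-4 sum and the self-contracted $\tfrac{\mu+\lambda}{(\mu-\lambda)^2}\tau_{(2j-1,1)}$ piece.

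The main obstacle is the Grassmann sign bookkeeping, especially in the $\tau_{(2j-1,1)}$ formula where the contributions from partitions of length $0$, $2$, and $4$ proliferate. Each $\Phi^-$ must be moved past the $s$-monomials of $\xi_\alpha$ via the super-Leibniz rule $\tfrac{\pa}{\pa s_m}(s_l f)=\delta_{ml}f-s_l\tfrac{\pa}{\pa s_m}f$, and the resulting contributions to the coefficient of a prescribed $s$-monomial must be gathered from all relevant $\alpha$, each paired with the correctly antisymmetrized $\tau_\alpha$. Once this bookkeeping is completed, multiplying through by the simplified $A(\mu)A(-\lambda)$ produces the two formulas in the statement.
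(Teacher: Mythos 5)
Your proposal is correct and follows essentially the same route as the paper: start from $\pa^*_{\lambda\mu}\tau_\alpha=\langle\alpha|e^{H(t)}\phi(\mu)\phi(-\lambda)g|0\rangle$, conjugate $\phi(\mu)\phi(-\lambda)$ by $\sigma$ into the product of two vertex operators, normal-order the Heisenberg exponentials to produce the prefactor $\frac{\mu+\lambda}{\mu-\lambda}e^{\xi(t,\mu)-\xi(t,\lambda)}e^{\xi(\tilde{\pa},\mu^{-1})-\xi(\tilde{\pa},\lambda^{-1})}$, and then expand the Grassmann bilinear against $\tau(\mathbf{t})=\sum_\alpha\tau_\alpha(t)\xi_\alpha$; your creation/annihilation split of the Grassmann factor and direct coefficient extraction is just a reorganization of the paper's computation of $\bigl[\tfrac{\pa}{\pa t_{(2j-1)/2}}\tfrac{\pa}{\pa t_{1/2}},X_c(\lambda,\mu)\bigr]$ via anticommutators. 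The one point to recheck is the scalar from the normal ordering: your own formula $[\xi(\tilde{\pa},\mu^{-1}),\xi(t,-\lambda)]=\log\tfrac{\mu-\lambda}{\mu+\lambda}$ literally exponentiates to the reciprocal of the factor $\tfrac{\mu+\lambda}{\mu-\lambda}$ you (and the paper) then quote, so the sign bookkeeping there should be reconciled.
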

\begin{proof}
Let us firstly compute $\sigma\phi(\mu)\phi(-\lambda)\sigma^{-1}$. For this,
define the vertex operators $X_c(z)$ and $X_c(\lambda,\mu)$ as follows
\begin{eqnarray*}
X_c(z)\triangleq\sigma\phi(z)\sigma^{-1}=e^{\xi(t,z)}e^{\xi(\tilde{\pa},z^{-1})}A(t,z),\quad
X_c(\lambda,\mu)=X_c(\mu)X_c(-\lambda),
\end{eqnarray*}
where $\tilde{\partial}=(2\pa_{t_1},\frac{2\pa_{t_3}}{3},\cdots)$ and
\begin{eqnarray*}
A(t,z)&=&\sum_{0<j\in\mathbb{Z}}\Big(\frac{2j-1}{2}t_{\frac{2j-1}{2}}(-z)^{j-1}
+2\frac{\pa}{\pa t_{\frac{2j-1}{2}}}z^{-j} \Big).
\end{eqnarray*}
Therefore $\sigma\phi(\mu)\phi(-\lambda)\sigma^{-1}=X_c(\lambda,\mu)$. By using $e^{\xi(\tilde{\pa},\mu^{-1})} e^{-\xi(t,\lambda)}=\frac{\mu+\lambda}{\mu-\lambda}e^{-\xi(t,\lambda)}
e^{\xi(\tilde{\pa},\mu^{-1})}
$,
\begin{eqnarray*}
X_c(\lambda,\mu)&=&e^{\xi(t,\mu)}e^{\xi(\tilde{\pa},\mu^{-1})}A(t,\mu)e^{-\xi(t,\lambda)}
e^{-\xi(\tilde{\pa},\lambda^{-1})}A(t,-\lambda)\nonumber\\
&=&\frac{\mu+\lambda}{\mu-\lambda}e^{\xi(t,\mu)-\xi(t,\lambda)}
e^{\xi(\tilde{\pa},\mu^{-1})-\xi(\tilde{\pa},\lambda^{-1})}A(t,\mu)A(t,-\lambda).
\end{eqnarray*}
By computing $A(t,\mu)A(t,-\lambda)$, one can at last obtain
\begin{align*}
X_c(\lambda,\mu)&=\frac{\mu+\lambda}{\mu-\lambda}e^{\xi(t,\mu)-\xi(t,\lambda)}
e^{\xi(\tilde{\pa},\mu^{-1})-\xi(\tilde{\pa},\lambda^{-1})}\\
&\times\Big(\frac{\mu+\lambda}{(\mu-\lambda)^2}+\sum_{0<k\neq l\in\mathbb{Z}}\big(\frac{(2k-1)(2l-1)}{4}(-\mu)^{l-1}\lambda^{k-1}t_{\frac{2l-1}{2}}
t_{\frac{2k-1}{2}}\\
&+(2l-1)(-\mu)^{l-1}(-\lambda)^{-k}t_{\frac{2l-1}{2}}\frac{\pa}{\pa t_{\frac{2k-1}{2}}}+(2k-1)\mu^{-l}\lambda^{k-1}t_{\frac{2k-1}{2}}\frac{\pa}{\pa t_{\frac{2l-1}{2}}}\\
&+4 \mu^{-l} (-\lambda)^{-k}\frac{\pa}{\pa t_{\frac{2l-1}{2}}}\frac{\pa}{\pa t_{\frac{2k-1}{2}}}\big)\Big)
\end{align*}
Based upon this, one can obtain the result for $\pa^*_{\lambda\mu}\tau_0(t)$ by $\langle0|e^{H(t)}\phi(\mu)\phi(-\lambda)g|0\rangle=X_c(\lambda,\mu)\tau(\mathbf{t})\biggr|_{t_\mathrm{odd}=0}$ with $\tau(\mathbf{t})=\sigma(g|0\rangle)$.

As for $\pa^*_{\lambda\mu}\tau_{(2j-1,1)}(t)$, one needs to compute $\left[\frac{\pa}{\pa t_{\frac{2j-1}{2}}}\frac{\pa}{\pa t_{\frac{1}{2}}},X_c(\lambda,\mu)\right]$. Since $\frac{\pa}{\pa t_{\frac{2j-1}{2}}}\frac{\pa}{\pa t_{\frac{1}{2}}}$ commute with $\xi(t,\mu)$ and $\xi(\tilde{\pa},\mu^{-1})$, the key is to know
$\left[\frac{\pa}{\pa t_{\frac{2j-1}{2}}}\frac{\pa}{\pa t_{\frac{1}{2}}},A(t,\mu)A(t,-\lambda)\right]$. In fact, this can be seen from
\begin{align}
\frac{\pa}{\pa t_{\frac{2j-1}{2}}}A(t,\mu)+A(t,\mu)\frac{\pa}{\pa t_{\frac{2j-1}{2}}}=\frac{2j-1}{2}(-\mu)^{j-1}.
\end{align}
Further by $[AB,CD]=A[B,C]_+D-AC[B,D]_++[A,C]_+DB-C[A,D]_+B$ with $[A,B]_+=AB+BA$, one can obtain
\begin{align*}
\left[\frac{\pa}{\pa t_{\frac{2j-1}{2}}}\frac{\pa}{\pa t_{\frac{1}{2}}},A(t,\mu)A(t,-\lambda)\right]&=\frac{1}{2}A(t,-\lambda)\frac{\pa}{\pa t_{\frac{2j-1}{2}}}-\frac{1}{2}A(t,\mu)\frac{\pa}{\pa t_{\frac{2j-1}{2}}}
+\frac{1}{2}(\frac{2j-1}{2})\lambda^{j-1}\\&-\frac{1}{2}(\frac{2j-1}{2})(-\mu)^{j-1}
+
\frac{2j-1}{2}(-\mu)^{j-1}A(t,-\lambda)\frac{\pa}{\pa t_{\frac{1}{2}}}\\
&-\frac{2j-1}{2}\lambda^{j-1}A(t,\mu)\frac{\pa}{\pa t_{\frac{1}{2}}}.
\end{align*}
Therefore
\begin{align*}
\frac{\pa}{\pa t_{\frac{2j-1}{2}}}\frac{\pa}{\pa t_{\frac{1}{2}}}X_c(\lambda,\mu)
=&X_c(\lambda,\mu)\frac{\pa}{\pa t_{\frac{2j-1}{2}}}\frac{\pa}{\pa t_{\frac{1}{2}}}+\frac{\mu+\lambda}{\mu-\lambda}e^{\xi(t,\mu)-\xi(t,\lambda)}
e^{\xi(\tilde{\pa},\mu^{-1})-\xi(\tilde{\pa},\lambda^{-1})}\\
&\times\left((2j-1)\sum_{k>0}\Big((-\mu)^{j-1})(-\lambda)^{-k}-\lambda^{j-1}\mu^{-k}\Big)\frac{\pa}{\pa t_{\frac{2j-1}{2}}}\frac{\pa}{\pa t_{\frac{1}{2}}}\right.\\
&\left.+\sum_{k>0}\Big((-\lambda)^{-k}-\mu^{-k}\Big)\frac{\pa}{\pa t_{\frac{2k-1}{2}}}\frac{\pa}{\pa t_{\frac{2j-1}{2}}}+\frac{2j-1}{4}\Big(\lambda^{j-1}-(-\mu)^{j-1}\Big)\right)\\
&+\text{terms depending on $t_{\rm odd}$}.
\end{align*}
So the result of $\pa^*_{\lambda\mu}\tau_{(2j-1,1)}(t)$ can be derived by (\ref{taualpha}).

\end{proof}
\noindent{\bf Remark:}
Notice that the Adler-Shiota-van Moerbeke-Dickey formula in the CKP hierarchy can be obtained by
\begin{align*}
\pa^*_{\lambda\mu}w(t,z)=w(t,z)\left(e^{\xi(\tilde{\pa},z^{-1})}
\frac{\frac{1}{2}\pa^*_{\lambda\mu}\tau_0(t) +2 \sum_{\nu\in 1+
2\mathbb{Z},\, \nu>1} \pa^*_{\lambda\mu}\tau_{(\nu,1)}(t)z^{-\frac{\nu +1}{2}}}{\frac{1}{2}\tau_0(t) +2 \sum_{\nu\in 1+
2\mathbb{Z},\, \nu>1} \tau_{(\nu,1)}(t)z^{-\frac{\nu +1}{2}}}-\frac{\pa^*_{\lambda\mu}\tau_0(t)}{\tau_0(t)}\right).
\end{align*}
\section{Conclusions and Discussions}
By using free Bosons, we have obtained many important results of the CKP hierarchy. Firstly, the bilinear equations and the Lax equations of the modified CKP hierarchy are given in Proposition \ref{propmckp}. Then in Proposition \ref{propsolution}, the solutions of the constrained CKP hierarchy are derived in terms of free Bosons. Next the squared eigenfunction symmetry is showed to be the infinitesimal generator of the binary Darboux transformation in Proposition \ref{onedtsesy}. Based upon this, the CKP Darboux transformation is expressed in terms of free Bosons, which is given in Proposition \ref{proptau01}, Proposition \ref{propg1tz} and Corollary \ref{corwavetrans}. Since the generator of the additional symmetries is the squared eigenfunction symmetry, the CKP additional symmetries can be investigated by the CKP Darboux transformation. Finally we give the actions of the CKP additional symmetries on the CKP tau functions constructed by free Bosons in Proposition \ref{propaddontau}.

The results here are hoped to be helpful for better understanding the essential properties of the CKP hierarchy. In this paper, there is a conjecture (see Conjecture \ref{conjecture}) left. Though it does not affect the whole result of this paper, we are very interested in solving it. We believe the suitable way is to find the evolution equations of the wave function $w_\alpha(t,z)$ firstly. To do this, something else involving free Bosons must be needed. Therefore, we will concentrate on studying this question in future.


\begin{thebibliography}{99}
\bibitem{ASvM}M. Adler, T. Shiota and P. van Moerbeke, A Lax representation for the vertex operator and the central extension, Comm. Math. Phys. 171 (1995) 547-588.

\bibitem{Anguelova2017}I. I. Anguelova, The second bosonization of the CKP hierarchy, J. Math. Phys. 58 (2017) 071707.

\bibitem {Anguelova2018}I. I. Anguelova, The two bosonizations of the CKP hierarchy: overview and character identities, in {\it Representations of Lie algebras, quantum groups and related topics}, 1-34, Contemp. Math., 713, Amer. Math. Soc., Providence, RI, 2018 (arXiv:1708.04992).

\bibitem{Aratyn}H. Aratyn, E. Nissimov and  S. Pacheva, Method of squared eigenfunction potentials in integrable hierarchies of KP type, Comm. Math. Phys. 193 (1998) 493-525.

\bibitem{Wu2013}L. Chang and C. Z. Wu, Tau function of the CKP hierarchy and nonlinearizable Virasoro symmetries, Nonlinearity 26 (2013) 2577-2596.


\bibitem{chang}X. K. Chang, X. B. Hu and S. H. Li, Degasperis-Procesi peakon dynamical system and finite Toda lattice of CKP type, Nonlinearity 31 (2018) 4746-4775.

\bibitem{Chau1992}L. L. Chau, J. C. Shaw and H. C. Yen, Solving the KP hierarchy by gauge transformations, Comm. Math. Phys. 149 (1992) 263-278.

\bibitem{CLL}H. H. Chen, Y. C. Lee and J. E. Lin,
On a new hierarchy of symmetries for the Kadomtsev-Petviashvili equation,
Phys. D 9 (1983) 439-445.

\bibitem{Cheng2014}J. P. Cheng and J. S. He, The ``ghost" symmetry in the CKP hierarchy, J. Geom. Phys. 80 (2014) 49-57.

\bibitem{Cheng2021}J. P. Cheng and T. Milanov, The extended D-Toda hierarchy, Selecta Math. (N.S.) 27 (2021) 24,  85 pages.

\bibitem{Jimbo}M. Jimbo and T. Miwa, Solitons and infinite dimensional Lie algebras, Publ. RIMS, Kyoto Univ.19 (1983) 943-1001.

\bibitem{Date19814} E. Date, M. Jimbo, M. Kashiwara and T. Miwa, Transformation groups for soliton equations. VI. KP hierarchies of orthogonal and symplectic type, J. Phys. Soc. Japan. 50(1981) 3813-3818.

\bibitem{DJKM}E. Date, M. Kashiwara, M. Jimbo and T. Miwa, Transformation groups for soliton equations, in {\it Nonlinear integrable systems¡ªclassical theory and quantum theory} (Kyoto, 1981), 39-119, World Sci. Publishing, Singapore, 1983.

\bibitem{Dickey1995}L. A. Dickey, On additional symmetries of the KP hierarchy and Sato's B\"acklund transformation, Comm. Math. Phys. 167 (1995) 227-233.
\bibitem{feng}B. F. Feng, K. Maruno and Y. Ohta, The Degasperis-Procesi equation, its short wave model and the CKP hierarchy, Ann. Math. Sci. Appl. 2 (2017) 285-316.

\bibitem{FF}A. S. Fokas and B. Fuchssteiner, On the structure of symplectic operators and hereditary symmetries, Lett. Nuovo Cimento 28 (1980) 299-303.
\bibitem{fuwei}W. Fu and Frank W. Nijhof, On non-autonomous differential-difference AKP, BKP and CKP equations, Proc. A. 477 (2021) 1-20.

\bibitem{Geng2019}L. M. Geng, H. Z. Chen, N. Li and J. P. Cheng, Bilinear identities and squared eigenfunction symmetries of the BCr-KP hierarchy, J. Nonlinear Math. Phys. 26 (2019) 404-419.

\bibitem{He2006}J. S. He, Y. Cheng and R. A. Roemer, Solving bi-directional soliton equations in the KP hierarchy by gauge transformation, J. High Energy Phys. 3 (2006) 103.


\bibitem{He2007}J. S. He, K. L. Tian, A. Foerster and W. X.  Ma, Additional symmetries and string equation of the CKP hierarchy, Lett. Math. Phys. 81 (2007) 119-134.

\bibitem{Hewz2007}J. S. He, Z. W. Wu and Y. Cheng, Gauge transformations for the constrained CKP and BKP hierarchies, J. Math. Phys. 48 (2007) 113519.

\bibitem{Krichever}I. Krichever and A. Zabrodin, Kadomtsev-Petviashvili turning points and CKP hierarchy, arXiv:2012.04482.

\bibitem{Li2020}C. Z. Li and R. L. Ge, Symmetries of supersymmetric CKP hierarchy and its reduction, J. Geom. Phys. 158 (2020) 103894.

\bibitem{licx}C. X. Li and S. H. Li, The Cauchy two-matrix model, C-Toda lattice and CKP hierarchy, J. Nonlinear Sci. 29 (2019) 3-27.

\bibitem{Liu2017}Q. F. Liu and C. Z. Li, Quantum torus symmetries of the CKP and multi-component CKP hierarchies,
J. Math. Phys. 58 (2017) 113505.
\bibitem{Liusq}S. Q. Liu, C. Z. Wu and Y. J. Zhang, On the Drinfeld-Sokolov hierarchies of D type,  Int. Math. Res. Not. IMRN 2011 (2011) 1952¨C1996.

\bibitem{Loris1999}I. Loris, On reduced CKP equations, Inverse Problems 15 (1999) 1099-1109.
\bibitem{Loris2001}I. Loris, Dimensional reductions of BKP and CKP hierarchies, J. Phys. A 34 (2001) 3447-3459.

\bibitem{Matveev1991} V. B. Matveev and M. A. Salle, Darboux transformations and solitons, Springer-Verlag, Berlin, 1991.

\bibitem{Miwa2000}T. Miwa, M.Jimbo and E. Date, Solitons: Differential equations, symmetries and infinite-dimensional algebras. Translated from the 1993 Japanese original by Miles Reid, Cambridge Tracts in Mathematics, 135, Cambridge University Press, Cambridge, 2000.

\bibitem{Nimmo1995}J. J. Nimmo, Darboux transformation from reduction of the KP hierarchy, in {\it Nonlinear evolution equation and dynamical systems}, ed. by V.G. Makhankov et al, World Scientific, Singapore, 1995, 168-177.

\bibitem{Oevelpa1993}W. Oevel, Darboux theorems and Wronskian formulas for integrable system I: constrained KP flows, Physica A 195 (1993) 533-576.

\bibitem{Oevelrmp1993}W. Oevel and C. Rogers, Gauge transformations and reciprocal links in 2+1 dimensions, Rev. Math. Phys. 5 (1993) 299-330.

\bibitem{Oevel1998} W. Oevel and S. Carillo, Squared eigenfunction symmetries for soliton equations. I, II, J. Math. Anal. Appl. 217 (1998) 161-178, 179-199.

\bibitem{Shaw1997}J. C. Shaw and M. H. Tu, Miura and auto-B\"acklund transformations for the cKP and cmKP hierarchies, J. Math. Phys. 38 (1997) 5756-5773.
\bibitem{Takasaki}T. Ikeda and K. Takasaki, Toroidal Lie algebras and Bogoyavlensky's (2+1)-dimensional equation,
Int. Math. Res. Not. IMRN 2001 (2001)  329¨C369.
\bibitem{tian}K. L. Tian, J. S. He, J. P. Cheng and Y. Cheng, Additional symmetries of constrained CKP and BKP hierarchies, Sci. China Math. 54 (2011) 257-268.
\bibitem{uenotakasaki}
K. Ueno and K. Takasaki, Toda lattice hierarchy, in {\it Group representations and systems of differential equations} (Tokyo, 1982), 1-95, Adv. Stud. Pure Math., 4, North-Holland, Amsterdam, 1984.
\bibitem{van2012}J. W. van de Leur, A. Y. Orlov and T. Shiota, CKP hierarchy, bosonic tau function and bosonization formulae, SIGMA Symmetry Integrability Geom. Methods Appl. 8 (2012) 28.

\bibitem{Willox}R. Willox and J. Satsuma, Sato theory and transformation groups. A unified approach to integrable systems, in {\it Discrete integrable systems}, 17-55, Lecture Notes in Phys., 644, Springer, Berlin, 2004.

\bibitem{Willox1998}R. Willox, T. Tokihiro, I. Loris and J. Satsuma, The fermionic approach to Darboux transformations, Inverse Problems 14 (1998) 745-762.




\bibitem{Yangc2021}Y. Yang and J. P. Cheng, Bilinear equations in Darboux transformations by Boson-Fermion correspondence, arXiv:2101.02520.


\bibitem{OS}A. Yu. Orlov and E. I. Schulman, Additional symmetries for integrable equations and conformal algebra representation, Lett. Math. Phys. 12 (1986) 171-179.

\bibitem{Zuo2014}D. F. Zuo, L. Zhang and Q. Chen, On the sub-KP hierarchy and its constraints, revisited,  Rev. Math. Phys. 26 (2014) 1450019.
























\end{thebibliography}
\end{document}